\newtheorem{theorem}{Theorem}
\newtheorem{lemma}[theorem]{Lemma}
\theoremstyle{definition}
\newtheorem{definition}[theorem]{Definition}
\newcommand{\tr}{\mathrm{Tr}}
\newcommand{\C}{\mathcal{C}}
\newcommand{\HarvardPhysics}{Department of Physics, Harvard University, Cambridge, MA 02138, USA}
\newcommand{\CTP}{Center for Theoretical Physics, Massachusetts Institute of Technology, Cambridge, MA 02139, USA}
\begin{document}

\title{Enhancing Detection of Topological Order by Local Error Correction}

\author{Iris Cong\vspace{.35mm}}
\thanks{These authors contributed equally to this work.}
\affiliation{\vspace{0.5mm}
\HarvardPhysics}

\author{Nishad Maskara}
\thanks{These authors contributed equally to this work.}
\affiliation{\vspace{0.5mm}
\HarvardPhysics}

\author{Minh C. Tran}
\affiliation{\vspace{0.5mm}
\HarvardPhysics}
\affiliation{\vspace{0.5mm}
\CTP}

\author{Hannes Pichler}
\affiliation{\vspace{0.5mm}
Institute for Theoretical Physics, University of Innsbruck, 6020 Innsbruck, Austria}
\affiliation{Institute for Quantum Optics and Quantum Information of the Austrian Academy of Sciences, 6020 Innsbruck, Austria\vspace{1.5mm}
}

\author{Giulia Semeghini}
\affiliation{\vspace{0.5mm}
\HarvardPhysics}

\author{Susanne F. Yelin}
\affiliation{\vspace{0.5mm}
\HarvardPhysics}

\author{Soonwon Choi}
\affiliation{\vspace{0.5mm}
\CTP}

\author{Mikhail D. Lukin\vspace{0.75mm}}
\affiliation{\vspace{0.5mm}
\HarvardPhysics}

\date{\today
}

\preprint{MIT-CTP/5462}
\begin{abstract}
The exploration of topologically-ordered states of matter is a long-standing goal at the interface of several subfields of the physical sciences. Such states feature intriguing physical properties such as long-range entanglement, emergent gauge fields and non-local correlations, and can aid in realization of  scalable fault-tolerant quantum computation. However, these same features also make creation, detection, and characterization of topologically-ordered states particularly challenging. Motivated by recent experimental demonstrations, we introduce a new paradigm for quantifying topological states---locally error-corrected decoration (LED)---by combining methods of  error correction with ideas of renormalization-group flow. Our approach allows for efficient and robust identification of topological order, and is applicable in the presence of incoherent noise sources, making it particularly  suitable for realistic experiments. We demonstrate the power of LED using numerical simulations of the toric code under a variety of perturbations. We subsequently apply it to an experimental realization, providing new insights into a quantum spin liquid created on a Rydberg-atom  simulator.  Finally, we extend LED to  generic topological phases, including those with non-abelian order. 
\end{abstract}

\maketitle

Topological ordering is an exotic phenomenon which can occur when quantum fluctuations and local constraints stabilize a state with long-range entanglement~\cite{wen_colloquium_2017}.
With their non-local correlations, topologically-ordered states feature many remarkable properties and can be used for protecting quantum information non-locally~\cite{wen_colloquium_2017,nayak_non-abelian_2008,terhal_quantum_2015}. 
Yet, because these states 
appear to be liquid-like at short length-scales~\cite{sachdev_kagome-_1992}, they cannot be identified or characterized using any local order parameters. 
Instead, the canonical  approach to discern topological order is to measure operators supported on large closed loops, the Wilson loops~\cite{hastings_quasiadiabatic_2005,wilson_confinement_1974,wen_colloquium_2017,haah_invariant_2016}. 
However, such operators are challenging to identify or measure: while they have simple forms in certain fixed-point models, this is generally not the case for states realized experimentally in the presence of noise or other perturbations.  
In these cases, the expectation values of the simple or `bare' Wilson loop operators described above decay exponentially with the loop's perimeter, which hinders the experimental certification of topological order.

\begin{figure*}[t]
\includegraphics[width=0.8\textwidth]{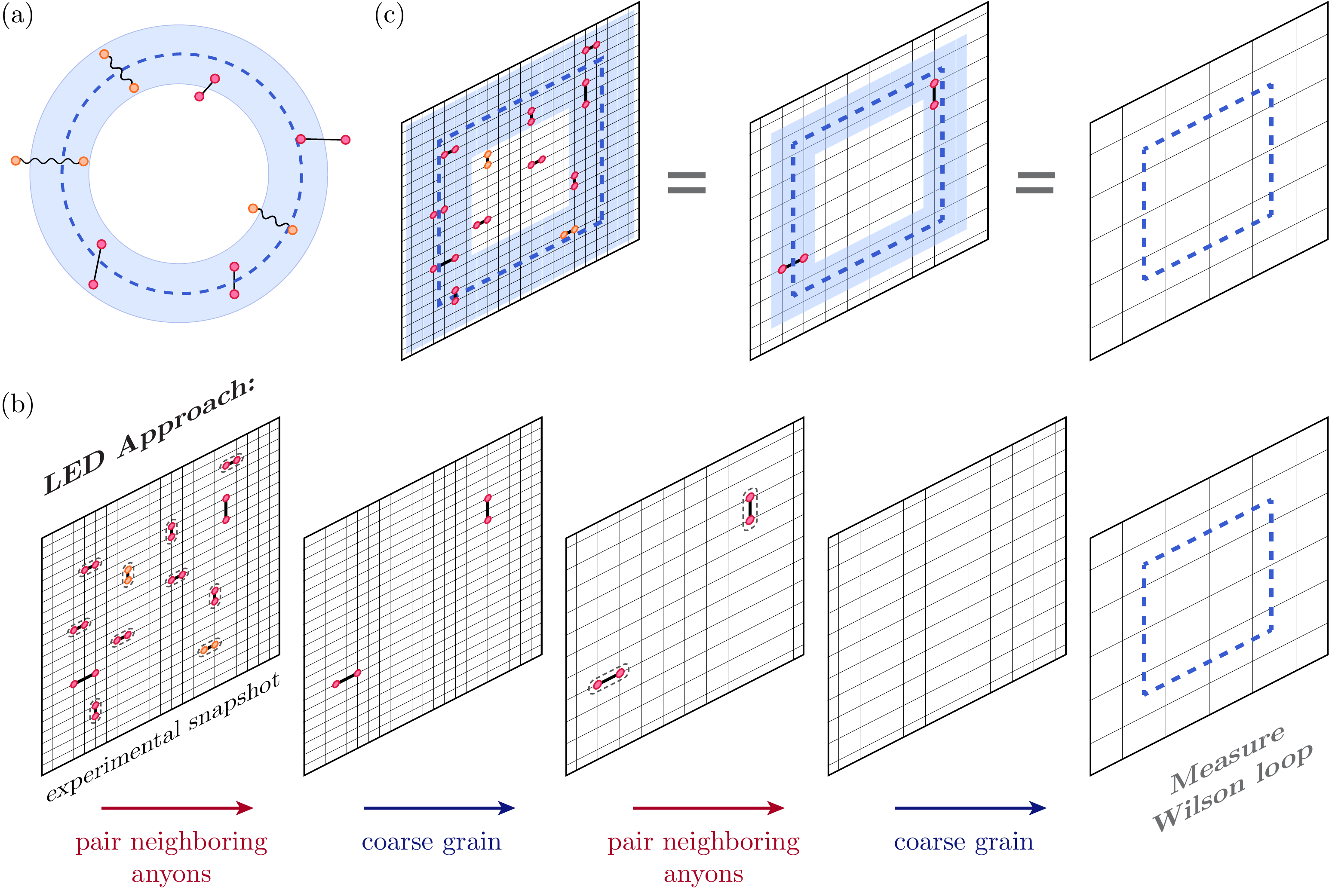}
\caption{Detecting topological phases via snapshot-based LED.
(a) In the absence of perturbations, a topologically-ordered state with zero correlation length such as Kitaev's toric code state~\cite{kitaev_fault-tolerant_2003}
is characterized by $+1$ expectation values of `bare' Wilson loop operators, which are typically tensor products of single-site operators (dotted blue loop). In realistic systems, however, coherent perturbations give rise to virtual anyon pairs (red dots/straight lines), and incoherent errors introduce physical anyon pairs (orange dots/wavy lines); this causes the expectation value of bare Wilson operators to decay exponentially with the loop's perimeter. To account for these local fluctuations, one can measure `fattened' Wilson operators supported on an annulus (blue); the LED loops constitute one realization of this.
(b) LED method to measure decorated Wilson loop observables for $\mathbb{Z}_2$ topological order in a system where qubits live on the links of a square lattice, and stabilizers are associated with vertices. Given an experimental snapshot of all qubits in the $Z$ or $X$ basis, one can obtain values for all stabilizer operators in that basis, thereby identifying the locations of all $e$ or $m$ anyons, respectively.  In the first step, neighboring anyons are paired using a local decoder (dashed pairings), and each pair is removed by flipping the value(s) of qubit(s) lying on a path of minimal length connecting the two anyons; subsequently, the lattice is coarse-grained so that only a fraction of the original qubits remain. These two steps are iterated $n$ times (here, $n=2$), after which a bare Wilson loop is evaluated on the final, coarse-grained state.
(c) The final, bare Wilson loop operator evaluated on the final state is equivalent to decorated Wilson loop operators evaluated at earlier iterations (see Methods).}
\label{fig:fig1}
\end{figure*}

To address these challenges, several methods have been developed to construct `fattened' Wilson loops which do not decay with loop size. 
These include a systematic method utilizing quasi-adiabatic connections to the fixed-point models~\cite{hastings_quasiadiabatic_2005}, as well as variational and tensor-network-based approaches~\cite{bridgeman_detecting_2016,iqbal_entanglement_2021,duivenvoorden_entanglement_2017,jamadagni_learning_2022}. Nevertheless, these methods are challenging to apply in realistic experiments, especially in the presence of incoherent noise (e.g., spontaneous emission).  
Other signatures,
such as topological entanglement entropy~\cite{kitaev_topological_2006,levin_detecting_2006}
are likewise difficult to measure in large systems.

Motivated by these considerations, we introduce a novel and powerful framework, {\it locally error-corrected decoration (LED)}, for studying and characterizing topologically-ordered states. By leveraging the error-correcting properties of topological phases, LED provides a systematic method to construct and efficiently measure `decorated' Wilson loop operators,  a variant of the fattened loop operators. This enables the identification and characterization of topological order at large length-scales in the presence of both coherent perturbations and incoherent noise, which are particularly challenging or impossible using conventional methods. 

In its most general form, LED corresponds to a class of hierarchically-structured  quantum circuits which resemble the classification of quantum phases using RG flow~\cite{schuch_classifying_2011,chen_classification_2011}. 
Yet, for a wide range of experiments
where the prepared state is known to approximate a fixed-point state with zero correlation length~\footnote{See Supplementary Information.}, there is an efficient `snapshot-based' realization of LED using only 
classical post-processing of experimental measurements in a few fixed bases.
In this work, we primarily focus on snapshot-based LED due to current experimental limitations and the hardness of simulating 2D quantum circuits.

\vspace{-1.5mm}
\section{LED Approach}
\vspace{-1.5mm}
The key idea of LED can be understood by considering Kitaev's toric code model, a canonical example of topological order. 
Specifically, we consider qubits localized on the edges of a square lattice. The ideal, fixed-point Hamiltonian is defined as~\cite{kitaev_fault-tolerant_2003}:
\begin{align}
\label{eq:tc-ham}
    H_\textrm{TC} &= -J \sum_v A_v -J \sum_p B_p
\end{align}
where $A_v = \prod_{i \in \mathrm{adj}(v)} X_i$, $B_p = \prod_{i \in \mathrm{adj}(p)} Z_i$, and $\mathrm{adj}(v)$ (resp., $\mathrm{adj}(p)$) denote the set of edges touching a given vertex $v$ (plaquette $p$) of the lattice. The ground state space, given by the simultaneous $+1$ eigenspace of all {\it stabilizer operators} $\{ A_v, B_p \}$, forms a quantum error-correcting code: all local operators either act trivially on ground states or couple them to excited states~\cite{kitaev_fault-tolerant_2003}. By measuring stabilizers, one can detect the presence of excitations
and apply a recovery procedure to return the system back to this ground state space.

In this model, contractible Wilson loops can be constructed by multiplying stabilizers, so their expectation values in any ground state of $H_\textrm{TC}$ are $+1$, independent of loop size.
However, in realistic situations, the prepared state differs from the fixed-point state by local fluctuations such as coherent perturbations and incoherent errors (Figure~\ref{fig:fig1}a).
This causes bare Wilson loops to decay exponentially with the number of locations where a fluctuation can intersect the loop (i.e., its perimeter). 

The snapshot-based LED approach 
begins with a measurement of all qubits in the same (Pauli-$Z$ or Pauli-$X$) basis. For each measurement snapshot, one can calculate the stabilizer and Wilson loop values.
Local fluctuations appear as stabilizer violations, which are identified with anyonic excitations~\cite{kitaev_fault-tolerant_2003} (Figure~\ref{fig:fig1}b). 
A local decoder partially removes such fluctuations by flipping measured qubits using only nearby stabilizer values. 
The simplest such local decoder can remove single-qubit errors, by flipping a qubit if and only if both adjacent vertices (resp., plaquettes) are occupied by an $m$ ($e$-anyon). 
However, it cannot remove higher-weight errors, which flip two or more adjacent qubits.
Subsequently, the lattice is coarse-grained, 
which can also be done efficiently on measurement snapshots (see methods). 
Together, the anyon-pairing and coarse-graining steps are repeated for $n$ layers. 
Crucially, the weight of uncorrected errors is reduced in each layer, so that all local errors eventually become single-qubit errors which the decoder can correct; this mimics a real-space RG flow towards the fluctuation-free fixed-point state (see Methods). Finally, a bare Wilson loop is measured on the final, corrected and coarse-grained state (Figure~\ref{fig:fig1}b).

This bare operator measured on the final state is equivalent to a decorated Wilson loop operator measured on the original state (Figure~\ref{fig:fig1}).
Moreover, it is determined solely by the fixed-point state and is independent of the specific fluctuations in the system; this crucially differentiates LED from prior approaches to construct fattened loop operators~\cite{hastings_quasiadiabatic_2005,bridgeman_detecting_2016,iqbal_entanglement_2021}.
Notice that 
all steps in snapshot-based LED can be performed in post-processing (see Methods), making it uniquely suited for integration into experimental measurement procedures. More general LED operators can be constructed through the quantum circuit formulation; one example is presented in the Supplementary Information.

The hierarchical LED procedure is also inspired by the quantum convolutional neural network (QCNN) approach to phase classification, and the decorated Wilson loop operators resemble the ``multiscale string order parameters'' studied in Ref.~\cite{cong_quantum_2019}. 
However, in this context, the LED framework is more general: 
one can construct LED Wilson operators of diameter $L$  with any desired correction distance $d \ll L$ (Figure~\ref{fig:fig2}a) by choosing any local decoder which pairs anyons up to distance $d$ 
(see Methods)~\footnote{Note that technically, the correction distance $d$ is related to the annulus thickness $cd$ by a decoder-dependent constant $c > 1$, since the range of information propagation is generally larger than the range of allowed anyon-pairings (see Methods).}. 
The construction of Figure~\ref{fig:fig1}b with alternating local-decoding and coarse-graining layers is a particularly efficient way to construct local decoders and LED loops with longer-range (e.g., $d,L \propto 2^n$).

We emphasize that the locality of our procedure ensures that only topologically-ordered states can flow to the fixed-point state. Thus, LED gives rise to a new sufficient condition or {\it witness} for topological order. This distinguishes LED from general decoders, which do not typically respect locality and hence cannot be used to certify topological order.

\section{Numerical Detection of Topological Order with Coherent Perturbations}

To demonstrate the applicability of LED for coherent local perturbations to $H_\textrm{TC}$, we consider a family of states
\begin{align}
\label{eq:tc-peps}
    \ket{\psi(g_X, g_Z)} &= \frac{1}{\mathcal{N}} e^{g_X \sum_i X_i + g_Z \sum_i Z_i} \ket{\psi_\textrm{TC}},
\end{align}
generated by imaginary-time evolution of a toric code ground state $\ket{\psi_\textrm{TC}}$~\cite{chen_local_2010,haegeman_gauging_2015,zhu_gapless_2019}.
As each operator $Z_i$ (resp., $X_i$) creates a pair of $m$ anyons ($e$ anyons), $\ket{\psi(g_X, g_Z)}$ contains virtual anyon fluctuations on top of  $|\psi_\textrm{TC}\rangle$. 
In the special case where $g_X = 0$, topological order is known to survive for perturbations $g_Z \leq g_c = 0.220343$, beyond which the $m$-anyons condense, driving a second-order phase transition into the $Z$-paramagnet state~\cite{castelnovo_quantum_2008}. More generally, $\ket{\psi(g_X\neq 0, g_Z \neq 0)}$ is also topologically-ordered for small $g_X$ and $g_Z$, but the transitions to paramagnetic phases can occur at points which differ from $g_c$.

\begin{figure*}[t]
\includegraphics[width=0.98\textwidth]{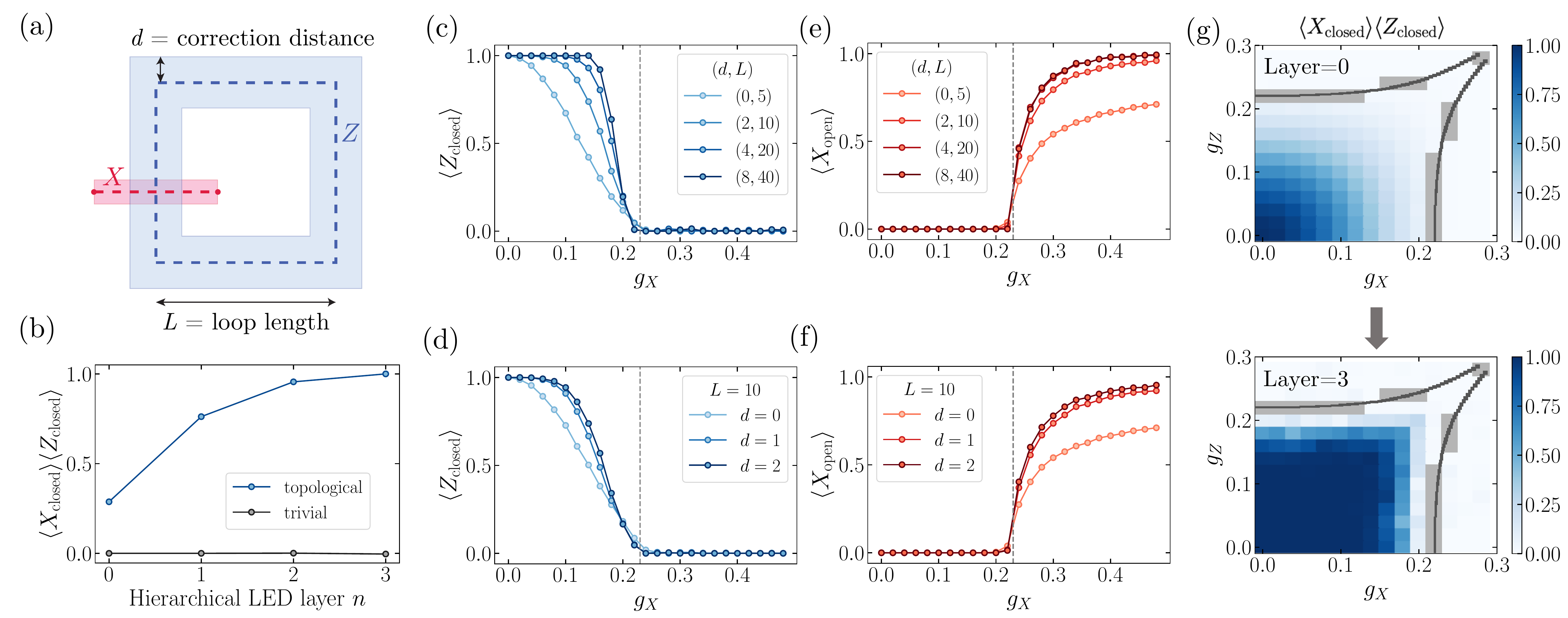}
\caption{Numerical demonstration with coherently perturbed toric code states.  (a) In a general construction of our LED Wilson loop operator, we use a local decoder which pairs anyons
within a region of radius $d$ (blue annulus). Conjugate LED open string operators (red stripe) anti-commute with Wilson loops, and hence must vanish in the topological phase.    (b) Order parameter $\langle X_{\textrm{loop}} \rangle \langle Z_{\textrm{loop}} \rangle$ for a trivial state ($g_Z=0.0, g_X=0.26$) and a topological state ($g_Z = 0.12, g_X=0.12$), upon varying $n$, using a distance-four patch decoder and coarse-graining blocksize two respectively (see Methods).  (c) Output at different $n$ along the $g_Z=0.14$ line of the phase diagram. Gray dotted line is conjectured phase transition region. 
(d) Expectation values of generic LED Wilson loops with the same diameter $L$, using the pairing decoder ($d=1$) and distance-four patch decoder ($d=2$) without coarse-graining.
(e,f) Corresponding expectation values of bare and decorated open string operators.
(g) Order parameter values constructed from bare Wilson loops ($n=0$) and LED Wilson loops ($n=3$), using the same LED procedure as (c,e), across varying values of $(g_X,g_Z)$.
Dark gray regions are numerical estimates for the phase boundary between topological and trivial (Methods).
Light gray regions correspond to locations where sampling is expensive due to large correlation length.
}
\label{fig:fig2}
\end{figure*}

In testing LED, we numerically simulate projective measurements of $\ket{\psi(g_X, g_Z)}$ and use them as input ``experimental snapshots'' in Figure~\ref{fig:fig1}b (see Methods).
Figure~\ref{fig:fig2}b shows the value of the LED order parameter for a trivial and a topological state with $g_Z = 0.14$, when $n$ is varied (and $d, L \propto 2^n$). Clearly, the order parameter stays at $0$ for the trivial state, but increases from a small, finite value to one for the topological state as $n$ 
is increased.
Similar behavior is also observed throughout a one-dimensional parameter space in Figures~\ref{fig:fig2}c and \ref{fig:fig2}d, whenever the correction distance $d$ is increased, while keeping $d \ll L$ to prevent overcorrection (see Methods). Importantly, amplification occurs only if the input state is topological, and the order parameter approaches $0$ for all trivial states.

Another important set of observables for characterizing topological order are $X$ and $Z$ open string operators, which detect the transition from the topological phase to the trivial, paramagnet phase.
Because LED Wilson $Z$-loop operators (resp., $X$-loop operators) are linear combinations of $Z$ ($X$) closed loops supported on an annulus, they anti-commute with conjugate $X$ ($Z$) open strings connecting the interior and exterior of the annulus.
As such, the expectation value of long, open strings must flow to zero in the topological phase, where closed-loop LED operators flow to unity with increasing $d$~\footnote{This holds for any LED open string.}.
The topological-to-trivial phase transition occurs when certain long, open $X$ or $Z$ strings acquire non-zero expectation value, due to the condensation of $m$ or $e$ anyons, respectively.
Indeed, deep in the paramagnetic phase the state $\lim_{g_x \rightarrow \infty} \ket{(g_x, g_z)}$ is polarized along the $X$ direction, and $X$ open strings become unity.
However, for generic trivial states, open strings also decay exponentially with length, due to local fluctuations of the opposite type; nevertheless, LED can still amplify their expectation values by removing the effect of local fluctuations.
This behavior is demonstrated in our simulations: in Figure~\ref{fig:fig2}e,f, open string expectation values stay at $0$ in the topological phase, but are amplified and saturate to a non-zero value in the trivial (paramagnetic) phase. Because LED amplifies the contrast between trivial states and a large class of topological states,
the topological order can be detected using with lower sample complexity---that is, by using substantially fewer experimental repetitions~\cite{cong_quantum_2019,haah_sample-optimal_2017} 
(see Supplementary Information).

Let us note that the boundary dividing the states whose LED operators approach zero and one does not necessarily correspond to the topological phase boundary: in general, it depends on the choice of decoder and coarse-graining length-scale. For instance, this is observed in Figure~\ref{fig:fig2}g, where closed loops are nearly one after $n=3$ layers for a large region within, but not fully encompassing, the topological phase. Hence, LED is not always a necessary condition for topological order.

\section{Effect of Incoherent Errors}

We next demonstrate the application of LED in the presence of incoherent local noise such as spontaneous emission or dephasing, which commonly occur in experiments. 
Because local decoders can recover topologically encoded information in the presence of small, local error channels~\cite{dennis_topological_2002,kitaev_topological_2006} 
it is reasonable to ask whether mixed states prepared in these systems exhibit topological ordering.

\begin{figure*}[t]
\includegraphics[width=0.75\textwidth]{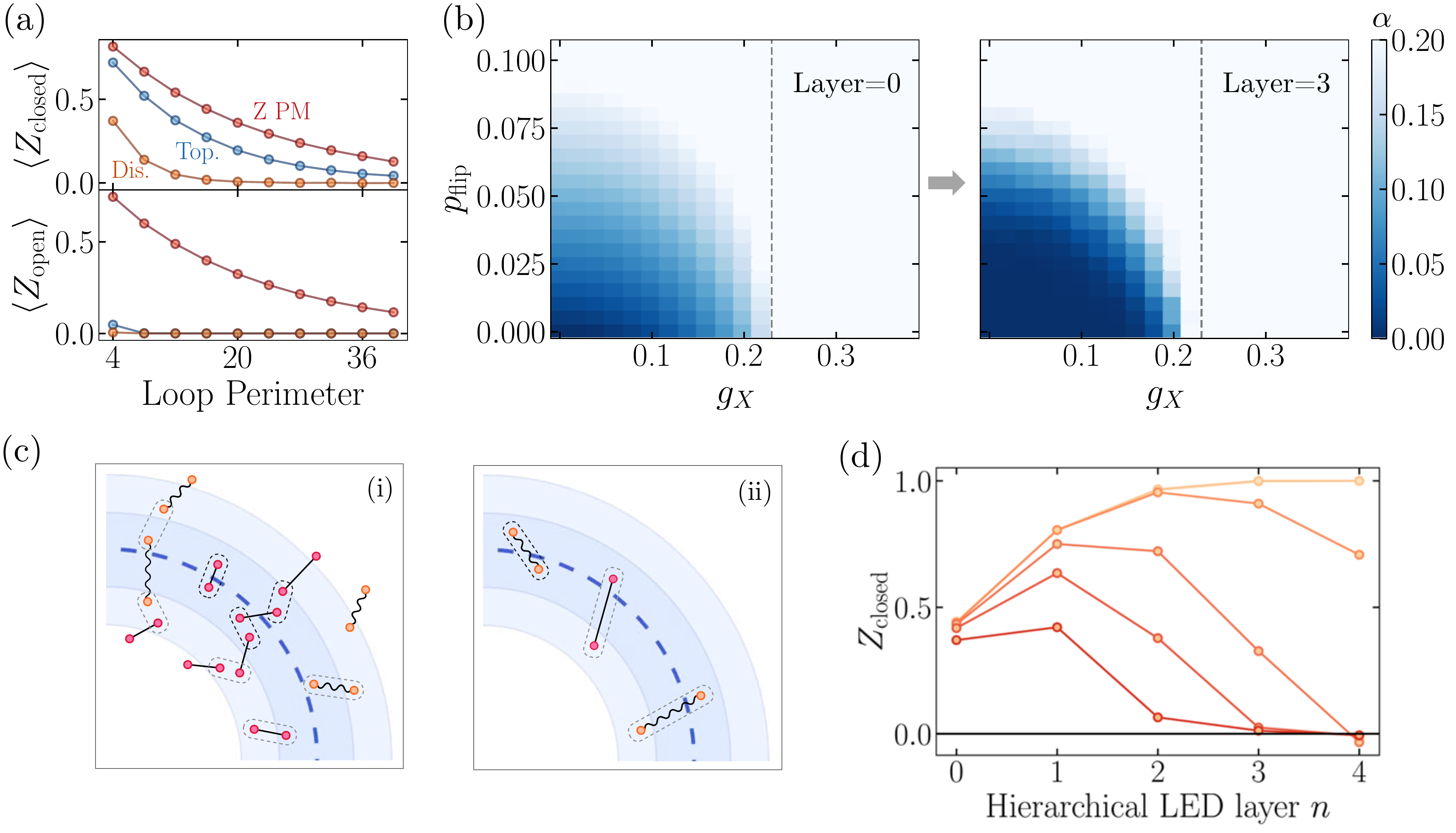}
\caption{Application to mixed states. (a) Without error correction, generic points in the topological and trivial, disordered phase ($g_Z=0.12$, $g_X=0.18$, $p_{\mathrm{flip}}=0.0$ and $g_Z=0.06$, $g_X=0.0$, $p_{\mathrm{flip}}=0.11$ resp. shown in the plot) appear very similar qualitatively, as closed loops decay exponentially with loop perimeter in both cases, while open strings remain close to zero (see Methods). In contrast, in the trivial, paramagnet phase ($g_Z=0.32, g_X=0.2, p_{\mathrm{flip}}=0$), open strings decay with the same perimeter-law as closed loops.
(b) $g_Z=0.14$ slice of mixed state phase diagram, containing topological, disordered, and 
$X$-paramagnetic phases. These phases are  associated with  fixed-point states $g_x$\;$=$\;$g_z$\;$=$\;$p_{\textrm{flip}}$\;$=$\;$0$, $g_z$\;$\rightarrow$\;$\infty$, $g_x$\;$\rightarrow$\;$\infty$, and $p_{\textrm{flip}}$\;$\rightarrow$\;$0.5$, respectively.
The flow of the closed-loop decay exponent $\alpha$ under LED provides a sharp divider between two kinds of perimeter-law decay, observed in different regimes of the mixed-state phase diagram.
(c) In the uncorrectable regime (i), the local decoder of LED pairs anyons incorrectly, resulting in perimeter-law decay with large $\alpha$ in disordered and paramagnetic phases. Moreover, the probability of such an incorrect pairing can increase with the number $n$ of LED iterations. Here, the black pairings are made by LED at or before one specific value of $n$, and  gray pairings are made upon performing one additional LED iteration. In the correctable (topological) regime (ii), increasing $n$ can reduce $\alpha$ to zero, as fluctuations of higher characteristic length $\xi$ can be reliably corrected using only local information. In the conceptual framework where an LED operator is embedded in a surface code on an annulus (see Figure~\ref{fig:fig2}a), incorrect pairings corresponds to logical errors (e.g. $X_L$). 
(d) Expectation values of LED loop observables upon increasing  $n$ ($d, L \propto 2^n$), in thermal states of varying temperatures (between $0$ and $0.35$, with darker colors indicating higher temperatures) and $p_{\mathrm{flip}}=0.02$.}
\label{fig:fig3}
\end{figure*}

To study such examples, 
we introduce incoherent bit- and phase-flip errors
by independently flipping, with probability $p_{\textrm{flip}}$, each measured qubit in a snapshot of $\ket{\psi(g_X, g_Z)}$. Here, we associate topological order with states that can be transformed into a ground state of $H_{\textrm{TC}}$ via local operations.
Our analysis then suggests that the resulting mixed-state phase space contains a $\mathbb{Z}_2$-topological phase, a $Z$-paramagnet, an $X$-paramagnet, and a disordered phase with large incoherent error rates. 
However, it is especially difficult to distinguish the topological and disordered phases using measurements of bare operators alone:  in both phases, open strings remain close to zero, while bare Wilson loops decay exponentially with perimeter as $e^{-\alpha L}$, where the exponent $\alpha$ interpolates smoothly between the phases (Figure~\ref{fig:fig3}a,b). This is in contrast to the paramagnet phases, where closed loops exhibit similar behavior, but certain open strings decay with the same exponent $\alpha$ as the closed loops~\cite{fredenhagen_charged_1983}.

Upon studying the behavior of LED operators, one finds that the mixed-state phase space exhibits two qualitatively different regimes (Figure~\ref{fig:fig3}b). 
LED reduces $\alpha$ to $0$ with increasing $d$ in the `correctable' regime, while $\alpha$ grows in the `uncorrectable' regime.
Further, correctable states with 
small $p_{\mathrm{flip}}$ are connected to topologically-ordered pure states, suggesting these mixed states are topologically-ordered as well. Indeed, we show that correctability implies the input state cannot be prepared from a product state using only local operations. In particular, if LED Wilson loops are amplified to above $1-\epsilon$ under depth $d$ correction, this certifies topological order up to length-scale $O(\mathscr{L}-d)$ where $\mathscr{L} \sim 1/\sqrt{\epsilon}$. Furthermore,  we argue (see Methods) that, under plausible conditions, this implies the \textit{entanglement negativity} of the input state contains a topological term; this connects the LED characterization of mixed state topological order to other studies ~\cite{peres_separability_1996,horodecki_separability_1996,lee_entanglement_2013}.

The ability of LED to distinguish between the topological and disordered phases can be understood by analogy to quantum error correction.
Conceptually, since any given LED loop operator is supported on an annulus, we can consider this operator as being embedded in a surface code on this annulus with open boundary conditions, which supports a logical qubit. Then, an LED $Z$-loop operator  corresponds to a logical-$Z$ operator for this qubit, while an $X$-string  connecting the interior of the annulus to the exterior corresponds to a logical-$X$ operator (see Figure~\ref{fig:fig2}a).
In this framework, the decay rate $\alpha$ of Wilson loops corresponds to a local logical error rate per unit length, and in the correctable phase, LED-based decoding succeeds with high probability as long as the code distance $d$ is sufficiently large (Figure~\ref{fig:fig3}c).
However, in the uncorrectable phase, such as when $p_{\mathrm{flip}}$ is above the error correction threshold or when long, open strings condense in a paramagnetic phase, decoding cannot correctly pair anyons, resulting in a high rate of logical errors~\cite{dennis_topological_2002}.

The above results are deeply rooted in the stability of topological order against local perturbations. In contrast, any finite temperature destroys long-range topological order as it leads to freely propagating thermal anyons.
In Figure~\ref{fig:fig3}d, we consider the toric code model at finite temperature, with local incoherent errors, and find that the LED loop operators indeed approach zero upon increasing $n$. 
Interestingly, their expectation values flow non-monotonically, being amplified at small $n$ before eventually turning to $0$. 
This occurs because of a competition between two effects: thermal anyons are uncorrectable, so their density accumulates under RG flow; however, local fluctuations are corrected at early layers, which initially amplifies LED loop expectation values. 
Because loops at different $n$ probe correlations at different length-scales, the turning point in these curves can be used to identify the characteristic length-scale of separation between thermal anyons, or equivalently, the system's temperature.

\section{Experimental Realization in Rydberg Atom Arrays}

\begin{figure}[t]
\includegraphics[width=0.46\textwidth]{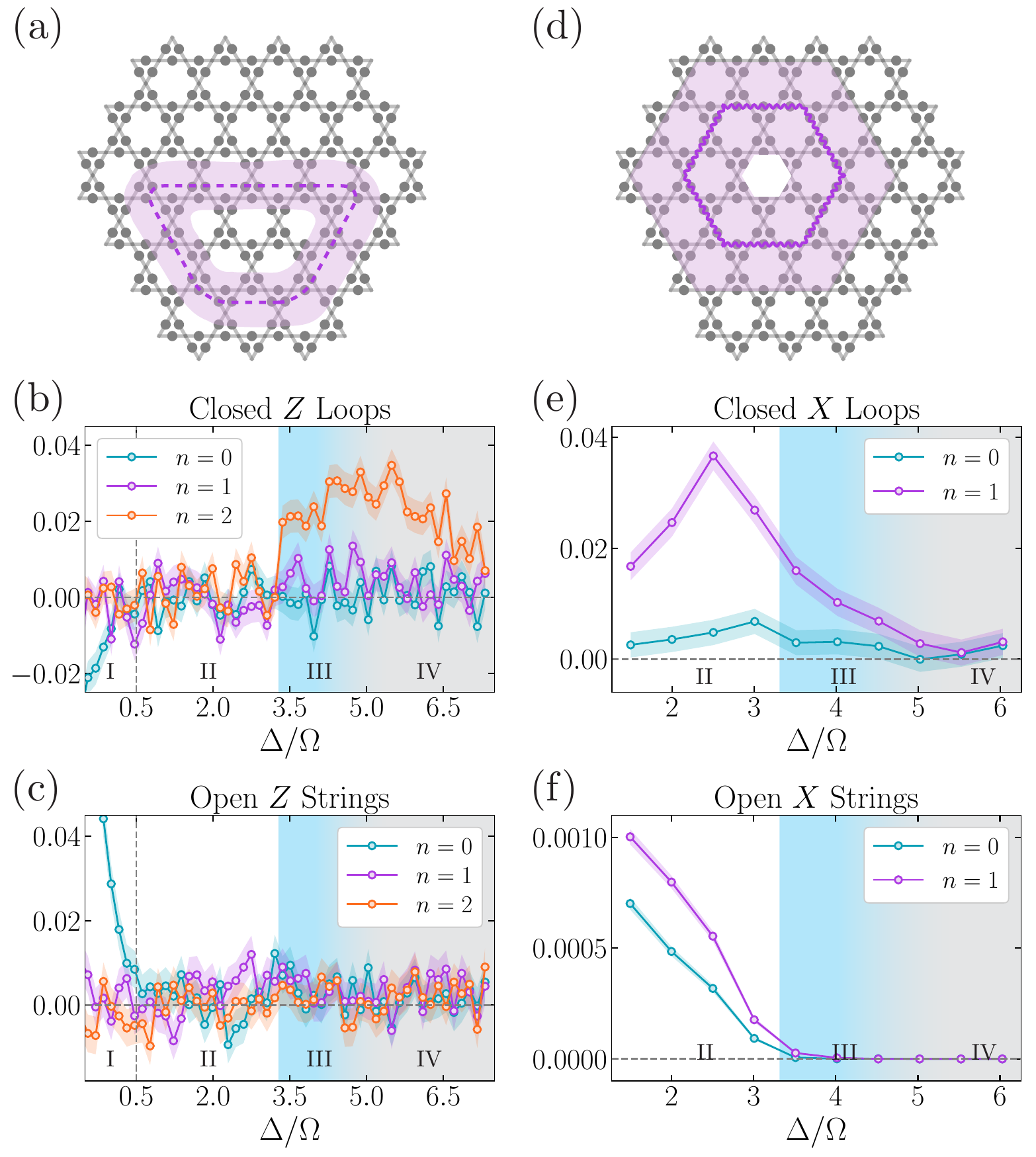}
\caption{Enhancing experimental detection of $\mathbb{Z}_2$ spin liquid. (a,d) In the experiment~\cite{semeghini_probing_2021}, 219 qubits are placed on the links of a kagome lattice. 
Upon applying LED, the $Z$ and $X$ closed-loop observables are amplified for certain ranges of $\Delta/\Omega$. The shaded purple regions show the support of large, decorated Wilson loops after one layer of correction with $n=1$. (b,e) Expectation value of Wilson loops depicted in (a,c) for different correction layers $n$. Plotted error bars (shaded regions) show expected variation (one standard error) of the mean. The regime in which both types of loops are amplified corresponds nicely to the spin-liquid regime identified in Ref.~\cite{semeghini_probing_2021} (shaded blue region). (c,f) The behavior of expectation values of open $Z$- and $X$-strings under LED further confirms our findings, as both types of open strings stay at $0$ in the spin-liquid regime. Here, the measured open strings are half of the Wilson loops. By considering the behavior of all types of loops and strings---closed and open, $Z$ and $X$---we find that there are four regimes (I-IV), corresponding to four phases:  (I) $Z$-paramagnet, (II) $X$-paramagnet, (III) topological spin liquid (blue), and (IV) a phase which is consistent with strong decoherence effects (gray). In our analysis, the progression from Regime (III) to (IV) appears to be smooth.}
\label{fig:fig4}
\end{figure}

We now use LED to characterize and provide new insights into the  $\mathbb{Z}_2$-topologically-ordered states recently realized on a 219-qubit programmable quantum simulator~\cite{semeghini_probing_2021}. In the experiment, qubits are encoded in ground states and $n=70$ Rydberg states of neutral $^{87}$Rb atoms and 
placed in an array on the links of a kagome lattice (Figure~\ref{fig:fig4}). 
This model maps onto a dimer model, where each Rydberg atom can be viewed as a dimer covering the two adjacent vertices of the kagome lattice~\cite{verresen_prediction_2020}: the Rydberg blockade interaction between nearby atoms enforces a ``dimer constraint'' by preventing, with high probability, any vertex from being covered by more than one dimer~\cite{saffman_quantum_2010}.

This dimer model is predicted to support a $\mathbb{Z}_2$-topologically-ordered state of the resonating valence bond (RVB) type,  involving the equally-weighted superposition of all dimer coverings \cite{verresen_prediction_2020,misguich_quantum_2002,poilblanc_topological_2012}.
In this model, $Z$-stabilizers are given by $(-1)$ times the product of single-qubit $Z$-operators on the edges touching a vertex, $X$-stabilizers are given by the product of off-diagonal operators supported on the triangles bordering a hexagon (see Methods), and the RVB state forms a fixed-point state. An $e$ (resp., $m$) anyon arises when a $Z$ ($X$) stabilizer is violated~\cite{samajdar_emergent_2022,tarabunga_gauge-theoretic_2022,verresen_unifying_2022}~\footnote{The $(-1)$ factor for $Z$-stabilizers ensures stabilizer expectation values of $+1$, because each vertex is touched by exactly one dimer.}.

In the experiment, a topologically-ordered state  is prepared by
quasi-adiabatically adjusting the detuning $\Delta$ and Rabi frequency $\Omega$ of a global laser drive~\cite{semeghini_probing_2021}. The onset of topological order is observed by 
studying the expectation values of  Wilson loops and open strings~\cite{fredenhagen_charged_1983,bricmont_order_1983,gregor_diagnosing_2011,verresen_prediction_2020,semeghini_probing_2021}. A state consistent with $\mathbb{Z}_2$ topological order emerges when using a quasi-adiabatic sweep from initial $\Delta/\Omega <0$ to a final value of $\Delta/\Omega$ in the range  $3.3 \lesssim \Delta/\Omega \lesssim 4.5$.
In practice, several factors make quantitative characterization of such states difficult, as they
cause the prepared state to differ from the ideal fixed-point state for the dimer model. In particular, the Rydberg interaction Hamiltonian is only an approximation of the parent Hamiltonian of the fixed-point state~\footnote{For example, the $1/r^6$ interaction between Rydberg atoms gives rise to long-range tails in the interaction Hamiltonian. These long-range tails also destabilize the spin-liquid ground state, which could cause  a first-order phase transition between regions (II) and (IV) in Figure~\ref{fig:fig4}. Nonetheless, a spin-liquid state can be prepared by using finite ramp speeds, as was done in the experiments~\cite{giudici_dynamical_2022,cheng_variational_2022,verresen_prediction_2020}.}. Moreover, the finite sweep speed  and  experimental imperfections (e.g., off-resonant scattering, laser phase noise, spontaneous emission events) also modify the experimentally created state. 
These factors correspond to both coherent and incoherent perturbations, similar to those considered in our toric code simulations. As a result, while topological order can be discerned at modest length-scales, the expectation values of large, bare Wilson loop observables have nearly vanishing signal for almost all final values of $\Delta/\Omega$  (Figure~\ref{fig:fig4}b,e).

To circumvent these imperfections, 
we measure LED loops on the experimentally prepared states. 
Due to the limited experimental system size, it is not possible to consider loops which strictly satisfy the limit  where
$\xi \ll d \ll L$, resulting 
in relatively small expectation values for the LED loop operators. Nonetheless, we clearly observe a range of values of $\Delta/\Omega$ where both $Z$- and $X$-loops are amplified, which corresponds to the spin-liquid interval identified in Ref.~\cite{semeghini_probing_2021} (blue shaded region in Figure~\ref{fig:fig4}).  In particular, some of the largest loops within the system acquire non-zero expectation values in this parameter regime. To further confirm our findings in this intermediate system size setting, we also examine the behavior of open $Z$- and $X$-strings under LED, and we find that there are four regimes (I-IV).
Regimes I, II, and III correspond to the $Z$-paramagnet, $X$-paramagnet, and spin-liquid regime, in agreement with the prior interpretation of experimental results~\cite{semeghini_probing_2021}. We emphasize that our analysis of Regime III goes beyond that of~\cite{semeghini_probing_2021}, showing non-trivial coherence in closed loops at significantly longer length-scales. Furthermore, LED provides novel insights into the nature of Regime IV: because LED does not amplify open or closed string expectation values, our analysis appears to be consistent with a decoherence-dominated disordered phase (see also Supplementary Information). Such a phase is analogous to the disordered part of the mixed-state phase diagram (see Figure~\ref{fig:fig3}c), which has a high density of dephasing ($Z$) errors, in contrast to the valence-bond solid (VBS) phase predicted for the   ground state~\cite{verresen_prediction_2020}.

\section{Circuit-Based LED and Generic Topological Phases}
\begin{figure*}[t]
\includegraphics[width=0.85\textwidth]{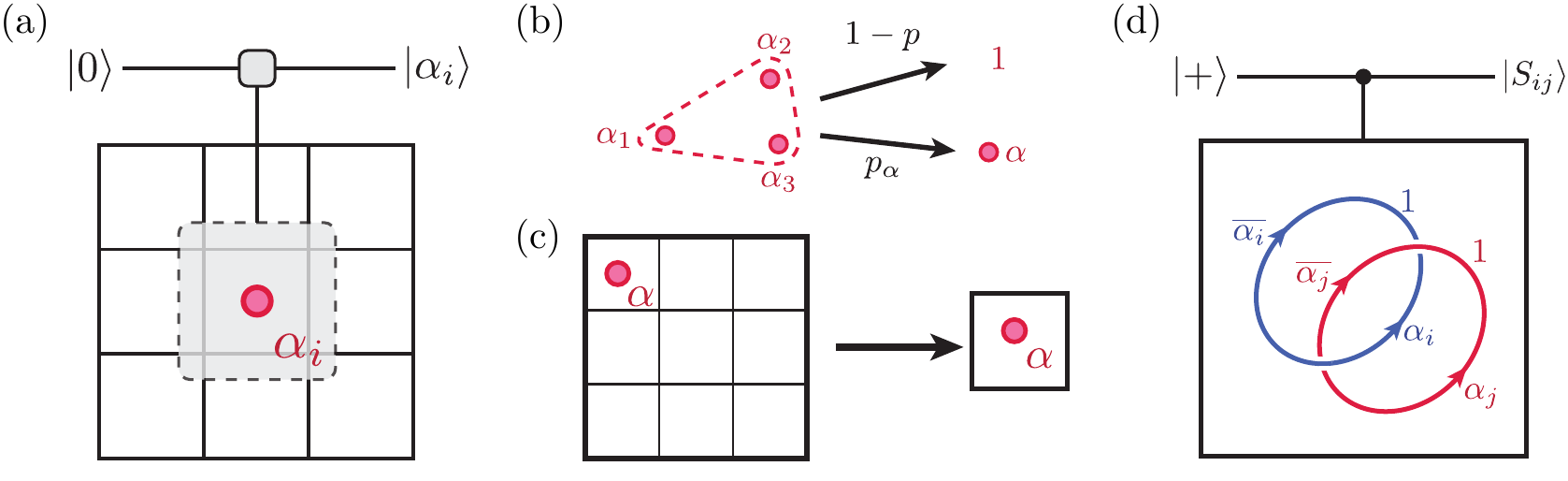}
\caption{LED for generic string-net models. (a) An ancilla qudit is used to measure the topological charge within each local region $\mathcal{R}$: we initialize the ancilla in $|0\rangle$, apply a local unitary $U = \sum_{i,j=0}^{N-1} |(i+j)\, \textrm{mod}\, N \rangle \langle j|_\textrm{anc} \otimes P_{i}$, where $P_{i}$ projects $\mathcal{R}$ onto the subspace with topological charge $\alpha_i$, and finally measure the ancilla's state. (b) Local error correction is performed by inputting the fusion rules of $\mathcal{C}$ into a maximum-likelihood patch-based decoder. Given any $l \times l$ patch, one  identifies possible groupings of anyons (including groupings to the boundary) that can remove all nontrivial topological charges within the patch. The decoder performs the grouping of highest probability by fusing anyons or dragging them to the boundary of the patch~\cite{zhu_universal_2020}. If $\mathcal{C}$ is non-abelian, the vacuum topological charge may only be attained probabilistically with probability $1-\sum_\alpha p_\alpha$, or a nontrivial topological charge $\alpha$ remains with some probability $p_\alpha$. (c) The system is then coarse-grained by applying a quantum circuit corresponding to a multiscale entanglement renormalization ansatz (MERA) representation of the fixed-point state~\cite{konig_exact_2009}. (d) At the final layer,  $S$- and $T$-matrix elements can be measured by introducing an ancilla qubit in the $|+\rangle$ state and applying controlled-anyon-braiding operations. 
More details on implementing Steps (c) and (d) can be found in Methods.
}
\label{fig:fig5}
\end{figure*}

While our current LED analysis uses classical 
post-processing of $Z$- and $X$-basis experimental snapshots, the most generic LED formulation involves a 
quantum circuit model following the QCNN framework of Ref.~\cite{cong_quantum_2019}. 
Here, the entropy associated with both incoherent and coherent fluctuations are systematically removed by introducing ancillary degrees of freedom and applying local unitary transformations, ultimately leaving a purified  state supported on fewer degrees of freedom. 
Notably, 
this enables the application of LED to a large class of non-abelian topological orders known as string-net models~\cite{kitaev_fault-tolerant_2003, levin_string-net_2005}. The anyon content of these models is characterized by a {\it modular tensor category (MTC)} $\C=\mathcal{Z}(\mathcal{A})$, where $\mathcal{Z}$ denotes the Drinfeld center~\cite{wang_topological_2010,bakalov_lectures_2001} of a unitary fusion category $\mathcal{A}$~\cite{kitaev_fault-tolerant_2003}. 
Here, the possible topological charges (i.e., anyon types) are given by the simple objects $\alpha_0, \alpha_1, ... , \alpha_{N-1}$ of $\C$. It is conjectured that any MTC is uniquely determined by modular
$S$ and $T$ matrices which capture its anyon braiding statistics:
\begin{equation}
\vcenter{\hbox{\includegraphics[width = 0.225\textwidth]{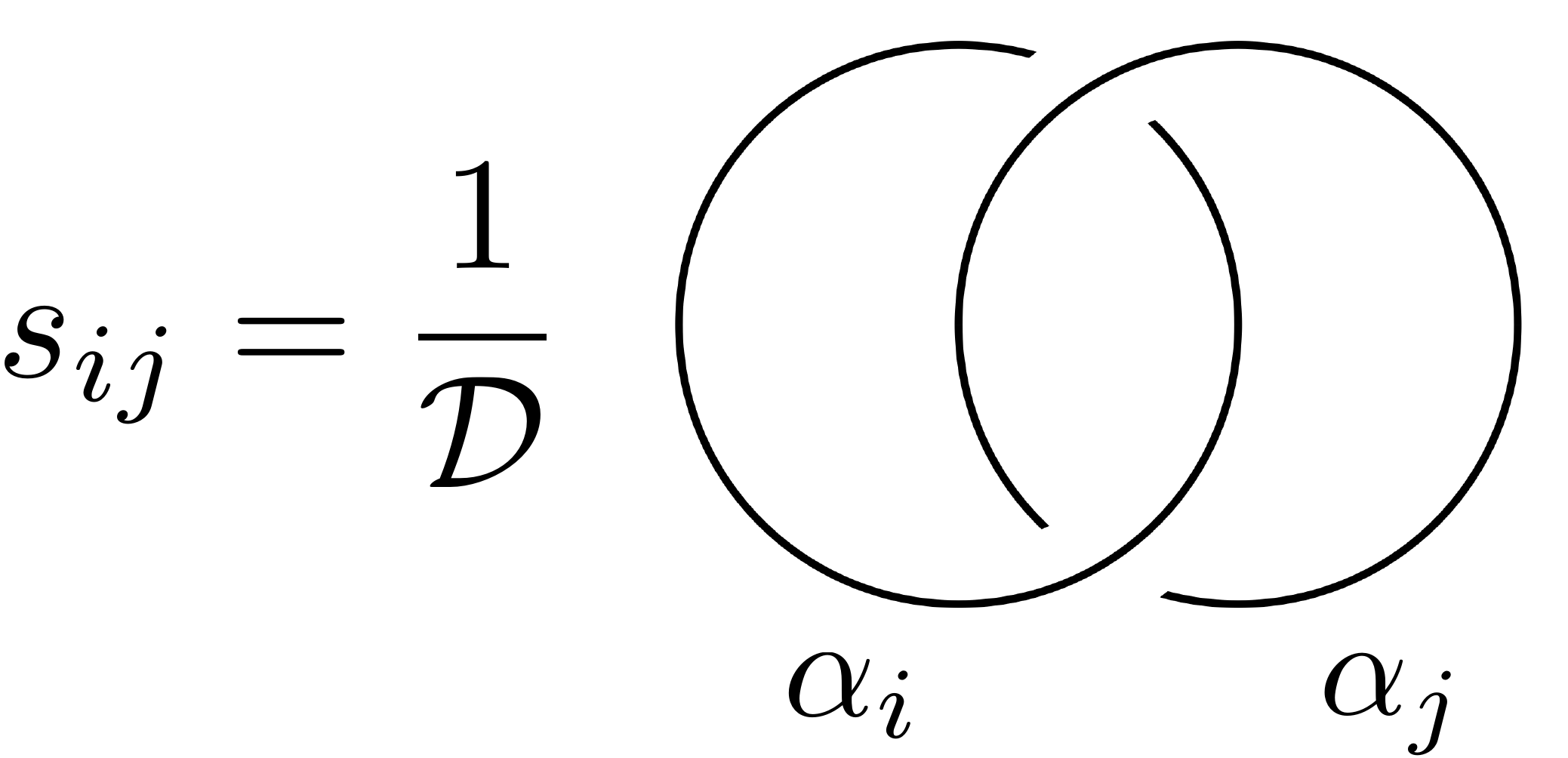}}}
\end{equation}
\vspace{-2mm}
\begin{equation}
\label{eq:twist}
\vcenter{\hbox{\includegraphics[width = 0.16\textwidth]{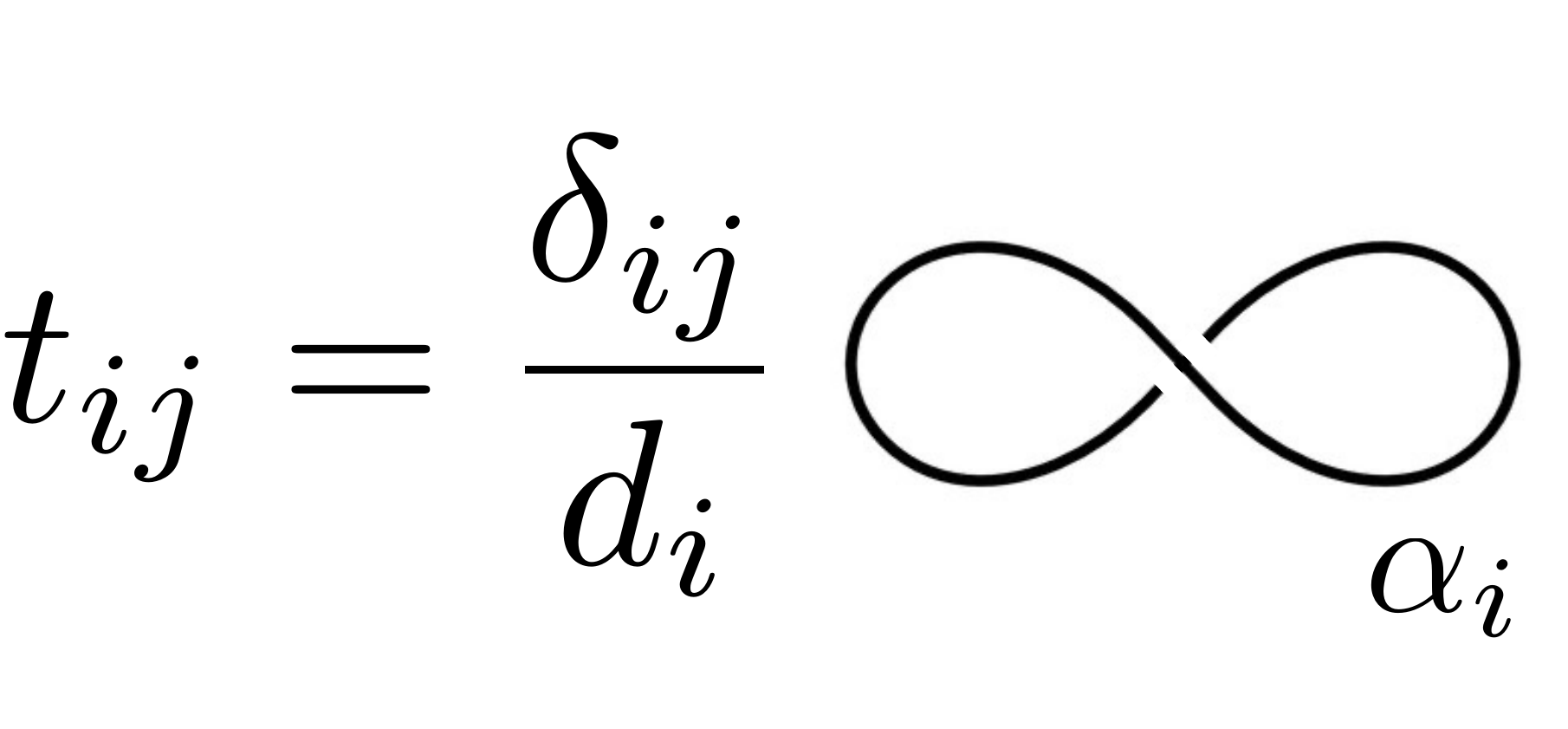}}}
\end{equation}
where $d_i$ is the quantum dimension of $\alpha_i$ and $\mathcal{D} = \sqrt{\sum_i d_i^2}$. For example, a key signature of the toric code MTC $\mathcal{C} = \mathfrak{D}(\mathbb{Z}_2)$ is the $-1$ twist product between $e$ and $m$ anyon loops ($s_{em} = -1$).

Direct measurements of $s_{ij}$ and $t_{ij}$ involve braiding anyons along large loops, and hence are affected by coherent perturbations and incoherent errors. The inability to extract their precise values prevents accurate identification of the topological phase.  To circumvent this, we use a hierarchical LED circuit which systematically detects and identifies errors (anyons) at each site by using ancillary qubits, removes them by inputting the fusion rules of $\C$ into a maximum-likelihood decoder, and applies an entanglement renormalization circuit to coarse-grain the system~\cite{konig_exact_2009}. After multiple layers, the $S$ and $T$ matrices can be measured with much higher accuracy and efficiency (Figure~\ref{fig:fig5}). We note that circuit-based LED is required for the detection and removal of non-abelian anyons. 
More details on circuit-based LED and generic topological phases can be found in Methods and Supplementary Information.

\vspace{-2mm}
\section{Outlook}
\vspace{-1.5mm}

These results demonstrate that 
LED constitutes an exceptionally promising approach to 
enhance the detection and characterization of topological order. 
Several generalizations and future avenues can be considered. For example, the variational methods of QCNN circuits can  
enable adaptive measurement procedures, which can recognize a much larger portion of the topological phase. This opens the door towards achieving a {\it necessary and sufficient} criterion for topological order using LED, which cannot be done using any fixed linear observable~\cite{huang_provably_2022}.  
Moreover, our results indicate that LED is applicable to generic topological orders in higher dimensions, which is challenging to analyze using any currently known techniques. LED can also potentially serve as an order parameter for efficiently characterizing glassy gauge models~\cite{Wang_gauge_glass_2003}, through a mapping shown in Methods.
Additionally, while our present work analyzes a spin-liquid state prepared using a Rydberg-atom quantum simulator, LED is also directly applicable to other platforms such as superconducting qubits~\cite{satzinger_realizing_2021} or trapped ions~\cite{stricker_experimental_2020}.

Another promising direction is to further study whether the ``correctability'' of states in our mixed-state phase diagram can be used to characterize topological order in mixed states more generally~\cite{lee_entanglement_2013,jamadagni_learning_2022,jamadagni_operational_2022,bao_mixed_2023,hastings_topological_2011}. In particular, it could be intriguing to further explore the dependence of the  correctable regime  on the choice of local error correction and/or coarse-graining procedure. Finally, while our approach can be directly applied to any string-net topological order,  it could be interesting to consider more general topological phases, fracton phases or gauge theories with continuous gauge groups~\cite{verresen_efficiently_2022,verresen_unifying_2022}. Such methods can then become indispensable parts of quantum simulation toolboxes for understanding exotic states of entangled quantum matter. 

\vspace{-1.5mm}
\section{Methods}
\vspace{-1.5mm}
\subsection{Numerical Simulations for the Toric Code}
\vspace{-1.5mm}

\begin{figure*}
    \centering
    \includegraphics[width=0.80\textwidth]{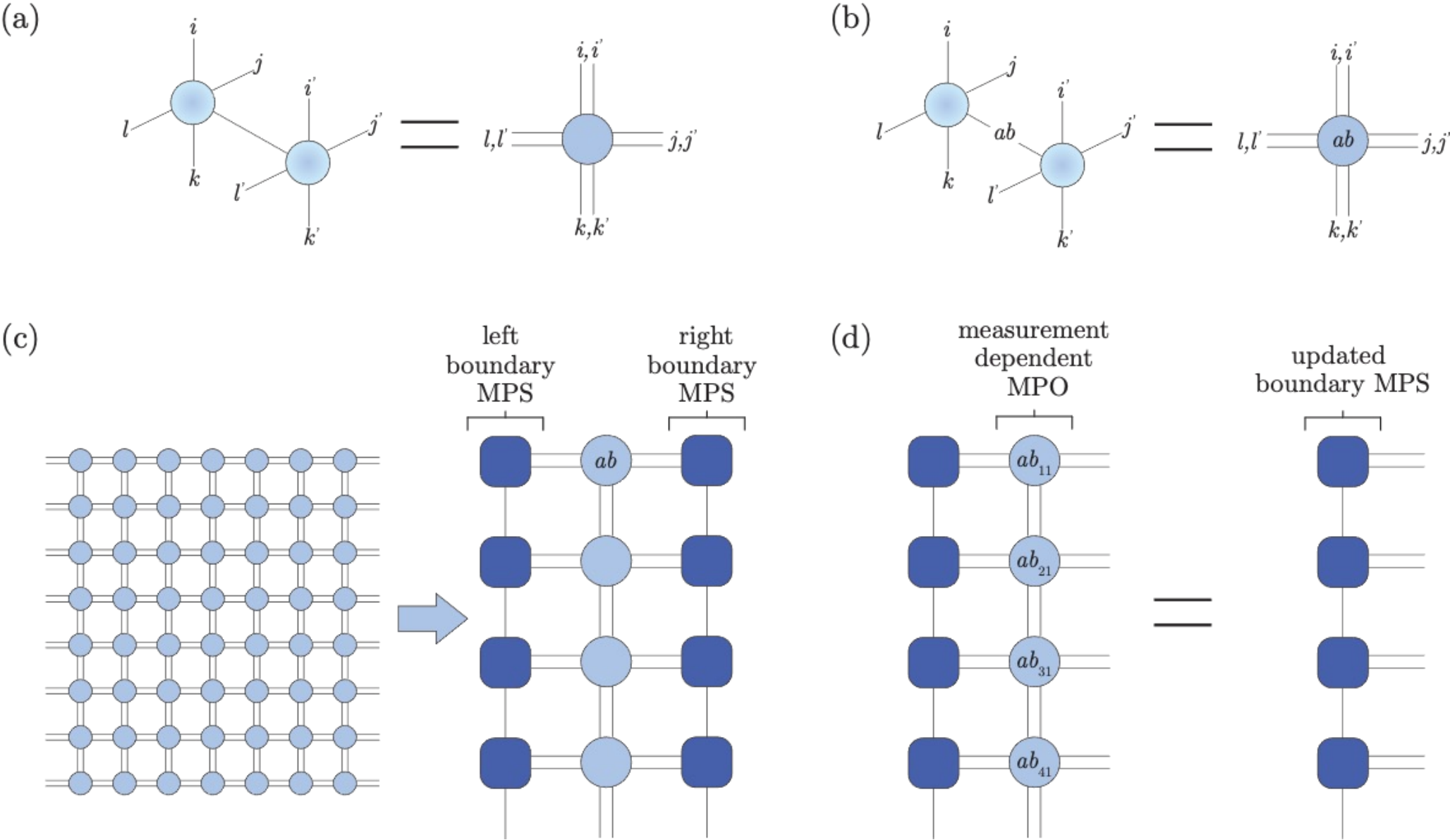}
    \caption{PEPS sampling algorithm. Expectation values are computing with respect to both $\ket{\psi}$ (back) and $\bra{\psi}$ (front). (a) Tracing, or averaging over measurement outcomes can be done by contracting the physical indices, and is needed to compute marginal probabilities. (b) To compute the probability of a particular $Z$ basis measurement, the physical index is assigned a particular value $ab$. 
    (c) We can efficiently contract a 2D PEPS tensor network on an infinite strip of finite height, by using a left and right boundary MPS (only top four rows shown). The probability distribution for projective measurements on a particular site, e.g. $xy = 11$ can then be computed efficiently. 
    (d) Once an entire column has been sampled, the measurement-dependent MPO can applied to the boundary MPS. Although performing this contraction exactly causes the bond-dimension to grow rapidly, away from phase boundaries, finite bond dimension is sufficient for accurate simulation. See Supplementary Information for more details.
    }
    \label{fig:methods-fig1-PEPS}
\end{figure*}

In this section, we explain how the numerical simulations underlying Figures~\ref{fig:fig2} and~\ref{fig:fig3} are performed. 
We begin by constructing a projected entangled pair state (PEPS) representation of the exact toric code ground state~\cite{schuch_resonating_2012}. This construction 
utilizes a parity tensor $P$ defined as
\begin{align}
    P_{ijkl}&=\begin{cases}
    1 & \mathrm{if\;} i+j+k+l=0\mod2\\
    0 & \mathrm{otherwise}
    \end{cases}
\end{align}
where each index $i,j,k,l \in \{0,1\}$ (i.e., the tensor $P$ has bond dimension two).
Because the toric code is defined with qubits on the links of a square lattice, our PEPS representation of the state has one PEPS tensor with two physical indices per unit cell. Letting $p,q$ be the physical indices and $ijkl$ be the virtual indices, the toric code PEPS tensor $A$ is then given by $A_{ijkl}^{pq}=\delta^p_{i}\delta^q_{j} P_{ijkl}$. Our perturbed states $|\psi(g_X,g_Z)\rangle$ are constructed from the toric code state by applying imaginary time evolution to each site $L(g_X,g_Z)=e^{g_X X + g_Z Z}$:
\begin{align}
    A(g_X, g_Z)_{ijkl}^{pq} = \sum_{p',q'} L(g_X,g_Z)_{p'}^{p} L(g_X,g_Z)_{q'}^{q} A_{ijkl}^{p'q'}.
\end{align}
Notice that this operation does not change the PEPS bond dimension, thereby allowing for efficient simulation.

Our goal is to simulate projective $Z$-basis measurements to serve as the ``experimental snapshot'' input in Figure~\ref{fig:fig1}b.
The key ingredient which enables efficient sampling is an algorithm for efficiently computing marginal and conditional probabilities, which can be implemented as follows:
We first label every unit cell by its coordinate $(x,y)$. There are four possible measurement outcomes at each unit cell, and we compute the probability $P(\sigma_{(1,1)}=ab)$ that measurement of the first site $(x,y)=(1,1)$ yields the outcome $ab = 00, 01, 10$, or $11$.
Next, we select a sample $ab_{11}$ based on this probability distribution, 
compute the conditional probability distribution on the second site, $P(\sigma_{(2,1)}=ab | \sigma_{(1,1)}=ab_{11})$, and sample the second measurement outcome $ab_{21}$.
The process then repeats, with each subsequent distribution being conditioned on all prior measurements. 

Computing the probabilities requires contracting a 2D tensor network (Figure~\ref{fig:methods-fig1-PEPS}), which is in general $\#P$-hard~\cite{schuch_computational_2007}. In practice, however, the states we encounter have finite correlation length, and the computation becomes remarkably efficient throughout much of the phase diagram~\cite{cirac_matrix_2021}. In particular, we work on a strip of finite height $L_x$ and infinite length $L_y$, and introduce boundary matrix product states (MPS) to efficiently capture the effect of the environment---that is, the sites different from the one currently being sampled~\cite{napp_efficient_2022}. Because singular-value decomposition truncation is used at each step to prevent the bond dimension of the boundary MPS from growing exponentially~\cite{vidal_class_2008}, the method is approximate; however, we only discard singular values  $<10^{-8}$, so truncation errors are insignificant.
Details of the boundary conditions and contraction ordering are discussed in the Supplementary Information.

In our simulations, we choose $L_x = 300$ unit cells and sample 1000 columns, giving us access to very large snapshots with 600,000 qubits. To minimize boundary effects, we compute observables supported on sites at least 30 unit cells away from the boundaries.
Near the phase boundaries, the bond dimension (entanglement) of the boundary MPS becomes large due to the large correlation length, which increases the computational demands for sampling (gray data points in Figure~\ref{fig:fig2}g). We numerically confirm this phase boundary with an independent calculation (see Supplementary Information).

\subsection{Details on Error-Correction and Coarse-Graining Procedures}

Here, we explain the details of
the LED decoding and coarse-graining procedures and demonstrate how bare Wilson loops become decorated under the LED protocol. Without loss of generality, we consider
$Z$-basis measurements, from which we can calculate plaquette stabilizers $B_u$. Here, each plaquette is labelled by the 
2D coordinate of its unit cell $u = (x,y)$.
Since there are two qubits per unit cell, each qubit carries a coordinate
and a link label $v$ or $h$, depending on whether its corresponding edge in the square lattice is
vertical or horizontal, respectively.
Finally, the projective measurement outcomes are denoted by $\sigma \in \{+1,-1\}$ (see Figure~\ref{fig:methods-fig2}).

To illustrate local error correction, we consider the ``pairing decoder,'' which flips a qubit if and only if its two neighboring plaquettes are simultaneously occupied. 
Importantly, to preserve locality, we first compute all stabilizer values and then flip qubits based on these values. The decision of whether to flip any qubit then depends only on its value, and the values of the six adjacent qubits with which it shares a plaquette.
Equivalently,
this error correction procedure 
corresponds to an operator transformation
\begin{align}
    \sigma_{u+\hat{x},v} \rightarrow \sigma_{u+\hat{x},v} \left(1 + B_u + B_{u+\hat{x}} - B_u B_{u+\hat{x}} \right)/2 \label{eq:inverse1} \\
    \sigma_{u+\hat{y},h} \rightarrow \sigma_{u+\hat{y},v} \left(1 + B_u + B_{u+\hat{y}} - B_u B_{u+\hat{y}} \right)/2 \label{eq:inverse2}
\end{align}
To ensure all local errors are removed after a finite number of LED steps, we also pair anyons which occupy two plaquettes separated by a diagonal, such as $B_u$ and $B_{u+\hat{x}+\hat{y}}$.
The locality of the decoder ensures that the support of any local operator only grows by a finite amount with each step. Subsequently, the coarse-graining procedure replaces each $b \times b$ block of plaquettes with a single plaquette whose value is the product of $b^2$ plaquettes; microscopically, this can be done by defining new qubits as a product of $b$ corresponding qubits in the original lattice. 
The combination of a local pairing step and a coarse-graining step
forms a layer of real-space RG; with each additional layer, one can correct errors of higher and higher weight.

The bare Wilson loops measured in the final state are equivalent to decorated loop operators acting on the original state. 
These decorated operators can be efficiently computed from projective measurement data, since their eigenstates are product states in the $Z$ and $X$ bases, respectively. 
Furthermore, in the operator transformation picture, any loop or string of length $L$ maps onto a linear combination of exponentially many ($2^{O(L)}$) loops or strings, respectively. Thus, while the operator transformation picture is helpful for conceptual reasons, it is computationally much easier to use the original picture of error-correction and coarse-graining.

A few remarks are in order. First, one important property of LED is that it preserves commutation relations: consider two anti-commuting $X$ and $Z$ strings which intersect at a single point, far from the strings' endpoints. Upon applying LED, the resulting decorated strings still anti-commute. This is
because the correction is computed only using stabilizers, so it decorates $Z$-operators by a linear combination of closed $Z$-loops, and similarly for $X$. 
Moreover, other local decoding algorithms, such as cellular automata and RG decoders, can also be used to generate different LED operators~\cite{duclos-cianci_fault-tolerant_2013}.
In the following section, we describe a flexible, ``patch-based'' local decoder for the toric code, which allows LED to classify a wider range of states as topological.

\begin{figure*}
    \centering
    \includegraphics[width=0.8\textwidth]{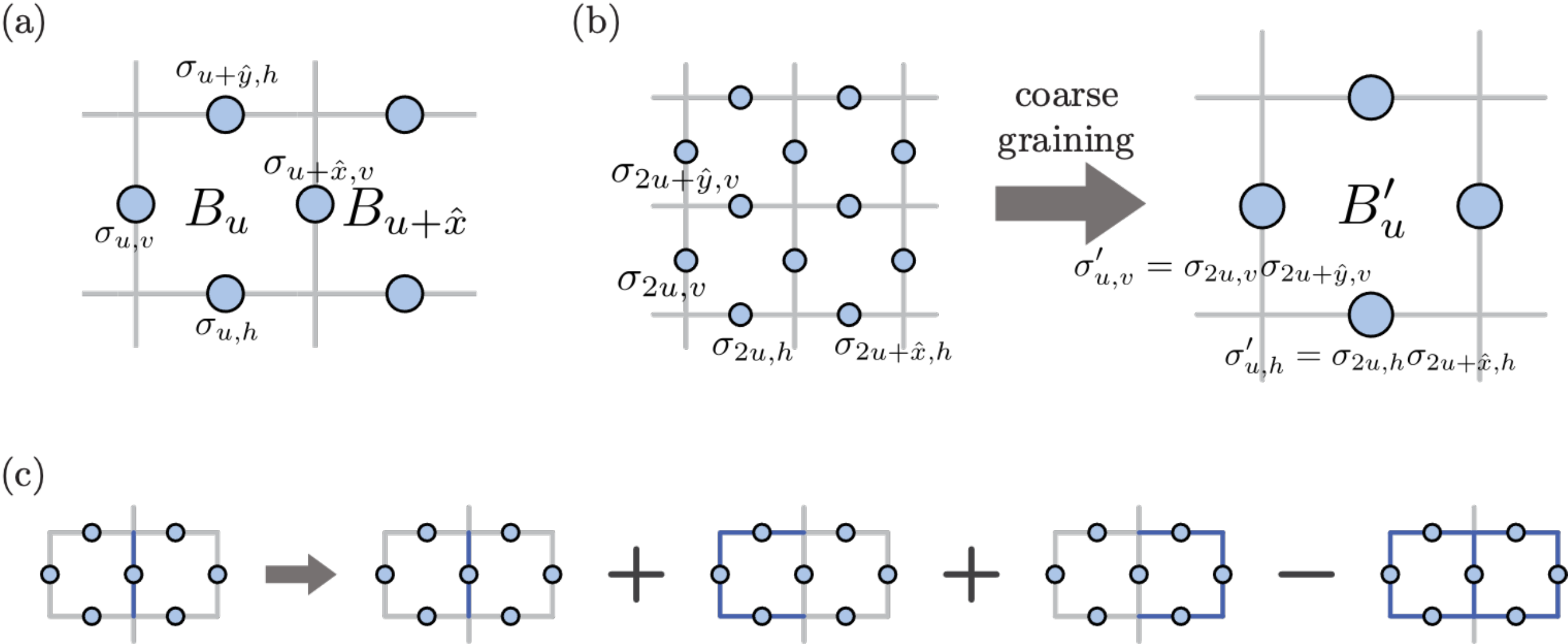}
    \caption{LED coarse-graining and operator transformation. (a) In the toric code model, qubits are located on the links of a square lattice,  and the stabilizer associated with any plaquette is given by a product of four single-qubit Pauli-$Z$ operators. (b)  Coarse-graining maps a $b \times b$ block of plaquettes to a single plaquette whose value is the product of the $b^2$ plaquettes (here $b=2$).
 Microscopically, coarse-grained qubits $\sigma'$ are products of $b$ lower-level qubits $\sigma$. Coarse-grained stabilizers $B'_u$ are therefore equivalent to a product of $b \times b$ stabilizers at the lower level. (c) Pairing correction flips a qubit conditioned on the state of its two neighboring stabilizers. This is equivalent to an operator transformation where the qubit is decorated by products of closed loops. }
    \label{fig:methods-fig2}
\end{figure*}

\subsection{Patch-based decoder}

The patch-based decoder with variable correction distance $d$ is based on a local 
minimum-weight perfect matching (MWPM) procedure. 
In the first decoding step, a local
MWPM decoder is convolved with all $l$ by $l$ square regions of the toric code, where $l \sim d$; for each region, MWPM takes as input the location of the enclosed anyons.
Because both $e$ and $m$ anyons can freely move into and out of the region, this is analogous to decoding a surface code with open boundaries. Therefore, MWPM pairs any given anyon either with another anyon or with the boundary.

The second step aggregates MWPM pairings. Since the square regions can overlap, a pair may appear more than once.
As such, after choosing a natural indexing of the plaquettes,
we create a list of all MWPM pairings between two plaquettes $(p,q)$ with $p < q$; pairings with the boundary are not included (Figure~\ref{fig:methods-fig3}).
For each plaquette $p$ containing an anyon, the patch-based decoder then performs the pairing $(p,q)$ which occurs most often. 
This procedure naturally favors pairings that flip fewer qubits, because shorter-range pairings can be included in more local patches.

A critical property of this decoder is that it
preserves locality. In the first step, MWPM only uses  information from local $l$ by $l$ patches, while the distance between partner plaquettes in the second step is always less than $l$.
Aggregation can thus be performed using only the results from a small number of overlapping local patches.

\begin{figure*}
    \centering
    \includegraphics[width=0.95\textwidth]{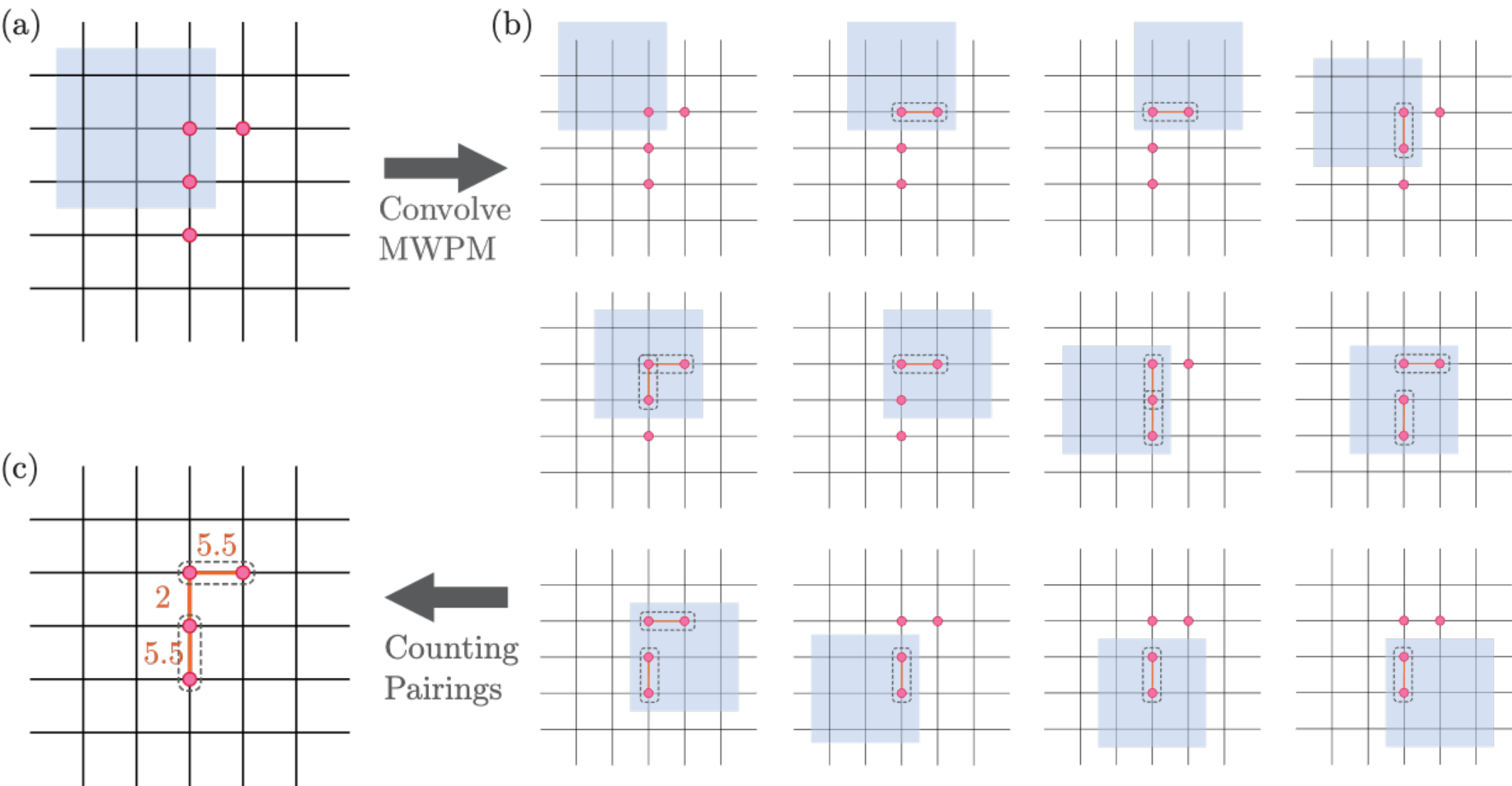}
    \caption{MWPM-based patch decoder. (a) Example of an error chain which creates four $e$-anyons. (b) The decoding algorithm performs correction using only local information by splitting the large system into smaller overlapping regions, within each of which the MWPM algorithm is used to find the lowest-weight pairing of anyons. These local regions have open boundaries, hence MWPM can also pair anyons to the boundaries if this is of lower weight. In practice, a slight boundary bias is added to break ties in favor of boundary pairing. (c) The final step requires locally combining the pairing outputs to determine the final pairing. In particular, we count the number of times each site $p$ is paired to sites $q > p$. In the diagram, two equal-weight pairings contribute 0.5 each, though we randomly break the tie in practice.  Then, the algorithm pairs $p$ with the $q$ that appears most often. In this example diagram, we connect two pairs which have weight$=5.5$, and do not form the weight$=2$ pairing.  We see in the simple four-anyon case depicted above, the procedure correctly recovers the pairing with windows of size $l=3$. In general, this patch-based decoder can correct errors up to distance $d = \lfloor{l/2} \rfloor$; moreover, the distance by which it spreads information and the thickness of any associated LED operators are both proportional to $l$.}
    \label{fig:methods-fig3}
\end{figure*}

\subsection{Decoder Details for the Ruby Lattice Spin Liquid}

\begin{figure*}
    \centering
    \includegraphics[width=0.8\textwidth]{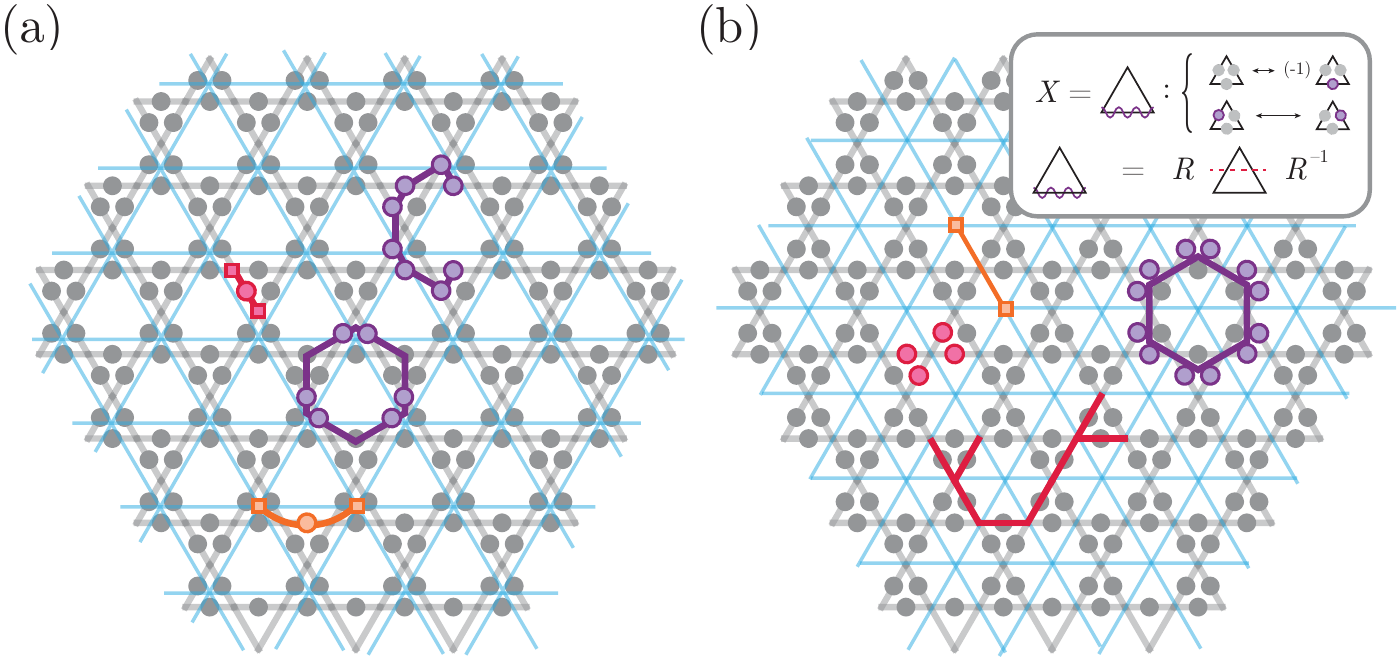}
    \caption{Decoding for the ruby lattice spin liquid realized in Ref.~\cite{semeghini_probing_2021}. (a) For $Z$-loops, two layers of LED can be performed. In both layers, we use the pairing decoder, which flips a qubit (e.g., red or orange circle) if and only if both neighboring stabilizers (e.g., red or orange squares) are equal to $-1$. Stabilizers in the first layer (e.g., red squares) are given by $(-1)\prod_{i \in v} Z_i$ for each vertex $v$ of the kagome lattice. The coarse-graining procedure after the first decoding step maps three stabilizers to a single stabilizer (e.g., orange square) in the coarse-grained lattice (blue lines), whose value is determined by the product of the qubits along a loop enclosing a triangle (e.g., purple closed loop). The open strings considered in the main text start and end at hexagons (e.g., purple open string). 
    (b) To measure $X$-loops, a basis rotation is first performed within each triangle of the kagome lattice, so that the $X$-string operators become diagonal in the measurement basis (inset and Refs.~\cite{semeghini_probing_2021,verresen_prediction_2020}). Each configuration is then mapped to a triangular lattice (blue lines), where each edge of the triangular lattice is determined by the product of four qubits in the original lattice (e.g., red circles); moreover, the $X$ stabilizers of the dimer model become vertex stabilizers in the triangular lattice (e.g., purple hexagons). As before, the pairing decoder flips qubits (orange edges) conditioned on the values of stabilizers (e.g., orange squares).
    Open strings on the triangular lattice also map to open strings in the kagome lattice (e.g., red string), although the resulting strings are slightly different from the ones measured in Refs.~\cite{semeghini_probing_2021,verresen_prediction_2020}.}
    \label{fig:methodsG-fig}
\end{figure*}

We now explain the decoding procedure for a dimer model where qubits lie on the vertices of the ruby lattice, or equivalently, on the links of a kagome lattice. This dimer model supports a $\mathbb{Z}_2$ spin-liquid phase, whose fixed-point is a resonating valence-bond (RVB) state~\cite{verresen_prediction_2020}. This state is in the same universality class as the toric code, as it supports $e$ and $m$ anyons with similar string operators. 

We first describe the decoding procedure for
$e$ anyons, which correspond to vertices with an even number of adjacent dimers~\footnote{Notice that this is an odd $Z_2$ spin liquid, and the trivial empty state corresponds to maximal occupation of $e$ anyon states.}. 
In the first correction step, we apply the pairing decoder between adjacent vertices. We then coarse-grain the kagome lattice to a triangular lattice by grouping vertices within each upward-pointing triangle.
This transforms vertex stabilizers in the kagome lattice to vertex stabilizers in the triangular lattice (Figure~\ref{fig:methodsG-fig}a).
The pairing decoder is then applied between adjacent triangles in the second correction step.
In the main text, we study the flow from uncorrected loops to vertex-paired and triangle-paired loops, which are denoted as as layers 0, 1, and 2, respectively. 

We next consider the $m$ anyons, which are associated with hexagonal plaquettes. A rotation is first performed within each triangle, such that the string operators associated with $m$ anyons become diagonal in the measurement basis. This allows us to map each configuration onto a triangular lattice, whose vertices are located at the center of each hexagon in the kagome lattice; this mapping transforms  $X$-stabilizers of the dimer model into vertex $Z$-stabilizers in the triangular lattice (Figure~\ref{fig:methodsG-fig}b).
Due to the small experimental system size, we can only perform one layer of correction, and we use the pairing decoder on the triangular lattice. 
We note that open strings on the triangular lattice map onto open strings on the ruby lattice, although the resultant strings are slightly different from the ones measured in \cite{verresen_prediction_2020,semeghini_probing_2021}. 

\subsection{Quantum Circuit Formulation of LED}

\begin{figure*}
    \centering
    \includegraphics[width=0.63\textwidth]{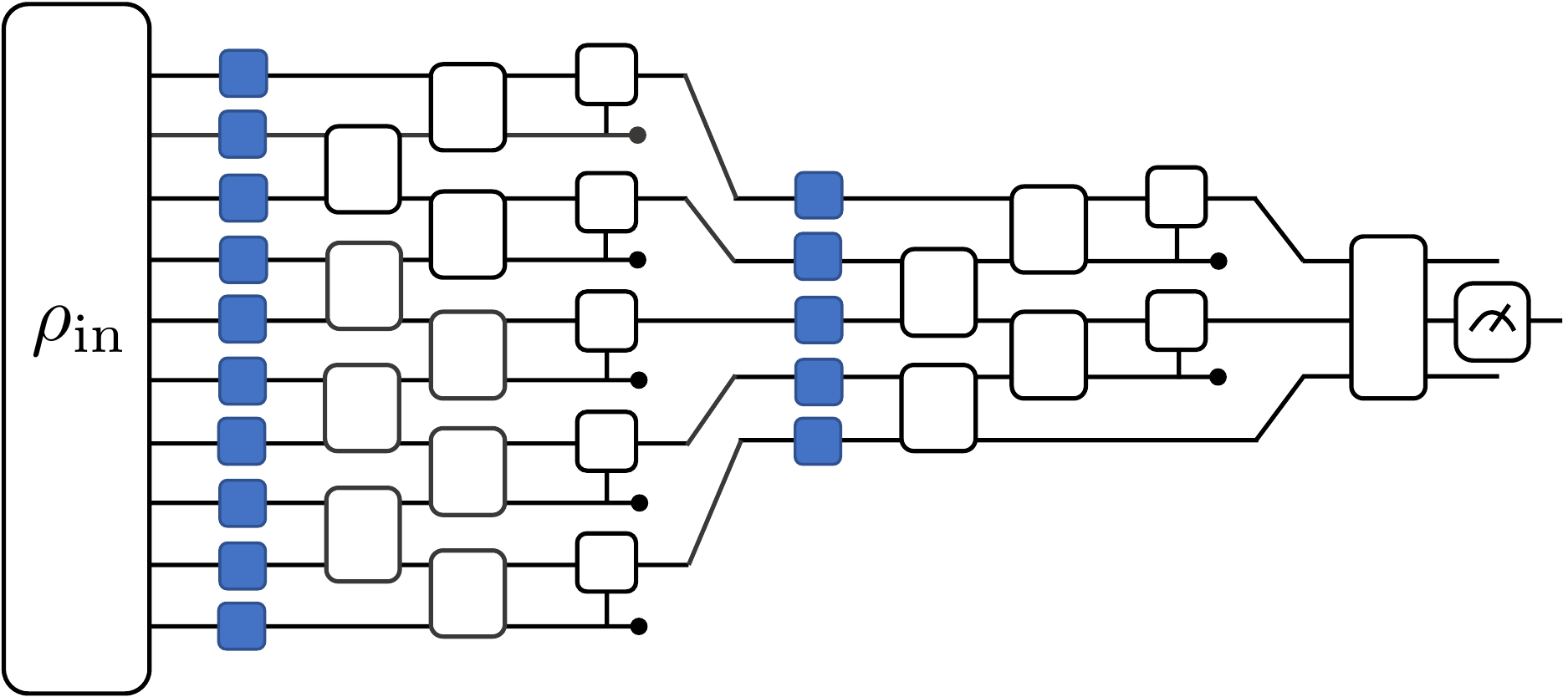}
    \caption{One-dimensional illustration of a hierarchical LED circuit, inspired by the QCNN circuit of Ref.~\cite{cong_quantum_2019}. Here,  stabilizer measurements are performed at each LED layer using local unitary gates (white boxes), and the LED error-correction step is performed through controlled-unitary gates. While this particular circuit model maps stabilizer values in each LED layer to qubits in the initial system ($\rho_{\textrm{in}}$) which are measured in that layer, the stabilizer values can also be obtained by introducing ancillary qubits in a known state (e.g $|0\rangle$) and performing local gates in the same fashion as for surface-code quantum computation~\cite{fowler_surface_2012}. 
    To accommodate states which differ from a known, fixed-point state by local rotations, variational unitary operations (blue boxes) can be introduced before each LED stabilizer-measurement step, and the parameters to these unitaries can be optimized adaptively through a hybrid quantum-classical feedback loop to achieve high LED operator expectation values.}
    \label{fig:LED-circuit}
\end{figure*}

As discussed in the main text, the most general formulation of LED uses a hierarchical quantum circuit like the QCNN circuit introduced in Ref.~\cite{cong_quantum_2019}. The structure of such a circuit is illustrated in Figure~\ref{fig:LED-circuit} in a one-dimensional example for simplicity of illustration, but can be easily generalized to the two-dimensional cases considered in LED.

In this framework, stabilizer measurements are performed at each layer using quantum circuits to preserve the coherence of qubits in the system, in the same fashion as for surface-code quantum computation~\cite{fowler_surface_2012}. 
When the lattice is coarse-grained as in Figures~\ref{fig:fig1},~\ref{fig:methods-fig2}, a fraction of the system's qubits are measured, and local operations are applied to each remaining qubit based on nearby stabilizer measurement values, to correct for local errors. One example circuit construction of LED stabilizer measurement, decoding, and coarse-graining for recognizing the toric code phase is presented in the Supplementary Information. For more general string-net models, ancillas can be used to detect the presence of anyons, and the decoding steps perform anyon transport and fusion via procedures described in Ref.~\cite{zhu_universal_2020}; meanwhile, the coarse-graining circuit is constructed as the inverse circuit of a multiscale entanglement renormalization ansatz (MERA) representation of the fixed-point state (Figure~\ref{fig:fig5})~\cite{konig_exact_2009}. 
Additionally, a layer of variational unitary operations 
is placed in front of each anyon-detection step.

These variational unitaries can  be tuned to optimize the LED order parameter values, especially in the presence of (quasi-)local rotations of qubits on top of a known fixed-point state.
For example, if every qubit in a perfectly-prepared toric code state underwent a Haar-random, single-qubit operation, both the bare and snapshot-based LED Wilson loop operators will be exponentially small. However, the layer of variational unitaries in front of the first local decoding step enables one to ``un-do'' these single-qubit operations and again achieve a high LED signal. In particular, one uses here an adaptive procedure, whereby a hybrid quantum-classical feedback loop is used to tune each unitary to optimize LED loop values. 
More generally, variational unitaries in front of subsequent local decoding steps $l$ allow us to compensate for local operations acting on multiple qubits of the system. 
This is a major step towards achieving a necessary and sufficient criterion for topological order, which is not possible using a single, fixed observable such as a bare Wilson loop operator~\cite{huang_provably_2022}.  Moreover,
due to the special hierarchical structure of QCNN and LED circuits, the optimization of the variational unitaries can be done efficiently without encountering the so-called ``barren plateau'' challenges of variational quantum circuits~\cite{mcclean_barren_2018,pesah_absence_2021}. 

Finally, one other advantage of circuit-based LED is that it enables the simultaneous measurement of loop operators in multiple bases in each experimental repetition. This allows us to capture anyonic braiding statistics, which is critical to the application of LED to non-abelian phases. In particular, the final measurement of $s_{ij}$ (Figure~\ref{fig:fig5}d) can be performed by initializing an ancilla qubit in the state $|+\rangle =\frac{1}{\sqrt{2}} (|0\rangle+|1\rangle)$, and applying a controlled operation which, conditioned on the ancilla being in $|1\rangle$, creates anyon pairs $\alpha_i,\overline{\alpha_i}$ and $\alpha_j, \overline{\alpha_j}$, braids $\overline{\alpha_i}$ around $\alpha_j$,  and fuses the pairs $\alpha_i,\overline{\alpha_i}$ and  $\alpha_j, \overline{\alpha_j}$. $s_{ij}$ is then measured in two steps:
First, the magnitude $|s_{ij}|^2$ is equal to the probability of $\alpha_j, \overline{\alpha_j}$ fusing to vacuum when the ancilla is in $|1\rangle$; this probability can be obtained by measuring local energy densities (e.g., by performing stabilizer measurements). Then, when $|s_{ij}|^2 > 0$, we post-select on both $\alpha_i,\overline{\alpha_i}$ and $\alpha_j, \overline{\alpha_j}$ fusing to vacuum and measure the ancilla's final state
\begin{equation}
    |S_{ij}\rangle \propto |0\rangle + s_{ij} |1\rangle
\end{equation}
in an appropriate basis to obtain the phase of $s_{ij}$.

For topological phases described by abelian quantum double models~\cite{kitaev_fault-tolerant_2003}, the quantum circuit and snapshot-based versions of LED can be combined by measuring all qubits in a fixed basis after some chosen depth $d$, and performing snapshot-based LED using the resulting stabilizer measurement values (see Supplementary Information). 
The choice of $d$ is then determined by a tradeoff between the quantum circuit depth/fidelity and the generality of local rotations which can be compensated for.

\subsection*{Topological Order Witness}

\begin{figure}
    \centering
    \includegraphics[width=0.49\textwidth]{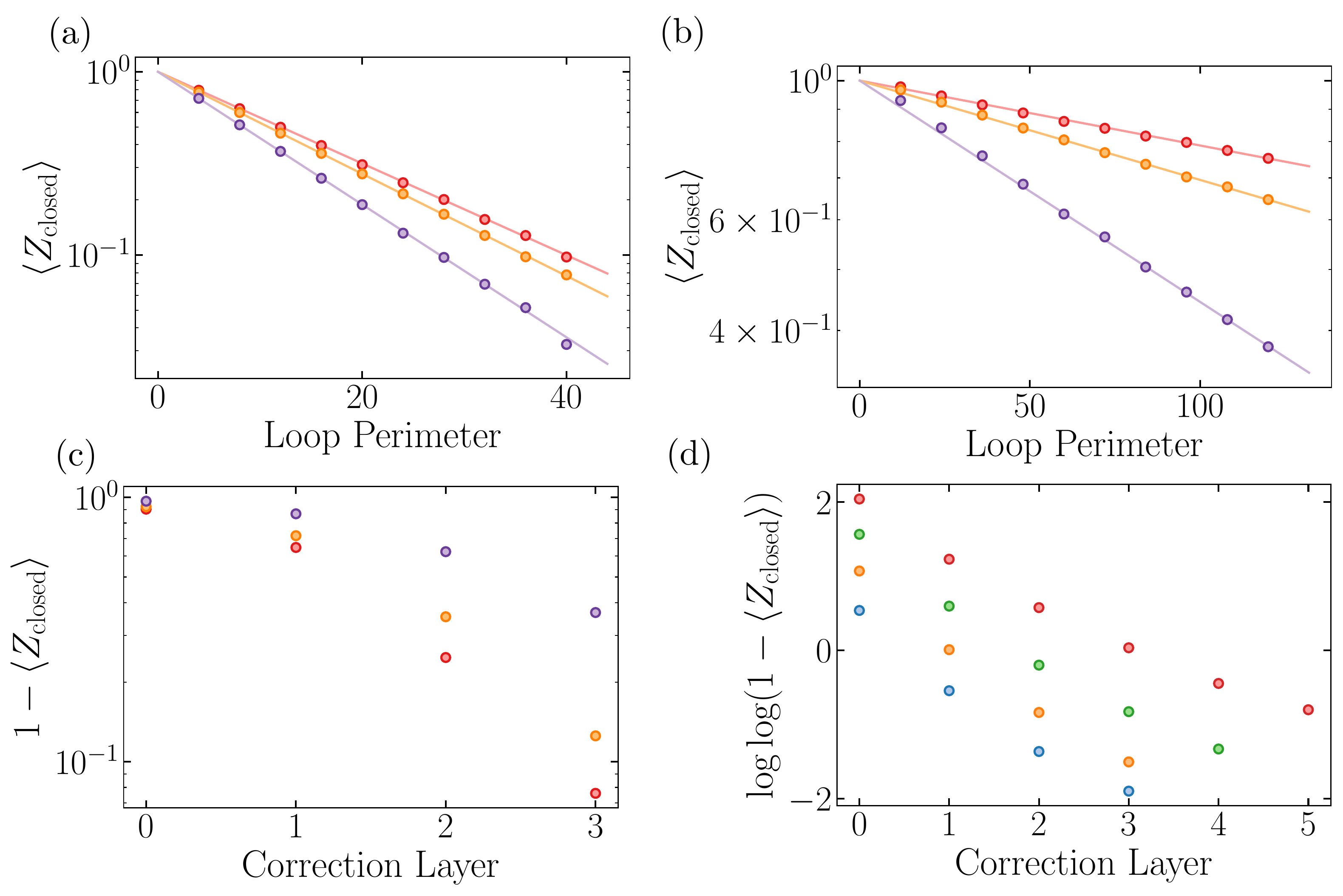}
    \caption{Perimeter-law decay of Wilson loops is clearly visible at various points in the topological phase---(orange) $g_Z=0, g_X=0.18, p_{\mathrm{flip}}=0$, (red) $g_Z=0.18, g_X=0.18, p_{\mathrm{flip}}=0$, (purple) $g_Z=0.10, g_X=0.18, p_{\mathrm{flip}}=0.03$. This is observed for both (a) uncorrected loops and (b) $d=6$ corrected loops under two layers of $d=3$ MWPM patch decoding. (c) LED Wilson loops appear to approach one faster than exponential in $n$. (d) In a model with only incoherent errors ($p_{\mathrm{flip}}=$ $0.02$ (blue), $0.03$ (orange), $0.04$ (green), $0.05$ (red)), we can study the effect of even more layers, where 
    we see hints that the decay is doubly-exponential in $n$, or exponential in $d \sim 2^n$.}
    \label{fig:methods-fig7-plawexpdecay}
\end{figure}

Here, we show that LED provides a topological order witness---that is, it does not misclassify any trivial product state as topological. 
For simplicity, we study the case of $\mathbb{Z}_2$ topological order on a surface with trivial topology, where the fixed-point state is the unique ground state $|\psi_\textrm{TC} \rangle$ of $H_\textrm{TC}$. 
We begin by considering the ideal case where LED operators go to one. 

\begin{theorem}
Let $\rho$ be an arbitrary input state defined on a surface with trivial topology. 
Then, after performing LED with correction distance $d$, assume the resultant state $\rho_f$ has, as a subsystem, qubits living on the links of a square lattice, as in the toric code.
Then, if the stabilizer expectation values $\left\langle \frac{1+A_v}{2} \right\rangle = \left\langle \frac{1+B_p}{2} \right\rangle = 1$ at every vertex $v$ and plaquette $p$ of the subsystem, then, the input state $\rho$ is topologically-ordered, in the sense that it is connected to an output state of the form $\rho_f = \ket{\psi_{\mathrm{TC}}}\bra{\psi_{\mathrm{TC}}} \otimes \alpha_{\mathrm{anc}}$ by generalized local unitary (gLU) transformation of depth $O(d)$. 
\label{thm:localGLU}
\end{theorem}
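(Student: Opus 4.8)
The plan is to combine two ingredients: (i) by construction, the full LED protocol is a generalized local unitary (gLU) transformation of depth $O(d)$; and (ii) on a surface of trivial topology the simultaneous $+1$ eigenspace of the toric-code stabilizers is one-dimensional. The hypothesis that every stabilizer projector has expectation value one then pins $\rho_f$ down to the factorized form $\ket{\psi_{\mathrm{TC}}}\bra{\psi_{\mathrm{TC}}} \otimes \alpha_{\mathrm{anc}}$, and since LED itself is the connecting gLU, this structure is exactly the desired certificate for $\rho$.

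First I would make precise that circuit-based LED acts on $\rho$ together with ancillas initialized in a fixed product state, via local stabilizer-measurement unitaries, conditional qubit flips controlled only on stabilizer values within range $\sim d$, and disentangling coarse-graining unitaries, while retaining the coarse-grained-away qubits inside the ancilla register rather than discarding them. Since each layer consults information only within its own local length scale, and the geometrically growing per-layer ranges $b^i$ sum to $O(d)$, the composite map is a finite-depth local unitary of depth $O(d)$ preceded by product-ancilla insertion, i.e.\ a gLU transformation.

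Next I would invoke the stabilizer hypothesis. Because $\tfrac{1+A_v}{2}$ and $\tfrac{1+B_p}{2}$ are projectors, expectation value one forces $\rho_f$ to be supported entirely in the common $+1$ eigenspace of all $A_v$ and $B_p$. On a trivial-topology surface the relations $\prod_v A_v = \prod_p B_p = \mathbf{1}$ together with Euler's formula leave exactly as many independent stabilizers as qubits, so this codespace is one-dimensional and equal to $\mathrm{span}\,\ket{\psi_{\mathrm{TC}}}$ on the subsystem. Setting $P = \ket{\psi_{\mathrm{TC}}}\bra{\psi_{\mathrm{TC}}}$, the support of $\rho_f$ inside $\mathrm{range}(P)\otimes\mathcal{H}_{\mathrm{anc}}$ with a rank-one first factor forces $\rho_f = (P\otimes\mathbf{1})\,\rho_f\,(P\otimes\mathbf{1}) = \ket{\psi_{\mathrm{TC}}}\bra{\psi_{\mathrm{TC}}}\otimes\alpha_{\mathrm{anc}}$, where $\alpha_{\mathrm{anc}} = \bra{\psi_{\mathrm{TC}}}\rho_f\ket{\psi_{\mathrm{TC}}}$ is automatically a positive, unit-trace ancilla density matrix --- exactly the asserted output form.

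Finally, LED is itself the depth-$O(d)$ gLU transformation connecting $\rho$ to $\rho_f = \ket{\psi_{\mathrm{TC}}}\bra{\psi_{\mathrm{TC}}}\otimes\alpha_{\mathrm{anc}}$, which is precisely the asserted certificate of topological order. Its content as a \emph{witness} then follows from gLU-invariance of long-range entanglement: because $\rho_f$ carries a genuine toric-code factor, no trivial product input could have been mapped to it by a finite-depth local circuit, so $\rho$ cannot have been trivial. I expect the real difficulty to lie in the first step rather than in the linear algebra: one must verify that the classical patch decoder (local MWPM together with pairing aggregation) compiles into a coherent local unitary of depth $O(d)$, and that keeping the coarse-grained qubits inside the ancilla register is consistent --- the latter guaranteed a posteriori, since the purity of the subsystem forced in the third step automatically disentangles those qubits.
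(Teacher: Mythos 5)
Your proposal is correct and follows essentially the same route as the paper: purify the LED channel as a gLU transformation (product-state ancillas plus local stabilizer-measurement, conditional-correction, and coarse-graining unitaries), then use the fact that unit stabilizer expectation values pin the output to the unique toric-code ground state on a trivial-topology surface, forcing the factorized form. Your explicit rank-one projector argument for the tensor factorization of the mixed state $\rho_f$ is a slightly more careful rendering of the paper's statement that the output ``factors into'' $\ket{\psi_{\mathrm{TC}}}\otimes\ket{\phi_{\mathrm{anc}}}$, but it is not a different approach.
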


The key to the proof is a unitary implementation of LED by introducing product state ancillas and performing local unitary gates to perform stabilizer measurement and correction (see Supplementary Information for details). 
These operations, which cannot change the long-range entanglement structure of the state, are known as gLU transformations~\cite{chen_local_2010}, and preserve phase boundaries. Thus, if we further assume the output ancillas $\alpha_{\mathrm{anc}}$ are in a trivial state, Theorem~\ref{thm:localGLU} guarantees the input state is in the toric code phase. 
However, we do not certify this condition holds, which is in general more difficult: measurements in multiple bases are needed to uniquely determine $\alpha_{\mathrm{anc}}$.
Instead, LED certifies that the toric code state can be \textit{distilled} from the input state by gLU transformations. 
Because long-range entanglement cannot be created from a trivial state by gLU transformations~\cite{chen_local_2010}, Theorem~\ref{thm:localGLU} implies that LED operators flowing to unity forms a sufficient condition for topological order, or equivalently, a topological order witness (see also Ref.~\cite{haah_invariant_2016}).

While the above argument works well in theory, any practical system cannot measure LED observables equal to one with infinite precision. Indeed,  
even infinitesimal local perturbations to the toric code ground state, such as $e^{-i \epsilon H} \ket{\psi_\textrm{TC}}$ for arbitrarily small $\epsilon$ and some local Hamiltonian $H$, can create error strings larger than the correction length $d$. This causes LED loop expectation values to decay exponentially, even in the topological phase.
To show that LED still provides a topological order witness in the presence of local perturbations, finite measurement errors, and finite system size, we show the following Theorem:

\begin{theorem}
Consider an arbitrary input state $\rho$ and LED with correction distance $d$, as in Theorem~\ref{thm:localGLU}. 
Suppose the corresponding subsystem of $\rho_f$ has stabilizer expectation values $\left\langle \frac{1+A_v}{2} \right\rangle > 1-\epsilon$, $\left\langle \frac{1+B_p}{2} \right\rangle > 1-\epsilon$ at every vertex $v$ and plaquette $p$. 
Then, the input state $\rho$ exhibits topological ordering at least up to a length-scale $O(\mathscr{L}-d)$; 
that is, no purification of $\rho$ can be prepared using a local quantum circuit of depth less than $O(\mathscr{L}-d)$, where $\mathscr{L} \sim 1/\sqrt{\epsilon}$.
\label{thm:epsilon-witness}
\end{theorem}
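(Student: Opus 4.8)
The plan is to reduce Theorem~\ref{thm:epsilon-witness} to a circuit-depth lower bound on the output state and then transport that bound back to the input using the unitary implementation of LED established in the proof of Theorem~\ref{thm:localGLU}, namely that correction distance $d$ is realized by a gLU transformation $U_{\mathrm{LED}}$ of depth $O(d)$ acting on $\rho$ together with product-state ancillas. Concretely, I would argue by contraposition: suppose some purification $\ket{\Psi_\rho}$ of $\rho$ were preparable from a product state by a local circuit $V$ of depth $D$. Then $U_{\mathrm{LED}}\,(V\ket{0})\ket{0}_{\mathrm{anc}}$ is a pure state whose reduced density matrix on the toric-code subsystem of $\rho_f$---call it $\rho_f^{\mathrm{TC}}$---is exactly the subsystem carrying the stabilizer data, so it is a purification of $\rho_f^{\mathrm{TC}}$ prepared from a product state by a circuit of depth $D+O(d)$. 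If any purification of $\rho_f^{\mathrm{TC}}$ requires depth $\Omega(\mathscr{L})$, then $D+O(d)\geq\Omega(\mathscr{L})$, i.e.\ $D\geq\Omega(\mathscr{L})-O(d)=\Omega(\mathscr{L}-d)$, which is the claim.

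First I would convert the stabilizer bounds into Wilson-loop bounds at scale $\mathscr{L}$. Since all $A_v$ and $B_p$ commute and a contractible $Z$- (resp.\ $X$-) Wilson loop enclosing an area $A$ is the product of the enclosed $B_p$ (resp.\ $A_v$), a first-moment estimate applies: $\langle\frac{1+B_p}{2}\rangle>1-\epsilon$ gives $P(B_p=-1)<\epsilon$, so by a union bound $\langle W\rangle\geq 1-2A\epsilon$, the loop evaluating to $+1$ unless some enclosed stabilizer is violated. Hence every Wilson loop in \emph{both} bases of linear size $\ell$ with $A\sim\ell^2\lesssim 1/\epsilon$, i.e.\ $\ell\lesssim\mathscr{L}\sim 1/\sqrt{\epsilon}$, has expectation value bounded below by a constant; by the anticommutation relation noted in the main text, the conjugate open strings of this size are correspondingly suppressed. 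This certifies the full toric-code Wilson-loop signature up to length-scale $\mathscr{L}$.

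The second and hardest step is the circuit-depth lower bound itself: any purification exhibiting simultaneous near-saturation of $Z$- and $X$-Wilson loops up to scale $\mathscr{L}$ requires preparation depth $\Omega(\mathscr{L})$ from a product state. I would establish this through the light-cone structure of shallow circuits together with Lieb-Robinson bounds, using the principle that a finite-depth circuit cannot generate topological order beyond its causal radius. The mechanism is that a depth-$D$ circuit conjugates a Wilson loop into an operator supported within distance $O(D)$ of the loop; for $D\ll\mathscr{L}$ one can then factorize the loop expectation value across the annulus it bounds into local contributions, which is incompatible with the loop staying $O(1)$ while its anticommuting transverse string vanishes. Equivalently, it forces the topological entanglement (and hence the entanglement-negativity term discussed in the main text) at scale $\mathscr{L}$ to vanish, contradicting the certified signature.

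The main obstacle will be making this last step fully rigorous for \emph{mixed} states, since the statement quantifies over arbitrary purifications of $\rho$ rather than a single pure state: the purifying environment must be controlled so that the depth bound cannot leak through it. I expect this to require phrasing the topological-entanglement-to-depth argument directly in terms of the certified loop and string expectation values on the reduced state $\rho_f^{\mathrm{TC}}$, rather than on any particular pure representative, so that it applies uniformly to every purification. Once this lower bound is in hand, the remaining pieces---the union-bound estimate above and the composition with the $O(d)$-depth LED circuit---are routine.
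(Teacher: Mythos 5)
Your overall architecture matches the paper's: the reduction by contraposition through the $O(d)$-depth gLU implementation of LED is exactly how the paper passes from a depth lower bound on the output state to the $O(\mathscr{L}-d)$ bound on the input, and your union-bound conversion of stabilizer expectations into Wilson-loop expectations at scale $\ell \lesssim 1/\sqrt{\epsilon}$ is the same first-moment estimate the paper uses in its Lemma~3 (which additionally packages the closed-loop bound together with an uncertainty-relation suppression of open strings into a trace-distance statement about the reduced density matrix on an $\mathscr{L}\times\mathscr{L}$ region).

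The gap is in your third step, which is where all the content lies. The mechanism you describe --- a depth-$D$ circuit conjugates the loop into an $O(D)$-thickened operator, and this ``is incompatible with the loop staying $O(1)$ while its anticommuting transverse string vanishes'' --- is not an obstruction at all: the product state $\ket{0}^{\otimes n}$ has every contractible $Z$-Wilson loop equal to $+1$ and every transverse $X$-string equal to $0$, at depth zero. What a shallow circuit provably cannot produce is the \emph{simultaneous} near-saturation of a $Z$-loop and an $X$-loop supported on two overlapping annuli, and the invariant that detects this is their twist product $A \infty B = -AB$, the $-1$ arising from the single local anticommutation at one of the two intersection regions. The paper's Lemma~4 follows Haah's argument: it defines an approximate notion of $(\Delta,r,t)$-local invisibility, uses Lemma~3 to show the near-stabilizing Wilson loops are approximately locally invisible with respect to $\rho_f$, proves that the twist product of locally invisible operators must factorize up to $O(\sqrt{\Delta R/r})$ on any state reachable from a product state by a shallow circuit, and then observes that on $\rho_f$ the twist product is $\approx -1$ while the product of the individual expectations is $\approx +1$ --- a gap of $\approx 2$ that survives the $\epsilon$ errors precisely when $\mathscr{L}\lesssim 1/\sqrt{\epsilon}$. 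Your proposal never identifies this invariant, and the Lieb-Robinson/factorization route you sketch would need it (or an equivalent) to go through; the ``hardest step'' you flag is genuinely missing, and the specific incompatibility you assert in its place is false.
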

Our proof of Theorem~\ref{thm:epsilon-witness} hinges on the following two Lemmas, proved in the supplement. 

\begin{lemma}
Given an output state $\rho_f$ satisfying the conditions of Theorem~\ref{thm:epsilon-witness}, and a simply connected $(\mathscr{L}-2) \times (\mathscr{L}-2)$ square region $R$ on the system part, the reduced density matrix $\rho_d = \tr_{R^c}[\rho_f]$ is indistinguishable from the toric code reduced density matrix $\sigma_{\mathrm{TC}} = \tr_{R^c}[\ket{\psi_{\mathrm{TC}}}\bra{\psi_{\mathrm{TC}}}]$ defined on the same region, up to the bound $|| \rho_d - \sigma_{\mathrm{TC}} || \leq \max\left(\sqrt{\epsilon}, 2 \mathscr{L}^2 \epsilon \right)$
\label{thm:many-body-fidelity}
\end{lemma}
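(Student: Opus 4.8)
The plan is to exploit the fact that, because the surface has trivial topology, the toric-code stabilizer group has a \emph{unique} joint $+1$ eigenstate, and more locally, the reduced state $\sigma_{\mathrm{TC}}$ on any simply connected region is completely fixed by the stabilizers supported in a neighborhood of that region. Concretely, I would first record the standard stabilizer-state identity $\sigma_{\mathrm{TC}} = 2^{-|R|}\sum_{S \in \mathcal{S}_R} S$, where $\mathcal{S}_R$ is the group generated by the $A_v, B_p$ fully supported in $R$; equivalently $\sigma_{\mathrm{TC}} = P/\tr(P)$ is the maximally mixed state on the code space, with $P$ the projector onto the simultaneous $+1$ eigenspace. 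The goal then becomes showing that the hypotheses $\langle (1+A_v)/2\rangle, \langle(1+B_p)/2\rangle > 1-\epsilon$ force $\rho_d$ close to this maximally mixed code state.

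I would split the deviation into an \emph{incoherent} (syndrome-diagonal) piece and a \emph{coherent} (off-diagonal) piece. For the incoherent piece, I would use the operator inequality $1 - \prod_i \tfrac{1+S_i}{2} \le \sum_i \tfrac{1-S_i}{2}$ valid for commuting stabilizers, together with a union bound over the $O(\mathscr{L}^2)$ generators in (a neighborhood of) $R$, to conclude $\tr\!\big(\rho_f (1-P)\big) \le 2\mathscr{L}^2 \epsilon$. This shows $\rho_d$ places all but $O(\mathscr{L}^2\epsilon)$ of its weight on the code space and produces the $2\mathscr{L}^2\epsilon$ branch. For the coherent piece, the key point is that any Pauli operator on $R$ with nonzero expectation in $\rho_d$ but vanishing expectation in $\sigma_{\mathrm{TC}}$ must anticommute with some stabilizer $S$ obeying $\langle S\rangle_{\rho_f}>1-2\epsilon$: interior loop operators reduce to products of stabilizers in $\mathcal{S}_R$ (trivial topology, so no genuine logical operators exist), while relative-cycle operators such as open strings ending on $\partial R$ are detected by the \emph{collar} stabilizers straddling the boundary -- which is precisely why the hypothesis must hold \emph{everywhere}, not only inside $R$. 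A standard coherence estimate -- if $\{S,O\}=0$ and $\langle (1-S)/2\rangle<\epsilon$, then $|\langle O\rangle_{\rho_d}|\le 2\sqrt{\epsilon}$ by Cauchy--Schwarz across the two eigenspaces of $S$ -- then pins each such operator to within $\sqrt\epsilon$ and yields the $\sqrt\epsilon$ branch.

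I would then combine the two estimates by decomposing $\rho_d = P\rho_d P + (P\rho_d Q + Q\rho_d P) + Q\rho_d Q$ with $Q=1-P$, bounding $\|Q\rho_d Q\|_1 = \tr(\rho_d Q)\le 2\mathscr{L}^2\epsilon$ directly, and identifying $P\rho_d P$ (after normalization) with a code-space state whose logical expectations are all $O(\sqrt\epsilon)$ by the coherence argument, hence close to the maximally mixed $\sigma_{\mathrm{TC}}$. Taking the larger of the incoherent and coherent contributions gives $\max(\sqrt\epsilon, 2\mathscr{L}^2\epsilon)$.

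The hard part will be the coherent piece, specifically preventing the $O(\mathscr{L})$ boundary degrees of freedom from accumulating. A naive gentle-measurement bound on the cross terms, $\|P\rho_d Q\|_1 \le \sqrt{\tr(\rho_d Q)}$, only yields $O(\mathscr{L}\sqrt\epsilon)$, which is weaker than the claimed $\max(\sqrt\epsilon,2\mathscr{L}^2\epsilon)$ in the regime $1 \ll \mathscr{L} \ll 1/\sqrt{\epsilon}$, and summing the per-operator $\sqrt\epsilon$ coherence bounds over all $2^{O(\mathscr{L})}$ logical Paulis in trace norm is hopeless. The structural input that rescues the sharp bound is that, on a trivially topological surface, the collar stabilizers leave \emph{no} uncontrolled logical freedom localizable in $R$; one must convert this exact statement into a quantitative, $\epsilon$-robust form -- e.g.\ by showing that the syndrome-diagonal part of $\rho_d$ already coincides with $\sigma_{\mathrm{TC}}$ up to $O(\mathscr{L}^2\epsilon)$ in total variation, while the residual coherences contribute at most a single $O(\sqrt\epsilon)$ factor because nonzero coherences can only connect syndrome sectors differing by one nearly-satisfied stabilizer. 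Making this final step rigorous, rather than incurring a region-size penalty, is where the main effort lies.
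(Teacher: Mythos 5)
Your proposal follows essentially the same route as the paper's proof: the $2\mathscr{L}^2\epsilon$ branch comes from a union bound over the $O(\mathscr{L}^2)$ commuting stabilizer projectors (controlling products of stabilizers, i.e.\ the closed loops / code-space weight), and the $\sqrt{\epsilon}$ branch comes from the anticommutation--uncertainty estimate $\langle S\rangle^2+\langle O\rangle^2\le 1$ applied to Pauli strings detected by a nearly-satisfied stabilizer. The combination step you flag as the remaining difficulty---passing from per-Pauli-string bounds to a trace-norm bound on arbitrary unit-norm observables without incurring an $O(\mathscr{L})$ or exponential penalty---is precisely the step the paper's own proof leaves implicit (it simply writes ``combining these two results''), so your attempt matches the paper's argument in both substance and level of rigor.
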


\begin{lemma}
Consider an input state $\rho$ and an LED procedure satisfying the conditions of Theorem~\ref{thm:epsilon-witness}.
Then the final state $\rho_f$ after LED cannot be prepared using a local quantum circuit with depth less than $O(\mathscr{L}) \sim O(1/\sqrt{\epsilon})$.
\label{thm:depth-after-LED}
\end{lemma}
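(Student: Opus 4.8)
The plan is to argue by contradiction using a light-cone (Lieb--Robinson-type) bound together with the local indistinguishability supplied by Lemma~\ref{thm:many-body-fidelity}. Suppose some purification $\ket{\Psi}$ of $\rho_f$ on system-plus-ancilla could be produced from a product state by a geometrically local circuit $U$ of depth $D$. The feature I will exploit is that under such a circuit the backward light cone of any region has width $O(D)$: if two system regions are separated by distance greater than $2D$, their reduced state in $\ket{\Psi}$ (and hence in $\rho_f$) factorizes, so no correlations can be established across distances much larger than $D$.

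First I would fix a Kitaev--Preskill tripartition $A\cup B\cup C$ forming a disk that fits inside a simply connected $(\mathscr{L}-2)\times(\mathscr{L}-2)$ region $R$, with each part and each interface of linear size $\Theta(\mathscr{L})$, and consider the topological combination
\begin{equation}
S_{\mathrm{topo}}(\rho_f) = S_A + S_B + S_C - S_{AB} - S_{BC} - S_{CA} + S_{ABC},
\end{equation}
where every entropy is the von Neumann entropy of a reduced density matrix of $\rho_f$ supported inside $R$. For the exact toric code this combination equals $-\log 2$. The second ingredient is that $S_{\mathrm{topo}}$ vanishes for any product-state-circuit output whenever $D$ is smaller than the minimal region/interface width: rewriting the combination through conditional mutual informations and invoking the light-cone factorization, every surviving term pairs subregions whose light cones do not overlap, and these cancel identically. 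Hence $D < c\mathscr{L}$ would force $S_{\mathrm{topo}}(\rho_f) = 0$.

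To close the contradiction I would transport the toric-code value of $S_{\mathrm{topo}}$ onto $\rho_f$ via Lemma~\ref{thm:many-body-fidelity}: since every reduced density matrix entering the combination lives inside $R$ and is within $\delta = \max(\sqrt\epsilon,\,2\mathscr{L}^2\epsilon)$ of the corresponding toric-code reduced state in operator norm, a continuity bound for the entropy (Fannes--Audenaert) controls $|S_{\mathrm{topo}}(\rho_f) + \log 2|$ by $\delta$ times the total interface size. Because the area-law pieces cancel in $S_{\mathrm{topo}}$, the residual error should scale with the interface length $\Theta(\mathscr{L})$ rather than the area; choosing $\mathscr{L}\sim 1/\sqrt\epsilon$ is precisely what keeps this product below $\log 2$, leaving $S_{\mathrm{topo}}(\rho_f)$ bounded away from zero. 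Combined with the previous paragraph, $S_{\mathrm{topo}}(\rho_f)\neq 0$ rules out $D < c\mathscr{L}$, giving the depth lower bound $D = \Omega(\mathscr{L}) \sim \Omega(1/\sqrt\epsilon)$.

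The hard part will be the quantitative continuity step. Applied term-by-term, Fannes naively yields a correction proportional to $\delta$ times the subregion \emph{area} $\Theta(\mathscr{L}^2)$, which does not stay small at the scale $\mathscr{L}\sim 1/\sqrt\epsilon$; the argument survives only because the topological combination is engineered so that the extensive boundary contributions cancel and one is left tracking $\delta$ against the \emph{interface} length. Making this cancellation rigorous for \emph{approximate} rather than exact reduced states---i.e. showing the accumulated error genuinely inherits boundary-law rather than area-law scaling---is the technical crux, and is where the certified length-scale $\mathscr{L}\sim 1/\sqrt\epsilon$ gets pinned down. An equivalent route replaces $S_{\mathrm{topo}}$ by a pair of anticommuting loop/string operators on an annular subregion of $R$ whose near-extremal expectation values cannot simultaneously be reproduced by a circuit of depth much smaller than $\mathscr{L}$; I would keep the entropic formulation as primary, since its light-cone vanishing is the cleanest to state.
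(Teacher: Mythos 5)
Your overall strategy (exhibit an order-one topological invariant of $\rho_f$ that must vanish for any depth-$D$ output state with $D\ll\mathscr{L}$) is sound in outline and is genuinely different from the paper's route, which never touches entropies: the paper follows Haah's twist-product argument, taking a pair of anticommuting $Z$- and $X$-Wilson loops $A,B$ on overlapping annuli, showing $|\langle A\infty B\rangle-\langle A\rangle\langle B\rangle|\geq 2-c\mathscr{L}^2\epsilon$ directly from the near-unit stabilizer expectation values, and proving via a notion of \emph{approximate local invisibility} that this quantity must be $O(\sqrt{\Delta\mathscr{L}/r})$ for any state generated by a circuit of depth $\lesssim r$. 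The crucial structural advantage of that route is that every error term is a trace distance of reduced states multiplied by an operator of unit norm, so no factor of the Hilbert-space dimension of the regions ever appears; this is exactly what lets the bound survive at $\mathscr{L}\sim 1/\sqrt{\epsilon}$. (It is essentially the "equivalent route" you mention in your last sentence and then set aside.)

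The gap in your primary, entropic formulation is the continuity step, and you have correctly located it but not closed it --- and I do not believe it can be closed as stated. Lemma~\ref{thm:many-body-fidelity} gives trace-distance control $\delta=\max(\sqrt{\epsilon},2\mathscr{L}^2\epsilon)$ on reduced density matrices, and Fannes--Audenaert converts this into an entropy error of order $\delta\cdot N_X$ for a region $X$ of $N_X$ qubits; this bound is essentially tight, and the deviations $\rho_f|_X-\sigma_{\mathrm{TC}}|_X$ on the seven different regions are seven \emph{independent} operators about which Lemma~\ref{thm:many-body-fidelity} gives no joint structure, so there is no mechanism forcing their entropy errors to cancel in the alternating sum. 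The cancellation of area-law pieces in $S_{\mathrm{topo}}$ is a property of the exact toric-code entropies, not of worst-case continuity errors. Quantitatively: to obtain a depth bound $D=\Omega(\mathscr{L})$ your light-cone argument needs regions of width $\Omega(\mathscr{L})$, hence $N_X=\Omega(\mathscr{L}^2)$, and the accumulated error is $\Omega(\delta\mathscr{L}^2)=\Omega(\sqrt{\epsilon}\,\mathscr{L}^2)=\Omega(\mathscr{L})$ at the target scale $\mathscr{L}\sim1/\sqrt{\epsilon}$ --- far larger than the $\log 2$ signal. Shrinking the regions to strips of width $O(1)$ tames the continuity error but collapses the light-cone argument to $D=O(1)$. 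Unless you can upgrade Lemma~\ref{thm:many-body-fidelity} to a statement with genuinely boundary-law error structure (which the paper's stabilizer-projector argument does not provide), you should switch to the operator-based formulation, where the required inequality $|\langle A\infty B\rangle-\langle A\rangle\langle B\rangle|>0$ follows from the trace-distance bound with no dimension factors.
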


Upon combining the result of Lemma~\ref{thm:depth-after-LED} with the fact that our LED procedure corresponds to a local quantum circuit with depth $O(d)$, we find that the original input state $|\psi \rangle$ cannot be prepared using a quantum circuit of depth smaller than $O(\mathscr{L}-d)$---which is precisely the statement of Theorem~\ref{thm:epsilon-witness}. 
So, if we measure loops of length $L \gg d$ to be $1-\epsilon$, this shows that LED provides a topological order witness up to length-scales of $O(L/\sqrt{\epsilon})$.

We now discuss how these theoretical results are reflected in our numerical simulations. First, when fluctuations are local, the probability of having an error string of length $\ell$ decays exponentially with $\ell$, and the exponent is determined by the characteristic length-scale $\xi$ of fluctuations. 
In these systems, we expect the error rate after an optimal LED procedure with correction distance $d$ to be given by $\epsilon(d) \propto e^{-d/\xi}$, so correction distance $d = \Omega(\xi \log \mathscr{L})$ is sufficient to certify topological order up to length-scale $\mathscr{L}$. Second, when LED uses the hierarchical, anyon-pairing decoder, the anyon density is observed to decrease faster than exponentially in the number $n$ of LED steps (Figure~\ref{fig:methods-fig7-plawexpdecay}). In this case, both the measured stabilizer size and the correction distance $d$ grow exponentially with $n$, which implies that the certification length-scale $\mathscr{L}$ grows at least exponentially with $n$ as well.
Third, our argument does not certify topological order to any length-scale when $L < d$; this is because the support of such an LED operator no longer has an interior, potentially giving rise to signal even in the trivial phase. Indeed, this is reflected in our numerics as well (Figure~\ref{fig:methods-fig5-overcorrection}). 

\subsection{Connection to Topological Entanglement Negativity}

The entanglement negativity of a mixed state $\rho_S$ is defined as $S_N(\rho) = \log ||\rho||_1 = \log(\sum \lambda_i)$, where $\lambda_i$ are the eigenvalues of $\rho$.
Prior works have shown, via a combination of analytical arguments and numerical results, that in a topological phase, $S_N$ obeys an area-law with a constant correction, i.e. $S_N = \alpha L - \gamma$.
Further, recent results have also shown that the topological term $\gamma$ vanishes at finite-temperature~\cite{Lu_negativity_2022}, or for high incoherent error rates~\cite{fan2023diagnostics}.
Thus, the negativity appears to capture important features of mixed state topological order.

The unitary circuit construction of LED also enables us to connect a positive classification under LED, to the topological entanglement negativity of the input state.
In particular, theorem~\ref{thm:localGLU} implies that states classified as topological are connected to an output state $\rho_f = \ket{\psi_{\mathrm{TC}}}\bra{\psi_{\mathrm{TC}}} \otimes \alpha_{\mathrm{anc}}$ via local unitary circuits.
If we further assume the ancillas contain no long-range order (see SM for rigorous definition), then since $\ket{\psi_{\mathrm{TC}}}$ is topologically ordered, the output state indeed has a topological correction in the entanglement negativity.
It is further believed that $\gamma$ is a topological invariant, i.e. it should remain invariant under local unitary circuits. As such, this should be sufficient to certify the input state $\rho_S$ has topological order.

We show this in the SM, for the special case where the LED circuit is composed of Clifford gates, by extending the stabilizer formalism introduced in Ref.~\cite{Lu_negativity_2022}.
Interestingly, there, the topological correction to $S_N$ comes from the presence of decorated Wilson loops operators with non-trivial twist product in the input state $\rho_S$ (see also proof of Lemma~\ref{thm:depth-after-LED}). Thus we conjecture a connection to topological entanglement negativity holds for LED Wilson loops more generally.

\begin{figure}
    \centering
    \includegraphics[width=0.49\textwidth]{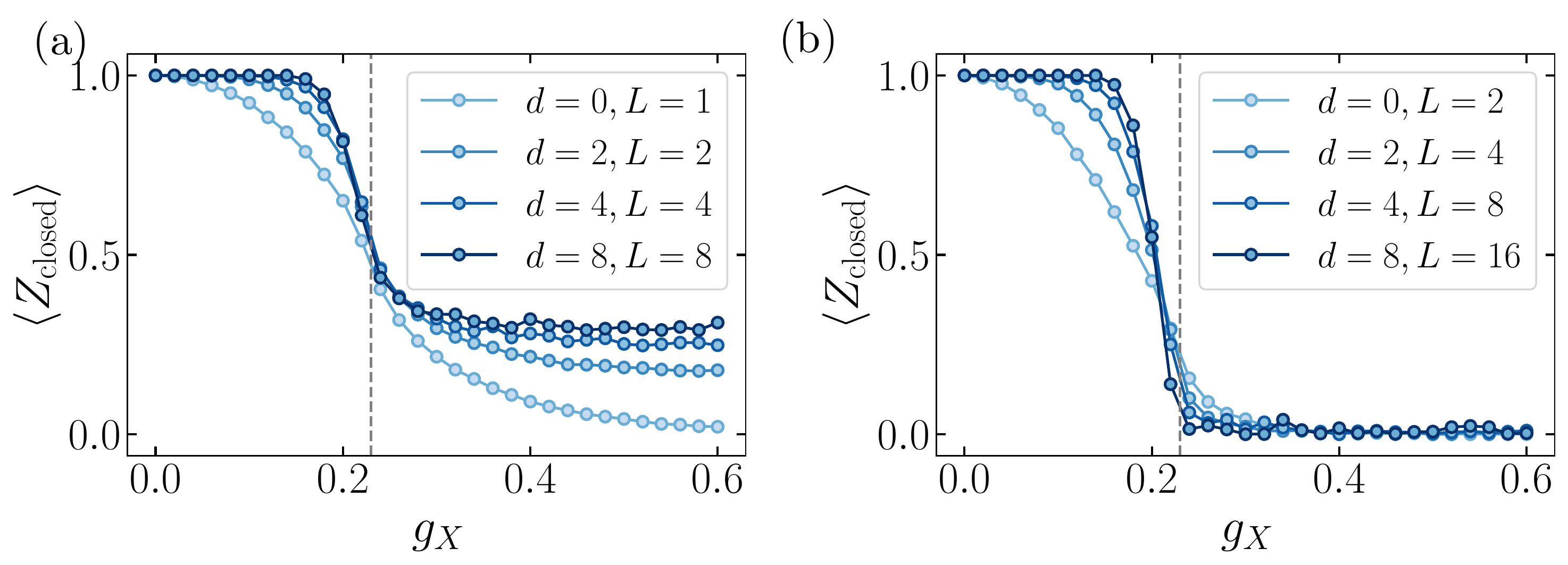}
    \caption{Over-correction for small Wilson loops. (a) Small LED loops with $d>L$ can give non-zero signal deep in the trivial phase, as in this regime, correction can pump anyons from the interior of the annulus to the exterior. (b) However, once $d < L$, the effects of over-correction become insignificant. LED Wilson loops are theoretically expected to certify topological order in the regime of $d \ll L$.}
    \label{fig:methods-fig5-overcorrection}
\end{figure}

\section{Supplementary Information}


\subsection{PEPS Sampling Contraction Details}

We start by computing the left boundary MPS of an infinite strip. For sites with $y \leq 0$, we average over measurement outcomes, and hence the doubled PEPS tensor for each site, which contains both the bra and the ket tensors, is $T_{i,j,k,l}^{i',j',k',l'} = \sum_{pq} (A^*)_{i',j',k',l'}^{pq} A_{i,j,k,l}^{pq}$.
The boundary conditions we choose are $\ket{+}_i \ket{+}_{i'}$ at the lower $x=1$ boundary and $\delta_{i,i'}$ at the $x=L_x$ boundary.
Contracting the doubled tensor along an entire column $x=1,...,L_x$ results in a matrix product operator (MPO) $T_y$ acting on the boundary MPS. 
Then, we can compute the effect of an infinite environment by repeatedly applying $T_y$ to some initial boundary MPS until it converges. The right boundary MPS can be similarly computed, by exchanging the input and output directions of the MPO.
Note that at each application of $T_y$, we use singular-value decomposition truncation to prevent the bond dimension of the boundary MPS from growing too rapidly, rendering the method approximate. However, only singular values smaller than $<10^{-8}$ are discarded, so truncation errors should be insignificant. 
We refer to the resulting tensors as the left and right fixed-point of $T_y$.

The algorithm continues by using the left and right boundary MPS to sample a column of sites. Computing the marginal probability $P(ab)$ on the first site requires contracting a 1D tensor network (see Figure~\ref{fig:methods-fig1-PEPS}), where the doubled tensor at $x,y=1,1$ is replaced by a measurement-dependent one $T(ab)_{i,j,k,l}^{i',j',k',l'} =  (A^*)_{i',j',k',l'}^{ab} A_{i,j,k,l}^{ab}$, while the doubled tensor at sites $x > 1$ remain measurement-independent.
Note that the tensor network outputs an unnormalized probability distribution; however, since there are only four states per site, the normalization can be computed with little overhead.
After drawing a sample $ab_{11}$ for the first site, we replace the doubled tensor at $(x,y)=(1,1)$ by $T(ab_{11})$, and then compute the distribution of the second site. The process is repeated until the entire column is sampled.
To minimize repeat 1D contractions, upper and lower environment tensors can be stored and updated during the sweep $x=1,...,L_x$.

Next, the history of samples along the column are used to construct a measurement-dependent MPO $T_y(ab_{1},ab_{2},...,ab_{L_x})$, where the doubled tensor at each site is replaced by a measurement-dependent one.
Then, the left boundary MPS can be updated by contraction with $T_y(\{ab_i\})$, and the process repeated for the second column.
Thus, the left boundary MPS keeps track of the effect of past measurements on future measurements, as the algorithm sweeps from left to right. Meanwhile, the right boundary MPS remains unchanged, as it is modelling a static, infinite environment.

\subsection{Calculation of Phase Diagram}

To compute the phase diagram, we use the PEPS tensors to construct a transfer matrix with periodic boundary conditions, on small cylinders with circumference $L_y$ measured in unit cells.
The largest few eigenvalues of the transfer matrix can be efficiently computed using Krylov algorithms in this regime, and the degeneracy of the largest eigenvalue serves as an alternative signature of the topological transition~\cite{iqbal_entanglement_2021,duivenvoorden_entanglement_2017}.
In particular, the local $\mathbb{Z}_2$ gauge symmetry of the PEPS tensor becomes a $\mathbb{Z}_2 \times \mathbb{Z}_2$ symmetry of the doubled tensor (bra and ket). Hence the topological, $e$-condensate ($Z$-paramagnet), and $m$-condensate ($X$-paramagnet) correspond to three distinct symmetry broken phases from the point of view of the virtual legs, with degeneracy two, one, and four respectively.

As such, along the transition from topological to $e$-condensate, which occurs for $g_X$ small and $g_Z \approx 0.2-0.3$, the relevant ratio is between the first and second eigenvalues, $\Delta_{12}(g_Z, g_X) = \log(\lambda_1 / \lambda_2)$.
In contrast, for the transition from topological to $m$-paramagnet, the relevant gap is between the first and third eigenvalues $\Delta_{13}(g_Z, g_X)$.
Furthermore, the model is self-dual, so the wavefunction at $\ket{\psi(g_z, g_x)}$ is equivalent to the wavefunction at $\ket{\psi(g_x, g_z)}$ by a basis rotation and spatial translation. 
We will use this duality to compute the phase boundary in a way which minimizes finite-size effects, by computing the two gaps at their dual points (Fig~\ref{fig:supp_phase_boundary}a).
Therefore, we introduce two parameters $g_0,g_1$, and let $g_1$ be the parameter which changes across the transition. The relevant parameter is therefore  $\Delta_{12}(g_1, g_0)$ for the topological to $e$-condensate transition, and $\Delta_{13}(g_0, g_1)$ for the topological to $m$-condensate transition.

As $\Delta_{12}$ grows with increasing $g_1$, while $\Delta_{13}$ reduces with increasing $g_1$, these two ratios will eventually cross. Indeed, in the limit $L_y \rightarrow \infty$, the crossing point should exactly correspond to the phase boundary. In general, finite $L_y$ may shift the boundary.
Empirically, we see that at $g_0=0$, the solvable point, there are essentially no finite size effects, and the agreement with the analytical value $g_c = 0.2203434$ is almost exact (Fig.~\ref{fig:supp_phase_boundary}ac). For larger $g_0$, the dependence on $L_y$ appears minimal until around $g_0=0.01$: indeed, even in this case, $L_y=4$ only overestimates the phase boundary compared to $L_y=6$ by a few percent (Fig.~\ref{fig:supp_phase_boundary}d).
As such, we use $L_y=4$ and compute the phase boundary by identifying points where the difference $\Delta_{12}-\Delta_{13}$ is close to zero (Fig~\ref{fig:supp_phase_boundary}b).

\begin{figure}
    \centering
    \includegraphics[width=0.49\textwidth]{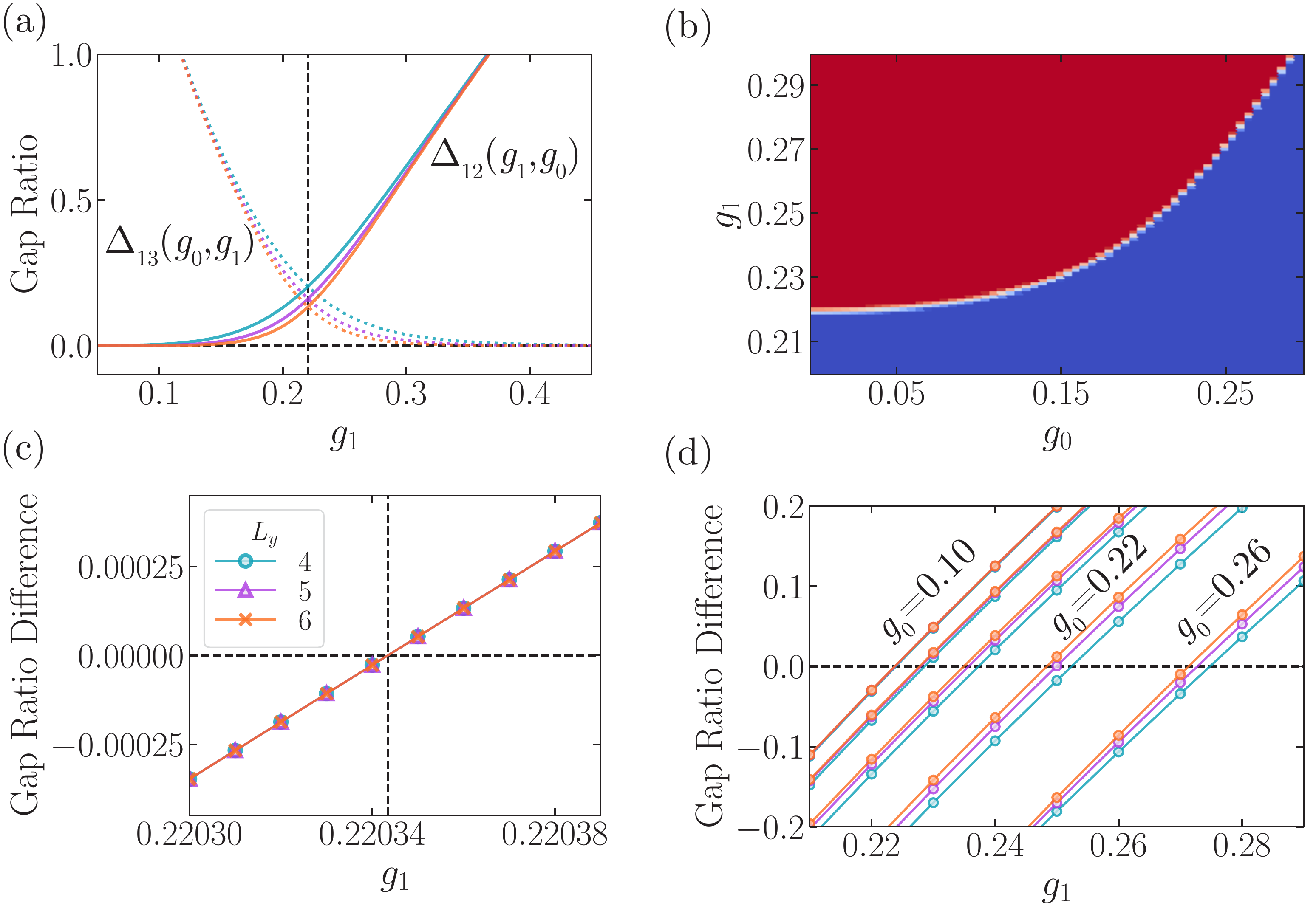}
    \caption{(a) For a given point $(g_0,g_1)$ on the phase diagram, the phase transition from topological to $e$-condensate is signaled by the transfer matrix gap ratio $\Delta_{12}(g_0, g_1)$, while the transition from topological to $m$-condensate is signaled by $\Delta_{13}(g_1, g_0)$, where both quantities are evaluated at the self-dual point with $g_z \leftrightarrow g_x$ switched. When these two gaps are equal, the system must be at the critical point, due to the self-duality. We observe that the crossing point coincides exactly with the known critical point $g_c$ (vertical dotted line), for $L_y=4,5,6$ (blue, purple, orange resp.). (b) Quatitatively, we extract the phase boundary using $L_y=4$ by finding points where the difference $|\Delta_{12}(g_0,g_1) - \Delta_{13}(g_1,g_0)| \leq 0.01$ is close to zero. (c) Along the cut of the phase diagram with $g_0$ small, the phase boundary computed from the difference has no visible finite size effects. (d) For larger $g_0 \geq 0.1$, small finite size effects start to appear. Shown are $g_0=0.1,0.14, 0.18, 0.22, 0.26$ from left to right. The phase boundary clearly shifts to larger $g_1$ with increasing $g_0$, but $L_y=4$ (blue) appears to slightly overestimate the transition point compared to $L_y=6$ (orange).}
    \label{fig:supp_phase_boundary}
\end{figure}

\subsection{Sample Complexity}

\begin{figure*}
    \centering
    \includegraphics[width=0.98\textwidth]{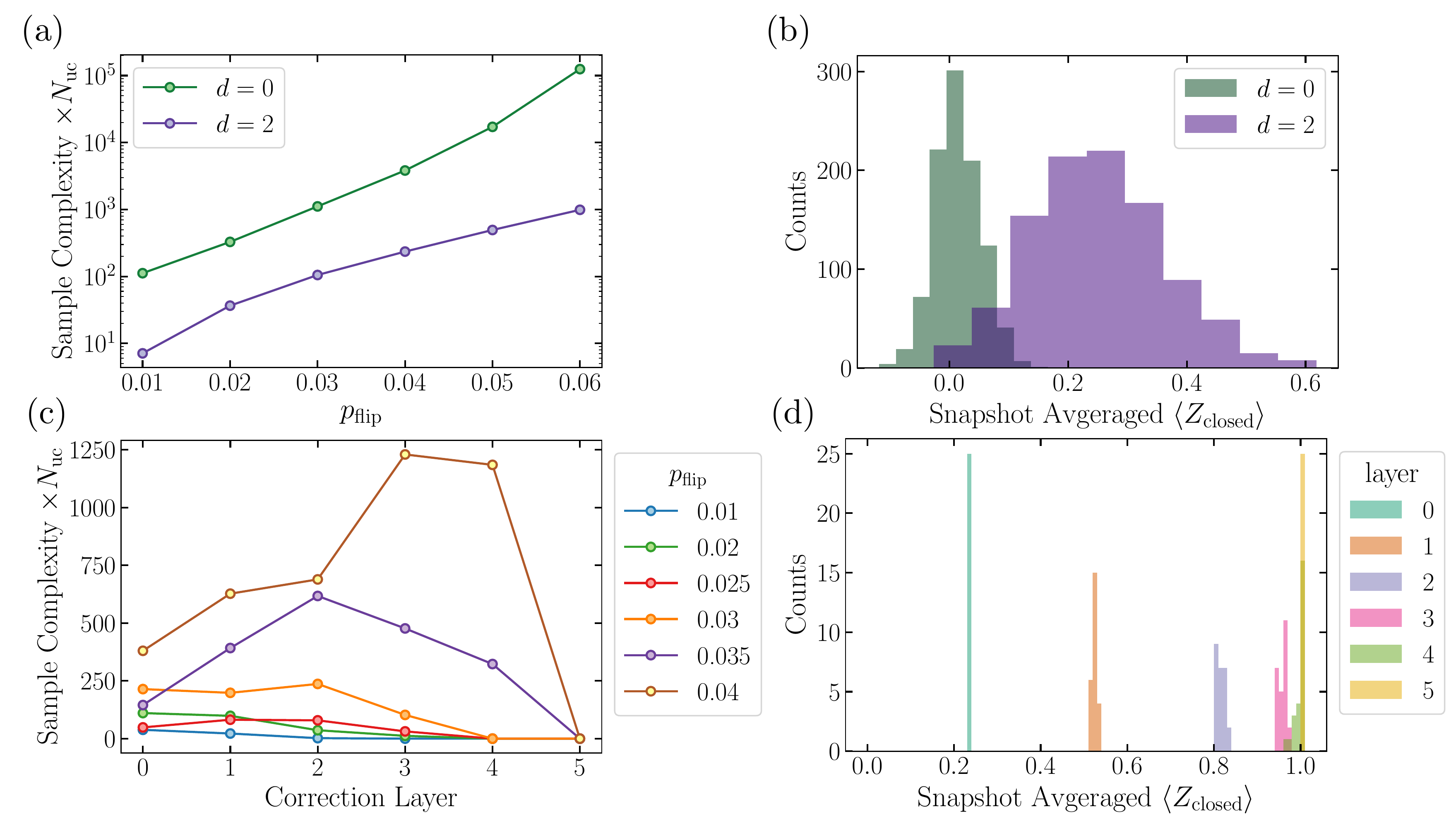}
    \caption{Sample complexity. (a) Effect of LED on loops of fixed size $L=10$. Snapshots are drawn from a toric code defined on a $25 \times 25$ square lattice with various bit-flip error rates. Sample complexity is computed as described in the main text. Note that sample complexities less than one imply that only a fraction of the system needs to be measured to certify topological order with 95\% confidence. As such, the $y$-axis is also multiplied by the number of unit cells $N_{\mathrm{uc}}$.
    (b) The histogram of expectation values averaged over a single snapshot confirms that these expectation values are approximately Gaussian distributed. The distribution for LED loops (purple) has much lower weight at zero than uncorrected loops (green), highlighting how fewer samples are needed to verify with high confidence that the closed loops are non-zero. (c,d) We also study the effect of coarse-graining on sample complexity, where the loop length $L=5 \times 2^n$ grows exponentially with correction layer $n$. Here, snapshots are taken from a $1024\times 1024$ square lattice.  Initially, the sample complexity increases due to a reduction in the number of independent loops available at higher layers. However, it eventually reduces and approaches zero in the topological phase. (d) This turnaround occurs in the limit $\xi \ll d$, where correction is able to remove almost all errors, and loop expectation values approach one. 
    }
    \label{fig:methods-fig6-sampcomp}
\end{figure*}

A key figure of merit for certification of phases is the sample complexity, defined here as the number of samples required to confirm with 95\% confidence that the measured loop operator is non-zero.
To compute sample complexity, we approximate the LED Wilson loop expectation value evaluated on a large but finite size system as a Gaussian random variable. In this scenario, the important quantity is the ratio of the standard deviation $\sigma$ to the mean $\langle Z_{\mathrm{closed}} \rangle$. 
The estimator of the expectation value has a standard deviation that decreases as $1/\sqrt{S}$ where $S$ is the number of samples.
Thus, to confirm the mean is non-zero to two standard deviations (95\% confidence), we require approximately $S = (2\sigma / \langle Z_{\mathrm{closed}} \rangle)^2$ samples.

In Figure~\ref{fig:methods-fig6-sampcomp}, we compare the sample complexity of certifying non-zero Wilson loops using bare and LED observables.
Two scenarios are considered. In the first, bare and LED Wilson loops are compared at fixed length-scale. Indeed, the sample complexity decreases by an order of magnitude for a range of incoherent error rates below the correction threshold.
In the second scenario, coarse-graining is considered, where larger length-scales are probed at each layer.
Here, the sample complexity in fact increases at early layers for moderate error rates before falling dramatically. 
This is because the variance of the signal initially increases, since coarse-graining reduces the number of loops available for averaging in a fixed size system. 
However, for sufficient correction layers, LED reliably removes almost all errors. This is the regime where $\xi \ll d$, so the Wilson loops saturate at one and their variance approaches zero (see histograms in Figure~\ref{fig:methods-fig6-sampcomp}).

Nevertheless, the initial increase in sample complexity is not simply due to information being removed. 
Upon further examination of the scenario
without coarse-graining, we find a similar initial increase in sample complexity. We interpret this as correction causing adjacent LED loops to become correlated.
Finally, we note that the sample complexity measured in this way only improves in the topological phase. In the uncorrectable, disordered phase, the inverse ratio $\langle Z_{\mathrm{closed}} \rangle / \sigma$ rapidly approaches zero.

\subsection{Decoder dependence}

\begin{figure*}
    \centering
    \includegraphics[width=0.85\textwidth]{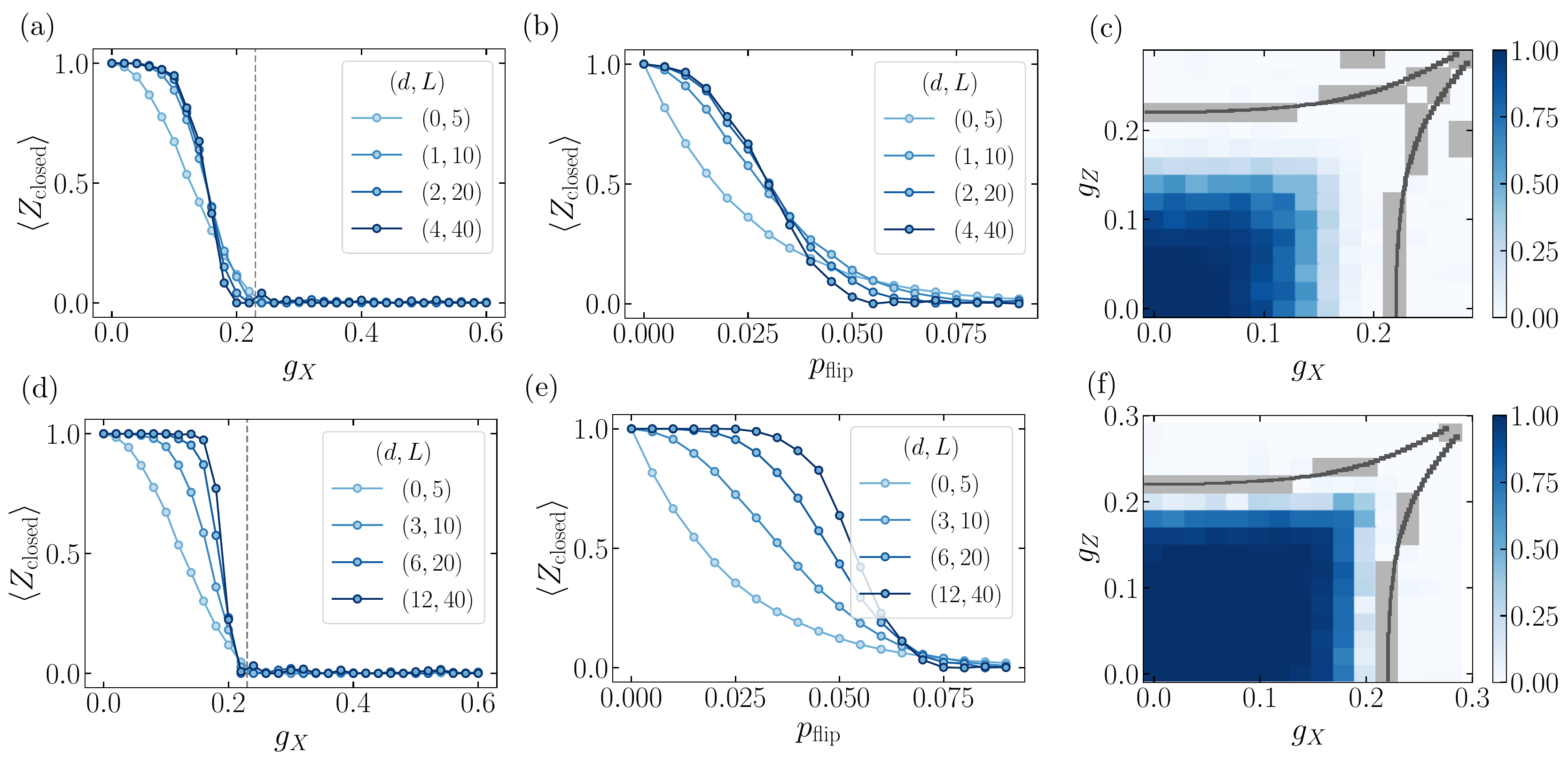}
    \caption{Decoder choice can significantly affect the extent of the region certified as topological. Here we show LED Wilson loops using coarse-graining with block size $b=2$ together with (a-c) a $d=1$ pairing decoder and (d-f) a $d=3$ patch decoder. Coherent perturbations are considered in panels (a,d), while incoherent errors are studied in (b,e). Panels (c,f) show LED expectation values for various combinations $g_X$,$g_Z$ after three layers of correction. Note that in the main text, analogous plots are made using the $d=2$ patch decoder, whose classification boundaries roughly match the $d=3$ decoder shown here.}
    \label{fig:methods-fig4}
\end{figure*}

We now examine how different choices of LED decoders can change the size of the ``correctable'' region---that is, the region classified as topological. The main text demonstrates results for an $l=4$ (or 
$d=2$) patch-based decoder with coarse-graining size $b=2$; in Figure~\ref{fig:methods-fig4}, we compare this with the pairing decoder and
an $l=6$ ($d=3$) patch-based decoder which also uses $b=2$. 
We see clearly that the $l=4$ and $l=6$ decoders both produce significantly larger correctable regions than the simple pairing decoder, 
for both coherent and incoherent errors.
Meanwhile, the $l=4$ and $l=6$ decoders perform similarly, so it appears that the decoder threshold saturates with $l$.
Intriguingly, we observe saturation at an
incoherent error rate which is significantly below the known error correction threshold of $p_c \approx 10.9\%$.
An interesting open question is to determine
whether this discrepancy arises because the patch-based decoder is suboptimal, or because local decoders have some fundamental limit.
In the Supplementary Information, we show that a ``annulus-based" decoder which applies MWPM in a non-local fashion results in a much larger correctable regime.

\subsection{Definition and Properties of Fixed-Point States}

As discussed in the main text, one important component for defining any LED procedure is to identify a fixed-point state of the target phase of matter. Here, we review the definition and key properties of a fixed-point state.

It is well-known in the literature that gapped quantum ground states can be classified by a real-space RG flow~\cite{chen_local_2010,schuch_classifying_2011}. To implement such an RG flow, one notes that any two states in the same phase can be connected by finite-depth local unitary transformations. The hallmark property of topologically-ordered states is the presence of long-range entanglement, and finite-depth local unitary transformations can add or remove local short-range entanglement; thus, for any given state in a topological phase of matter, one can construct a procedure which hierarchically removes all short-range entanglement from this state at increasing length-scales. Then, the resulting state then has zero correlation length, as all short-range entanglement has been removed; this state is known as a fixed-point state of the topological phase.

We also consider important properties of the fixed-point state in the context of our LED procedure and the closely related QCNN procedure of Ref.~\cite{cong_quantum_2019}. In both of these procedures, a fixed-point state of the phase under consideration is chosen, and the protocol identifies states within this phase by removing local errors or perturbations on top of this fixed-point state; this is done by performing a decoding operation and a coarse-graining operation, and repeating them $n$ times. In these settings, the fixed-point state has a few special properties: If the input state is equal to the fixed-point state, no errors are detected within each decoding step, and furthermore, the state after each layer of decoding and coarse-graining is equal to the input state, defined on a subsystem of the original system with fewer qubits. Finally, because fixed-point states have zero correlation length, the bare Wilson loops defined in the main text have expectation value equal to $+1$ without performing any correction or LED.

\subsection{LED Circuit for $\mathbb{Z}_2$ Toric Code}

\begin{figure}
    \centering
    \includegraphics[width=0.49\textwidth]{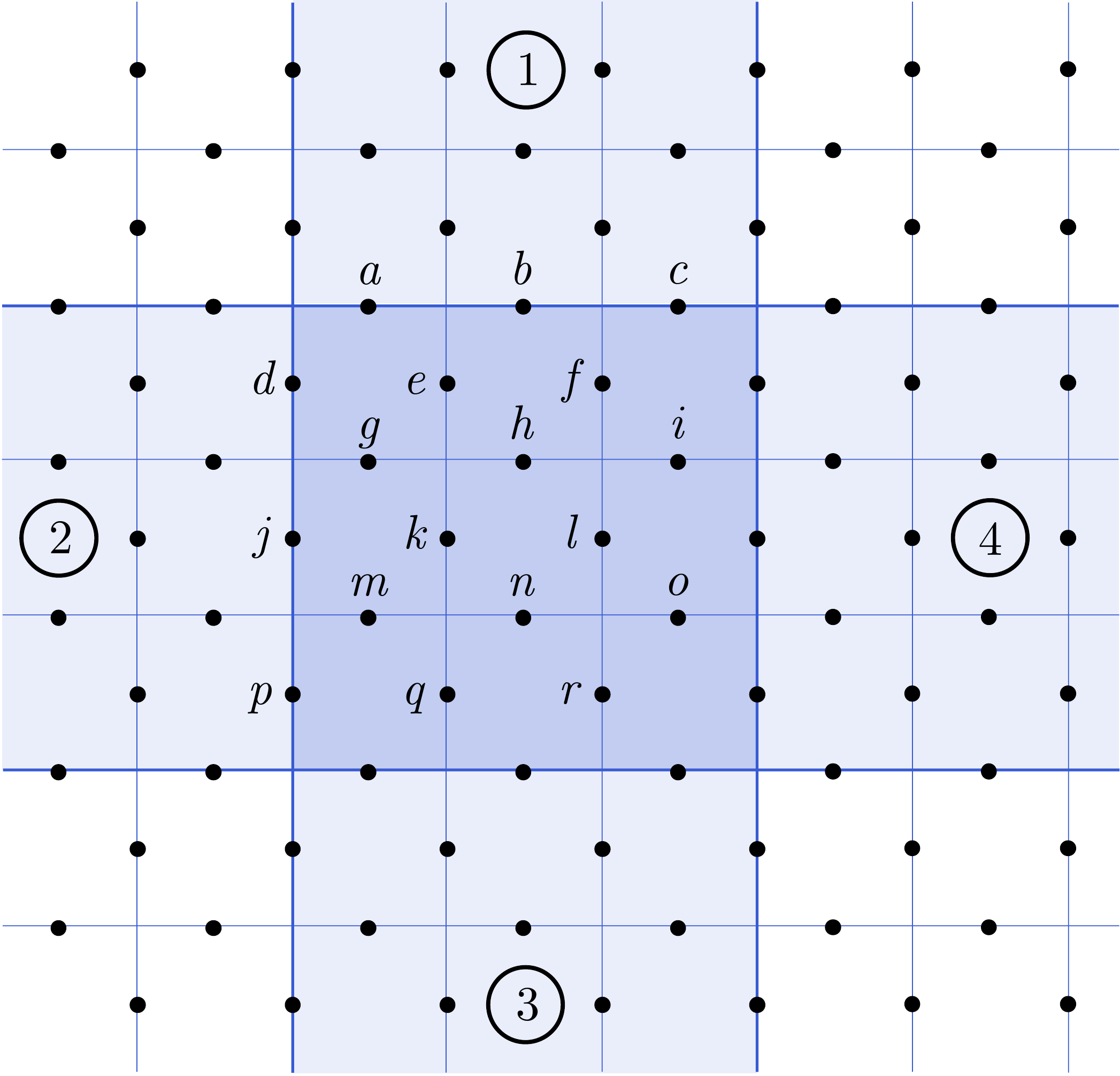}
    \caption{Setup for a quantum circuit which implements LED for the $\mathbb{Z}_2$ toric code phase without introducing any ancillary qubits. The coarse-graining performed by this circuit combines nine unit cells into a single, large unit cell in the subsequent layer (thick vs. thin lines); we label each of the 18 qubits in this large unit cell using the letters $a$ through $r$ as shown above. Here, we focus on the local unitaries involving the qubits within a single, large unit cell (darker blue shaded cell); however, some of the local unitaries in our circuit span two large unit cells, so we also consider the neighboring cells (lighter blue shaded cells). We label the neighboring cells 1 through 4, and refer to a qubit in such a cell by its letter and the number of the cell (e.g., the uppermost qubits of unit cell 2 would be $a_2$, $b_2$, and $c_2$ from left to right).}
    \label{fig:tc-circuit}
\end{figure}

In the quantum circuit formulation, the stabilizer measurement, local decoding, and coarse-graining steps of LED are implemented through local controlled-unitary gates, single-qubit rotations, and (optionally) measurements and local feed-forward operations. Meanwhile, local variational unitary gates are introduced before each LED layer to enable efficient and robust identification of a large set of topologically-ordered states (see Methods and Figure~\ref{fig:LED-circuit}). The quantum circuit for stabilizer measurements, local decoding, and coarse-graining can be implemented in different ways, as discussed below.

As a first example, one can introduce an ancillary qubit for each stabilizer, and perform local controlled-NOT (CNOT) gates between the system qubits and the ancillary qubit at each vertex or plaquette. These local CNOT gates are designed in the exact same fashion as stabilizer measurements of the surface code (see, for example, ~Ref.~\cite{fowler_surface_2012}). Decoding can then be implemented either by measuring the ancillary qubits and performing local feed-forward operations on the system qubits to correct for arbitrary single-qubit errors, or by using local controlled-unitary operations between the ancillary and system qubits to achieve the same result. Finally, the MERA circuit of Ref.~\cite{aguado_entanglement_2008} can be used to perform coarse-graining at each LED layer.

Alternatively, one can avoid introducing ancillary qubits by carefully constructing a circuit which maps stabilizer values in each layer to the qubits which are removed during the coarse-graining process of that layer. This scheme combines the stabilizer measurement and coarse-graining procedures together into one large set of unitary operations. As before, the decoding step can be implemented by measuring stabilizer qubits and performing local feedforwarding to the remaining qubits, or by using local controlled-unitary operations between the ancillary and system qubits.

Because the number of removed qubits in each unit cell is always smaller than the number of stabilizers in this second case, only a fraction of stabilizers can be measured in every layer. Thus, one must design the circuit meticulously in order to still correct for all single-qubit errors. 
As a concrete example, we illustrate here such a quantum circuit implementation of the LED stabilizer measurement, local decoding, and coarse-graining procedures for the $\mathbb{Z}_2$ toric code which does not require additional ancillary qubits. 

In this circuit, nine unit cells are combined to a single, large unit cell in the next layer by the coarse-graining procedure (i.e., a 3-to-1 reduction is performed in each dimension). The circuit is invariant under translation by the large unit cells, so we  consider only the gates involving a single large unit cell. The setup and notations for qubits are defined in Figure~\ref{fig:tc-circuit}.

Our circuit consists primarily of two-qubit CNOT gates, together with some single-qubit rotations and multi-qubit controlled-unitary operations. Thus, to compactly specify our circuit, we introduce some notations for the controlled-unitary operations of our circuit: specifically, we use the notation (control qubits $\mapsto$ target qubit) to refer to the gate which performs a bit-flip ($X$) gate on the target qubit if and only if all control qubits are in the $Z=-1$ state, and identity otherwise. For example, $(a \mapsto b)$ denotes a CNOT gate where $a$ is the control qubit and $b$ is the target qubit. Likewise, $(a,b \mapsto c)$ denotes a Toffoli gate where $a$ and $b$ are the control qubits and $c$ is the target qubit.

With all notations defined, we are now ready to specify our circuit. Due to translation invariance, it is understood that whenever we list a gate here, it will also be applied simultaneously on all qubits which differ from the labeled sites by translation of an integer number of large unit cells; thus, our circuit consists of many layers of non-overlapping local gates. The circuit proceeds as follows:
\begin{enumerate}
    \item Perform layers of CNOT gates in the following order: $(i \mapsto o)$, $(l \mapsto o)$, $(j_4 \mapsto o)$, $(f \mapsto i)$, $(c \mapsto i)$, $(d_4 \mapsto i)$, $(b \mapsto f)$, $(e \mapsto f)$, $(h \mapsto f)$, $(a \mapsto e)$, $(d \mapsto e)$, $(g \mapsto e)$, $(j \mapsto g)$, $(m \mapsto g)$, $(k \mapsto g)$, $(p \mapsto m)$, $(a_3 \mapsto m)$, $(q \mapsto m)$, $(q \mapsto n)$, $(b_3 \mapsto n)$, $(r \mapsto n)$, $(q \mapsto k)$, $(r \mapsto l)$, $(b_3 \mapsto h)$, $(k \mapsto h)$, $(l \mapsto h)$, $(a \mapsto b)$, $(c \mapsto b)$, $(d \mapsto j)$, $(p \mapsto j)$. 
    \item Perform layers of Toffoli gates in the following order: $(e,m_1 \mapsto b)$, $(f, n_1 \mapsto b)$, $(i_2, e \mapsto j)$, $(g, o_2 \mapsto j)$.
    \item Perform single-qubit $X$ gates on $e$, $f$, $g$, $h$, $n$, $o$.
    \item Perform two layers of multi-qubit controlled-unitary gates $(i, e_4, f, o \mapsto b)$, $(m, e_3, g, h \mapsto j)$
    \item Perform single-qubit Hadamard gates on $a$, $b$, $c$, $d$, $j$, $k$, $l$, $p$.
    \item Perform layers of Toffoli gates in the following order: $(a,c \mapsto b)$, $(d,k \mapsto j)$, $(c_3, l \mapsto b)$, $(c_3, l \mapsto j)$, $(c_3, l \mapsto b_3)$, $(d, p \mapsto j)$, $(r, c_3 \mapsto b_3)$, $(r, p_4 \mapsto b)$, $(r, p_4 \mapsto j)$, $(r, p_4 \mapsto j_4)$.
    \item Perform single-qubit Hadamard gates on $b$ and $j$.
\end{enumerate}

\subsection{LED for Abelian Quantum Double Models}

\begin{figure}
    \centering
    \includegraphics[width=0.38\textwidth]{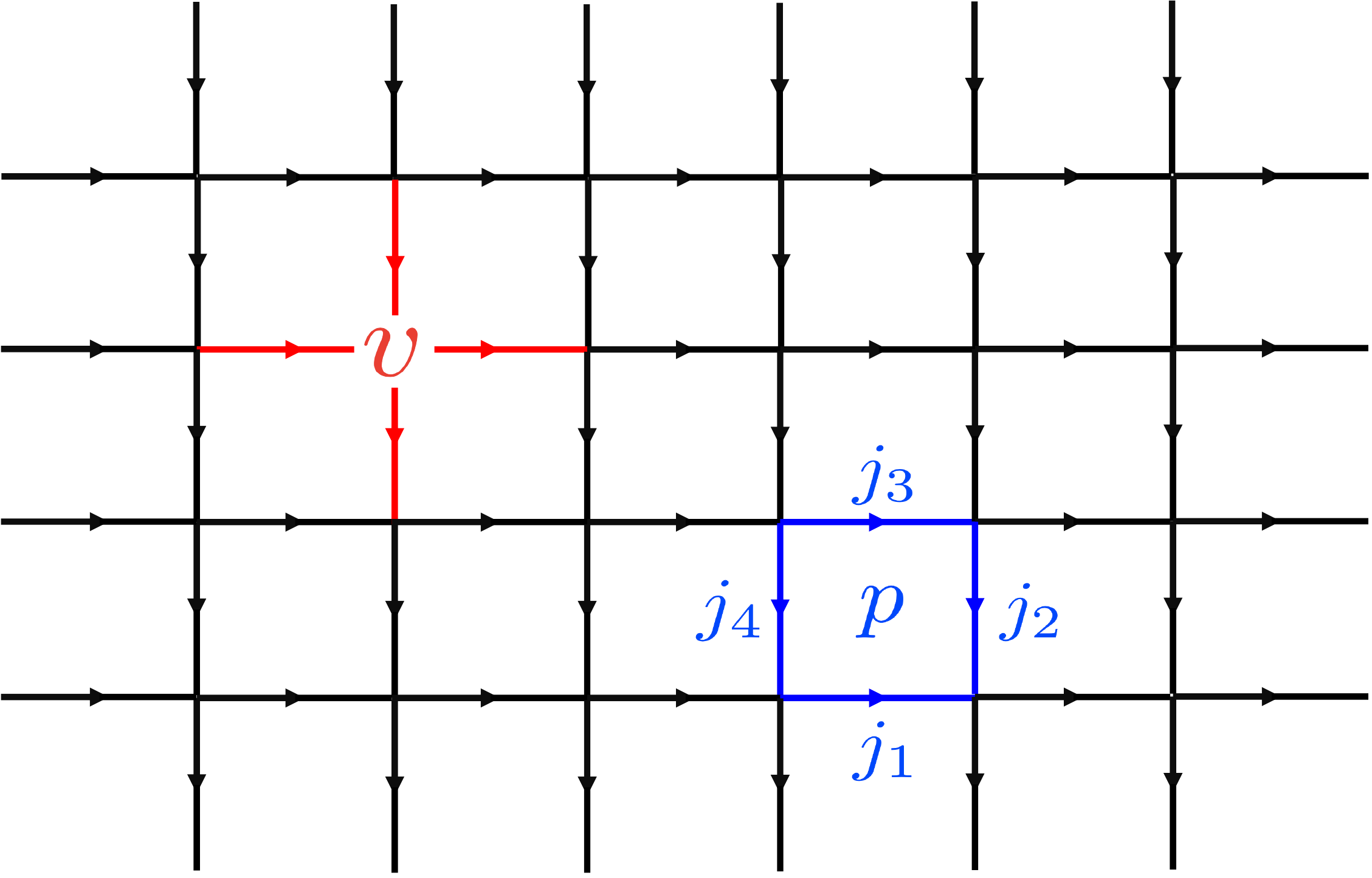}
    \caption{Setup for Kitaev's quantum double models on a square lattice~\cite{kitaev_fault-tolerant_2003}. For generic groups $G$, one must define an orientation for each edge; for simplicity, we choose edges to be oriented upwards and to the right.}
    \label{fig:non-abelian-lattice}
\end{figure}

The snapshot-based LED framework can be extended to enable detection of Kitaev's quantum double models based on any finite abelian group $G$. Since the quantum double model $\mathfrak{D}(G)$ of a direct product of finite groups $G = G_1 \times G_2 \times ... \times G_k$ is equivalent to the model where $k$ quantum doubles $\mathfrak{D}(G_1)$, ..., $\mathfrak{D}(G_k)$ are stacked together, it suffices to consider the quantum double of cyclic groups 
$G_d = \{ 0,1,...,d-1\}$. 
In such models, an orientation must be assigned to each link of the square lattice, which is occupied by a {\it qudit} belonging to the Hilbert space spanned by $\{ |j\rangle, j \in G_d \}$ (Figure~\ref{fig:non-abelian-lattice}). To define the vertex and plaquette stabilizers for these quantum double models, we first introduce the single-qudit {\it shift} and {\it clock} operators---$X_\pm$ and $Z_\pm$, respectively---which are generalizations of the Pauli matrices $X$ and $Z$ for qubits. We define these operators based on their action on the basis states:
\begin{equation}
X_\pm |j\rangle = |j \pm 1 \rangle
\end{equation}
\begin{equation}
Z_\pm |j\rangle = \omega^{j} |j\rangle.
\end{equation}
where addition is performed modulo $d$, and $\omega = e^{2 \pi i/d}$ is a $d^{\textrm{th}}$ root of unity. 
Using these operators, we can define generalized vertex and plaquette operators:
\begin{equation}
\label{eq:kitaev-vertex-g-term}
A(v) = \prod_{j \in \text{adj}(v)} X(j,v)
\end{equation}
\begin{equation}
\label{eq:kitaev-plaquette-h-term}
B(p) = \prod_{j \in \text{adj}(p)} Z(j,v).
\end{equation}
Here, the $X$ and $Z$ operators depend on the orientation of the edge $j$ relative to the vertex $v$ or plaquette $p$ under consideration: if the directed edge $j$ points away from $v$, $X(j,v) = X_- (j)$, otherwise $X(j,v) = X_+(j)$; if $p$ is on the left (resp., right) of the directed edge $j$ when the lattice is rotated such that  $j$ points upwards, $Z(j,p) = Z_-(j)$ (resp., $Z_+(j)$)~\cite{kitaev_fault-tolerant_2003}. 
The Hamiltonian for this model is then 
\begin{equation}
\label{eq:kitaev-hamiltonian}
H = -\sum_v \sum_{k=0}^{d-1} A(v)^k - \sum_p \sum_{k=0}^{d-1} B(p)^k.
\end{equation}

As in the case of the toric code Hamiltonian (Equation~(\ref{eq:tc-ham}) in the main text), all $A^k(v)$ and $B^k(p)$ terms in the above Hamiltonian $H$ commute with each other. Each $A(v)$ and $B(p)$ has $d$ possible eigenvalues: 1, $\omega$, $\omega^2$, ..., $\omega^{d-1}$. The ground state(s) of $H$ are then simultaneous $+1$ eigenstate(s) of all $A(v)$ and $B(p)$. Meanwhile, vertex and plaquette violations correspond to anyonic excitations: a plaquette where $B(p) = \omega^a$ hosts an $e^a$ anyon, while a vertex where $A(v) =\omega^b$ hosts an $m^b$ anyon. Thus, at any site $(v,p)$, there are $d^2$ possible topological charges $e^a m^b$ ($a,b \in \{0, 1, ... , d-1\}$).

We now demonstrate how a generalized snapshot-based LED procedure can be used to recognize the quantum double phase. In this case,
we begin by measuring all qudits in the basis $\{ |0\rangle, |1\rangle, ..., |d-1\rangle\}$, which we refer to as the {\it group basis}. Such a measurement allows us to compute all plaquette terms $B(p)$, and identify $e$-type anyons. The location of such an anyon can be shifted one cell away by applying $X_\pm^{B(p)}$ to one of the edges in $\textrm{adj}(p)$ (the sign depends on the edge's orientation relative to the plaquette): for example, if the anyon is located on the plaquette $p$ in Figure~\ref{fig:non-abelian-lattice}, applying $X_-^{B(p)}$ to $j_3$ will move the anyon up by one cell. 
The patch-based decoder we use for LED thus corrects errors by grouping together, when possible, two or more plaquettes $p_1, p_2, ...$ with non-trivial $B(p_i)$ such that the product $\prod_i B(p_i) = 1$; this can be implemented simply by multiplying qudits by group elements $B(p_i)^{\pm 1}$. The groupings are chosen to minimize the total number of qudits to modify, while still removing as many errors as possible within each patch. 

The above procedure allows us to correct for $e$-type errors. The same LED procedure can be performed to address $m$-type errors, by measuring qudits in another basis---the {\it representation basis}. Representation-basis measurements are performed by first applying a generalized Hadamard operator
\begin{equation}
\label{eq:generalized-hadamard}
H_d = \frac{1}{\sqrt{d}}\sum_{a=0}^{d-1} \sum_{b=0}^{d-1} \omega^{ab} |a\rangle\langle b|,
\end{equation}
and then measuring in the group basis. Because the shift operator is diagonal in the representation basis, measurements in this basis allow us to identify $m$-type anyons, and utilize the patch-based decoding scheme described above for LED.

\subsection{Background on Generic Topological Phases}

Here, we provide background on generic topological quantum field theories (TQFTs), which are characterized by modular tensor categories $\C$. The possible topological charges (a.k.a. anyon types) in such a system are given by the simple objects $\{ \alpha_0, \alpha_1, \alpha_2, ... \}$ of $\C$, where $\alpha_0 = \mathbf{1}$ is the trivial or vacuum topological charge.  Abelian anyons $\alpha_i$ have quantum dimension $d_i = 1$, meaning that the outcome of fusing $\alpha_i$ with any other anyon $\alpha_j$ is deterministic: $\alpha_i \otimes \alpha_j = \alpha_{k(i,j)}$ for some integer $k(i,j)$. On the other hand, non-abelian anyons $\alpha_i, \alpha_j$ have quantum dimensions $d_i, d_j > 1$, and their fusion can result multiple possible anyon types as governed by {\it fusion rules}
\begin{equation}
\label{eq:fusion-rules}
\alpha_i \otimes \alpha_j = \oplus_k n^k_{ij} \alpha_k.
\end{equation}
Here, the fusion coefficients $n^k_{ij}$  must satisfy
\begin{equation}
    d_i d_j = \sum_k n^k_{ij} d_k.
\end{equation}
Moreover, each anyon $\alpha_i$ has a unique {\it conjugate} anyon $\alpha_j = \overline{\alpha_i}$ for which $n^{\mathbf{1}}_{ij} = 1$ (i.e., $\alpha_i$ and $\overline{\alpha_i}$ can annihilate each other by fusing to vacuum); for all other $\alpha_k$, $n^{\mathbf{1}}_{ik} = 0$.

In addition to anyon types and fusion rules, several other quantities are needed to characterize a TQFT. In particular, for every pair of anyons $\alpha_i, \alpha_j$, one can compute the {\it Hopf link}
\begin{equation}
\label{eq:hopf-link}
\vcenter{\hbox{\includegraphics[width = 0.225\textwidth]{modular_s_2.pdf}}}.
\end{equation}
Moreover, for each anyon $\alpha_i$, its {\it topological twist} is defined as
\begin{equation}
\label{eq:twist}
\vcenter{\hbox{\includegraphics[width = 0.16\textwidth]{modular_t_2.pdf}}}.
\end{equation}
These quantities are used to define the modular $S$ and $T$ matrices of $\C$: 
\begin{equation}
    S_{ij} = \frac{1}{\mathcal{D}} \tilde{s}_{ij}, \qquad T_{ij} = \delta_{ij} \theta_i.
\end{equation}
where $\mathcal{D} = \sqrt{\sum_i d_i^2}$ is the global quantum dimension of $\C$.

It is conjectured that the modular $S$ and $T$ matrices uniquely define a unitary modular tensor category, or equivalently a topological phase of matter~\cite{wang_topological_2010}.  In the main text and Methods, we illustrate the application of LED to Levin and Wen's string-net models based on arbitrary unitary fusion categories $\mathcal{A}$. The MTC describing such a string-net model is the Drinfeld center $\mathcal{C} = \mathcal{Z}(\mathcal{A})$~\cite{levin_string-net_2005}. Additional background on general TQFTs and MTCs can be found in Refs.~\cite{bakalov_lectures_2001,wang_topological_2010}.

\subsection{Arbitrary Local Perturbations}

In this section, we show that LED Wilson loops flow to one for
any state which differs from the fixed-point state by an arbitrary local perturbation. 
This implies that LED loop operators are independent of the exact perturbation, unlike the fattened Wilson loops of Refs.~\cite{hastings_quasiadiabatic_2005,levin_detecting_2006}. For concreteness, we examine perturbations on top of a toric code ground state. 

To prove our claim, we consider a 
local unitary operator $\mathcal{O}$ supported on a local region $A$ of diameter $l$. 
It follows that $\mathcal{O}$ can only flip a stabilizer from $+1$ to $-1$ if it overlaps with $A$, and $\mathcal{O}$ cannot couple any ground state $|\psi \rangle$ to another ground state $|\psi'\rangle \neq |\psi \rangle$. 
We now show that LED removes all flipped stabilizers after $1+\log_b d$ layers, where $b$ is the coarse-graining length-scale.
Because the coarse-graining step effectively reduces $l \rightarrow l/b$, after $\log_b d$ layers, there are three possibilities: (1) $A$ becomes fully contained within a single $b \times b$ region at some layer $c < \log_b d$. Then, $A$ has zero support after another layer of coarse-graining and  disappears. 
(2) Before iteration $\log_b d$, $\mathcal{O}$ is supported on two adjacent $b \times b$ regions. Then, $\mathcal{O}$ becomes a single-qubit error after this iteration and is removed by the subsequent LED step.
(3) Before iteration $\log_b d$, $\mathcal{O}$ is supported at the corner of three or four regions. In this case, it becomes a two-qubit diagonal error after this iteration, which can also be removed by the subsequent LED step. Notice that handling case (3) requires the inclusion of diagonal pairing in the pairing decoder. Finally, while this proof focuses on the pairing decoder, it also generalizes directly to 
more advanced local decoders, such as the patch-based decoder, when they are combined with coarse-graining.

\subsection{Proof of Theorem 1}

\textbf{Theorem 1.} 
\textit{Let $|\psi\rangle$ be an arbitrary input state defined on a surface with trivial topology. 
Then, after performing LED with correction distance $d$, assume the resultant state $|\psi_d \rangle$ has, as a subsystem, qubits living on the links of a square lattice, as in the toric code.
Then, if the stabilizer expectation values $\left\langle \frac{1+A_v}{2} \right\rangle = \left\langle \frac{1+B_p}{2} \right\rangle = 1$ at every vertex $v$ and plaquette $p$ of the subsystem, then, the input state $|\psi \rangle$ is topologically-ordered, in the sense that it is connected to an output state of the form $\ket{\psi_d} = \ket{\psi_{\mathrm{TC}}} \otimes \ket{\phi_{\mathrm{anc}}}$ by generalized local unitary (gLU) transformation of depth $O(d)$.}

\begin{proof}
The LED procedure forms a local quantum channel, and we begin our proof by
constructing a purification of this channel.
To mediate stabilizer measurement and local error correction, one can first introduce an ancilla in the state $\ket{0}$ at every vertex and plaquette.
Next, a sequence of Hadamard and controlled-NOT (CNOT) gates is applied such that a $Z$-basis measurement on an ancilla is equivalent to the associated stabilizer measurement of $A_v$ or $B_p$.
Then, local quantum error correction is performed using a local unitary evolution on the combined system, which contains the original state  and the added ancilla qubits. This local unitary evolution applies gates which perform $X$ and $Z$ spin flips on the system qubits, conditioned on the state of the ancilla qubits. Finally, the coarse-graining step can also be performed with local unitary transformations by using a quantum circuit corresponding to a multiscale entanglement renormalization ansatz (MERA) representation of the fixed-point state~\cite{aguado_entanglement_2008}.
The transformations generated by introducing product state ancillas and performing local unitary operations are called generalized local unitaries (gLU); this class of transformations includes our LED procedure described above and is known to preserve phase boundaries~\cite{chen_local_2010}. 

If the system part of the final state, $|\psi_d\rangle \in \mathcal{H}_{\mathrm{sys}} \otimes \mathcal{H}_{\mathrm{anc}}$, has stabilizer expectation values $\left\langle \frac{1+A_v}{2} \right\rangle = \left\langle \frac{1+B_p}{2} \right\rangle = 1$ at every vertex $v$ and plaquette $p$, then $|\psi_d\rangle$ must belong to the ground state space of the toric code. This is because the projector onto the ground state space is given by the product of all the stabilizers $P_\textrm{TC} = \prod_v \frac{1+A_v}{2} \prod_p \frac{1+B_p}{2}$.  
On a surface with trivial topology, there is a unique state $\ket{\psi_{\mathrm{TC}}}$, so the output state factors into $\ket{\psi_d} = \ket{\psi_{\mathrm{TC}}} \otimes \ket{\phi_{\mathrm{anc}}}$.
\end{proof}

\subsection{Proof of Lemmas 3 and 4}

\textbf{Lemma 3.}
\textit{Given an output state $\ket{\psi_d}$ satisfying the conditions of Theorem~\ref{thm:epsilon-witness}, and a simply connected $(\mathscr{L}-2) \times (\mathscr{L}-2)$ square region $R$ on the system part, the reduced density matrix $\rho_d = \tr_{R^c}[\ket{\psi_d}\bra{\psi_d}]$ is indistinguishable from the toric code reduced density matrix $\sigma_{\mathrm{TC}} = \tr_{R^c}[\ket{\psi_{\mathrm{TC}}}\bra{\psi_{\mathrm{TC}}}]$ defined on the same region, up to the bound $|| \rho_d - \sigma_{\mathrm{TC}} || \leq \max\left(\sqrt{\epsilon}, 2 \mathscr{L}^2 \epsilon \right)$
}

\begin{proof}
To bound the trace distance, we will use the fact that our state $\rho_d$ locally looks almost the same as the toric code state. Specifically, trace distance is related to distinguishability by~\cite{nielsen_quantum_2010}
\begin{align}
    ||\rho-\sigma|| =\frac{1}{2}\sup_{||O||\leq1}\tr\left[O(\rho-\sigma)\right].
\end{align}
To upper bound this, we can consider all possible unit norm operators $O$. Specifically, $O$ can always be written as a linear combination of Pauli strings. These Pauli strings can be analyzed by considering two cases, closed strings and open strings. First consider operators $C$ supported on $R$, which commute with all stabilizers $A_v, B_p$ supported on a slightly larger $\mathscr{L} \times \mathscr{L}$ region (1-ball or $R$) , constructed by expanding on all sides by one unit cell.
These operators must be products of contractible Wilson loops, and hence can be written as product of stabilizers. Therefore, $\tr[\sigma_{\mathrm{TC}} C] = 1$ for the toric code state.
For the LED output state, we instead bound the expectation value of $P_\textrm{TC} = \prod_v \frac{1+A_v}{2}  \prod_p  \frac{1+B_p}{2}$, where the product over $v$ (resp., $p$) runs over all vertices (plaquettes) within the $\mathscr{L} \times \mathscr{L}$ region. 
The expectation value of this projector, evaluated on the output state, is given by $\langle \psi_d| P_\textrm{TC} |\psi_d \rangle$.
To lower bound this quantity, we first notice that every term in $P_\textrm{TC}$ has spectrum $\{0,1\}$, and that the terms are mutually commuting.

This allows us to approximate $\langle P_\textrm{TC} \rangle$ using the individual expectation values $\left\langle \frac{1+A_v}{2} \right\rangle > 1-\epsilon$ and $\left\langle \frac{1+B_p}{2} \right\rangle > 1-\epsilon$. To do so, we note that if two commuting operators $A$ and $B$, each with spectrum $\{0,1\}$, satisfy $\langle A\rangle > 1-\epsilon$ and  $\langle B\rangle > 1-\epsilon$, then $\langle AB \rangle > 1-2\epsilon$. To show this, we add the two individual bounds to obtain $\langle A\rangle  + \langle B\rangle > 2-2\epsilon$; moreover, since $AB$ has spectrum $\{0,1\}$, we have $\langle A\rangle + \langle B\rangle - \langle AB \rangle < 1$. It thus follows that $\langle AB \rangle > 1-2\epsilon$, and upon applying this recursively to include all vertex and plaquette terms within the $\mathscr{L}\times \mathscr{L}$ region, we obtain $\left\langle \psi_d | P_\textrm{TC} | \psi_d \right\rangle > 1-2\mathscr{L}^2 \epsilon$.

Using this fact, along with $C P_{\mathrm{TC}} = P_{\mathrm{TC}}$ we can similarly lower bound the expectation value $\tr[\rho_d C] \geq 1 - 4\mathcal{L}^2 \epsilon$.
Thus, for any $C$, we have $\frac{1}{2} |\tr\left[ C (\rho_d - \sigma_{\mathrm{TC}}) \right]| \leq 2 \mathcal{L}^2 \epsilon$.

Next, we consider operators $O$ which anti-commute with some stabilizers. In particular, these operators are Pauli strings with at least one endpoint (open strings). Naturally, their expectation value vanishes in the toric code state. To see this, let $S$ be one of the stabilizers which anti-commutes with $O$.
Now, $S$ and $O$ form an anti-commuting pair of Pauli operators, so they satisfy an uncertainty relation
$\left\langle S\right\rangle ^{2}+\left\langle O\right\rangle ^{2} \leq 1$.
As such, in the toric code where $\left\langle S\right\rangle=1$, this implies $\left\langle O\right\rangle=0$.
Similarly, the condition $\tr[\rho_d S] \geq 1 - 2\epsilon$ leads to the upper bound $\tr[\rho_d O] \leq 2\sqrt{\epsilon }$.
Combining these two results, we see the trace distance is at most $T(\rho_d, \sigma_{\mathrm{TC}}) \leq \max\left(\sqrt{\epsilon}, 2 \mathscr{L}^2 \epsilon \right)$.
\end{proof}

\textbf{Lemma 4.}
\textit{Consider an input state $\rho$ and an LED procedure satisfying the conditions of Theorem~\ref{thm:epsilon-witness}.
Then the final state $|\psi_d \rangle$ after LED cannot be prepared using a local quantum circuit with depth less than $O(\mathscr{L}) \sim O(1/\sqrt{\epsilon})$.}

\begin{figure}
    \centering
    \includegraphics[width=0.45\textwidth]{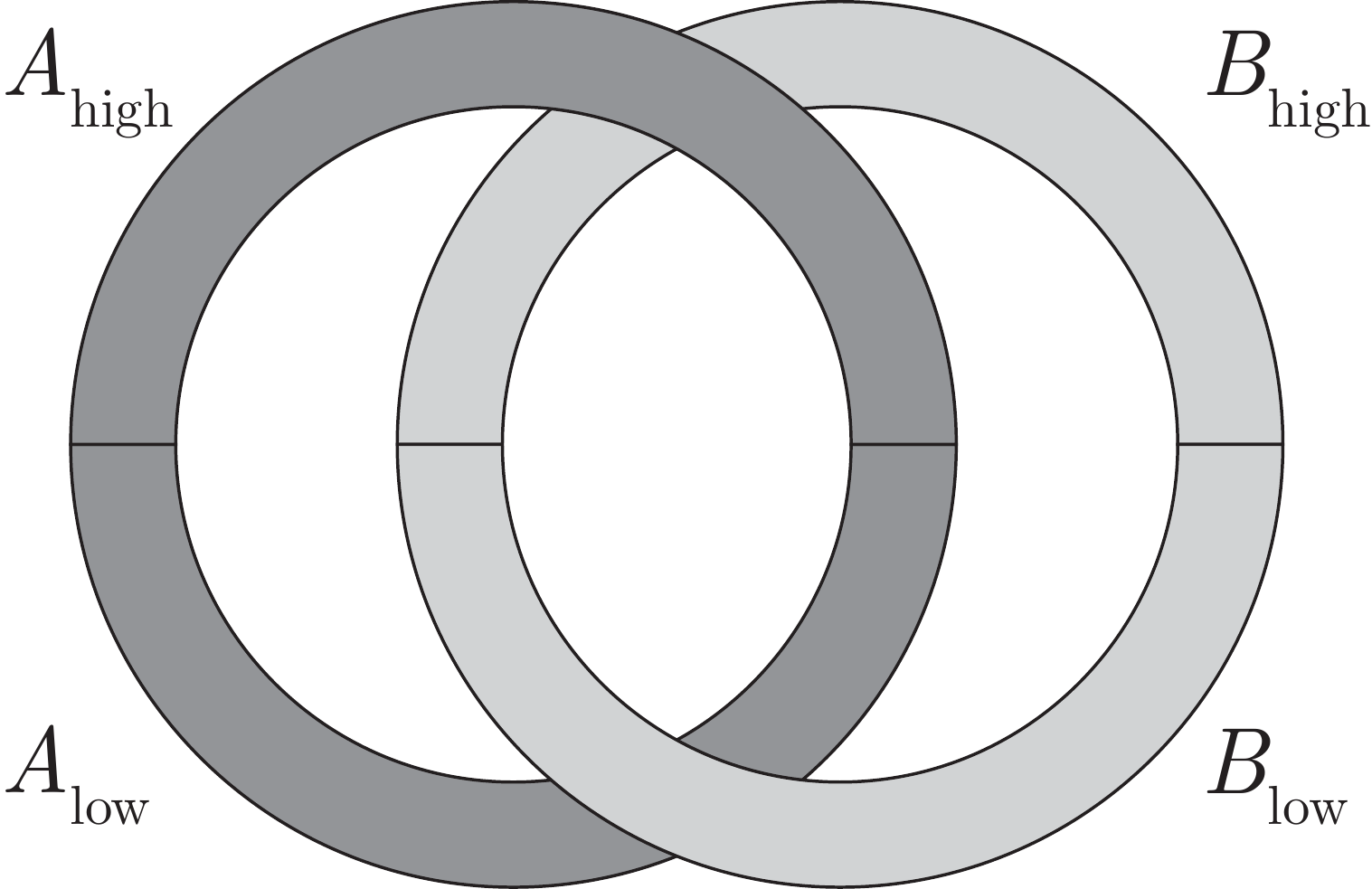}
    \caption{Consider two operators $A$ and $B$, each supported on an annulus which intersects at two regions, one higher and one lower. The twist product $A\infty B$ consists of applying $B$ first in the higher region, and $A$ first in the lower region (also see Ref.~\cite{haah_invariant_2016} Fig. 1 and Eq. 4). In the diagram the order of operations goes back to front. Furthermore, for tensor product operators which can be written as $A = A_{\mathrm{low}} A_{\mathrm{high}}$, $B=B_{\mathrm{low}} B_{\mathrm{high}}$, we can write the twist product as $A \infty B = B_{\mathrm{low}}A_{\mathrm{high}} A_{\mathrm{low}}B_{\mathrm{high}}$.}
    \label{fig:fig_supp_twistproduct}
\end{figure}

To prove these results, we extend and generalize the proof techniques developed by Ref.~\cite{haah_invariant_2016}. 
Note that for notional simplicity, here we work with an input state $\ket{\psi}$ that is a purification of $\rho$. This is done without loss of generality, since all of the operations act on the original degrees of freedom.

Consider a pair of $Z$-basis and $X$-basis Wilson loops $A$ and $B$, supported on two overlapping annuli. The twist product of the two operators $A \infty B$ is defined such that at one intersection region, operator $B$ is applied first, while at the other operator $A$ is applied first (Fig.~\ref{fig:fig_supp_twistproduct}). By the arguments of Ref.~\cite{haah_invariant_2016}, such a pair of locally non-commuting observables, whose twist product does not factorize into a product of the individual observables, can serve as a witness for long-range entanglement.
In our case, since $Z$- and $X$-strings locally anti-commute, we can remove the twist to get $A \infty B = -AB$.

Then, Ref.~\cite{haah_invariant_2016} showed that, assuming the observables $A$ and $B$ satisfy an additional important property called local invisibility (see below), then the following correlation serves as a witness for long-range order.
\begin{align}
    \left|\bra{\psi} A \infty B \ket{\psi} - \bra{\psi} A \ket{\psi}\bra{\psi} B \ket{\psi}\right| > 0.
\end{align}
Specifically, the state $\ket{\psi}$ cannot be prepared from a trivial state by a circuit of depth $O(L)$, where $L$ is the separation between the two intersection regions.
Indeed, in the exact case, where the expectation value of large LED Wilson loops are one, the results of Ref.~\cite{haah_invariant_2016} can be directly applied.

However, the proofs in Ref.~\cite{haah_invariant_2016} do not immediately apply to the realistic case considered here, where stabilizers have expectation value $1 - \epsilon$, and residual entanglement between the system and ancilla or environment qubits prevents exact knowledge of the state. Nevertheless, with sufficient care and a few additional assumptions, approximate versions of key results in Ref.~\cite{haah_invariant_2016} can be recovered. 

First, we develop a notion of \textit{approximate local invisibility}. Throughout, we follow the spirit of the proofs in Section III of Ref.~\cite{haah_invariant_2016}; the reader is encouraged to consult the original reference for additional details and insights.

\begin{definition}[Approximate ($\Delta, r,t$)-local invisibility.]
Let $A$ be a region of radius $r$ and $B$ be a $t$-ball around $A$.
An operator $O$ with unit norm is ($\Delta, r, t$)-locally invisible with respect to a state $\ket{\psi}$ if, for any state $\ket{\phi}$ whose reduced density matrix on $B$ is equivalent to $\ket{\psi}$, it satisfies
\begin{align}
    \left\vert \left\vert \frac{\tr_{A^c}[O \ket{\phi}\bra{\phi} O^\dagger]}{\tr[O \ket{\phi}\bra{\phi}O^\dagger]} - \tr_{A^c}[\ket{\psi}\bra{\psi}] \right\vert \right\vert \leq \Delta,
\end{align}
where the norm is the standard trace norm. 
In other words, locally invisible operators leave local reduced density matrices approximately unchanged. 
Note that we restrict to states $\ket{\phi}$ for which the expectation value does not vanish, such that this remains well-defined. 
This subtlety is also present in the original definition of Ref.~\cite{haah_invariant_2016}.
\end{definition}

Next, we will show that Wilson loops that nearly stabilize $\ket{\psi_d}$ are approximately locally invisible.
Let $A$ be a region of radius $r$, which can only cover a patch of the loop.
Furthermore, let $t=0$, i.e. region $B$ is identical to region $A$.
Since the Wilson loop is a tensor product of local unitaries, we can write $O = O_{B} \otimes O_{B^c}$ and $O_{B^c}^\dagger O_{B^c} = 1$.
This allows us to work directly with the reduced density matrices of region $B$, and we can use Lemma 3 to reduce to the toric code case
\begin{align}
    \tr_{A^c}[O\rho_d O] = \tr_{A^c}[O\sigma_{\mathrm{TC}} O] + \tr_{A^c}[O (\rho_d - \sigma_{\mathrm{TC}}) O]
\end{align}
Indeed, since $O$ is locally invisible with respect to to the toric code, this gives us our result, where the error term depends on the size of $B$.
\begin{align}
    \left\vert \left\vert \tr_{A^c}\left[O\rho_d O\right] - \tr_{A^c}\left[\sigma_{\mathrm{TC}} \right] \right\vert \right \vert \leq \max\left(\sqrt{\epsilon}, 2 (r+1)^2 \epsilon \right)
\end{align}
More microscopically, $O$ spans the region, so locally looks like a logical operator. The reduced density matrix $\sigma_{\mathrm{TC}}$ on region $B$ is an equal weight mixture of all logical states, so $O$ leaves it invariant.

This shows that Wilson loops are ($\Delta, r, t$)-locally invisible with respect to $\ket{\psi_d}$ for $t \geq 0$ and $\Delta = \max\left(\sqrt{\epsilon}, 2 (r+1)^2 \epsilon \right)$. 
When combined with the fact that Wilson loops have large expectation value on $\ket{\psi_d}$, this will serve as a witness for long-range topological order.

To prove this, we need to confirm that, even for the weaker notion of approximate local invisibility, the twist product approximately factorizes for trivial states.
\begin{lemma}
The twist product of two $(\Delta, r,t)$ locally invisible operators $A$ and $B$, acting on a trivial product state $\ket{\psi} = \ket{00...0}$, must satisfy
\begin{align}
    \left\vert \bra{\psi} A \infty B \ket{\psi} - \bra{\psi} A \ket{\psi}\bra{\psi} B \ket{\psi} \right\vert \leq O(\sqrt{\Delta R/r})
\end{align}
where $A$ and $B$ are supported on two annuli which intersect at two regions (Fig.~\ref{fig:fig_supp_twistproduct}) whose separation is $\geq 2(r+t)$.
\end{lemma}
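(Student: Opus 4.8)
The plan is to follow the transport (``sweeping'') strategy of Ref.~\cite{haah_invariant_2016}, upgraded so that it tolerates the merely approximate nature of $(\Delta,r,t)$-local invisibility. First I would use the decomposition of Fig.~\ref{fig:fig_supp_twistproduct}, writing $A = A_{\mathrm{low}} A_{\mathrm{high}}$ and $B = B_{\mathrm{low}} B_{\mathrm{high}}$, where the two pieces of each loop are attached to the lower and higher intersection regions. Since these regions are separated by $\geq 2(r+t)$, the $(r+t)$-balls around them are disjoint, so $A_{\mathrm{low}}$ commutes with $B_{\mathrm{high}}$ and $A_{\mathrm{high}}$ commutes with $B_{\mathrm{low}}$. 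Using these commutations together with the definition $A \infty B = B_{\mathrm{low}} A_{\mathrm{high}} A_{\mathrm{low}} B_{\mathrm{high}}$, I would rewrite both $\bra{\psi} A \infty B \ket{\psi}$ and $\bra{\psi}A\ket{\psi}\bra{\psi}B\ket{\psi}$ in a common form that isolates the reordering of the two operators at the intersection regions; the product structure of $\ket{\psi}=\ket{00\ldots0}$ factorizes the trivial-state expectation across the gap separating the two intersections, leaving the reordering as the only obstruction to factorization.

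The heart of the argument is to show that this reordering is invisible to the trivial state. Because $\ket{\psi}$ has zero correlation length, applying $A$ on the patch near one intersection genuinely cannot influence any reduced density matrix near the other intersection, \emph{except} through the slack $\Delta$ permitted by approximate invisibility. I would make this precise by a telescoping chain of $\sim R/r$ overlapping balls of radius $r$ tiling the arc that separates the two intersections. At each link, the local-invisibility hypothesis applies---each intermediate state agrees with $\ket{\psi}$ on the relevant $t$-ball, a precondition guaranteed by the product structure and the separation bound---so the relevant reduced density matrix is disturbed by at most $\Delta$ in trace norm. Summing these contributions by the triangle inequality yields a total trace-norm disturbance of $O(\Delta R/r)$.

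Finally, I would convert this trace-norm statement about reduced density matrices (equivalently, via purification, about nearby global state vectors) into a bound on the amplitude $\bra{\psi} A \infty B \ket{\psi} - \bra{\psi}A\ket{\psi}\bra{\psi}B\ket{\psi}$. Passing from a trace-distance bound to a difference of expectation values costs one square root, through the standard Fuchs--van de Graaf-type relation between trace distance and fidelity; this is exactly the origin of the claimed $O(\sqrt{\Delta R/r})$ scaling, with the linear accumulation $\Delta R/r$ appearing under the root.

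The main obstacle I anticipate is the bookkeeping required to run the telescoping chain with $\Delta>0$. In the exact case $\Delta=0$ the transport is clean and the factorization is immediate; with positive slack one must verify three delicate points: that the per-step errors accumulate additively rather than multiplicatively, that every intermediate state truly satisfies the ``agreement on a $t$-ball'' precondition of the invisibility definition (which forces careful tracking of which regions each partial operator has already modified), and that the trace-norm-to-amplitude conversion is applied only once, to the \emph{total} accumulated error, so that the exponent is not degraded. Ensuring that the separation $\geq 2(r+t)$ keeps the neighborhoods of the two intersections decoupled throughout the sweep is where the genuine work of the proof lies.
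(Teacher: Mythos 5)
Your overall architecture matches the paper's: decompose the twist product as $A\infty B = B_{\mathrm{low}}A_{\mathrm{high}}A_{\mathrm{low}}B_{\mathrm{high}}$, exploit the product structure of $\ket{\psi}$ and the separation of the two intersection regions, tile the relevant region with $\sim R/r$ balls of radius $r$, accumulate an error of order $\Delta$ per ball, and pay a single square root at the end before inserting projectors to force factorization. The decomposition step and the concluding factorization step are fine.

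The gap is in the middle, where the mechanism you describe does not actually produce the claimed bound. First, ``summing the trace-norm disturbances of the reduced density matrices by the triangle inequality'' is not a valid operation: knowing that the marginal of $O\ket{\psi}$ on each ball $A_i$ is within $\Delta$ of the marginal of $\ket{\psi}$ does not, via any triangle inequality, control the distance between the global states; and if you instead telescope at the level of state vectors, each step costs $\sqrt{\Delta}$, giving a total of $(R/r)\sqrt{\Delta}$, which is much worse than $\sqrt{\Delta R/r}$. Second, your final conversion is backwards: a trace-distance bound controls differences of expectation values \emph{linearly} (no square root); the square root in the Fuchs--van de Graaf relation goes from fidelity to trace distance, not the other way. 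Your two misplacements happen to cancel to the correct exponent, but the argument as written does not close --- this is exactly the ``conversion applied only once'' danger you flag at the end, and your described mechanism does not avoid it. The repair, which is what the paper does, is to accumulate at the level of fidelity-type quantities \emph{before} taking any square root: local invisibility together with the product structure of $\ket{\psi}$ gives $\bra{\psi}O^{\dagger}\Pi_{A_i}O\ket{\psi}\geq 1-\Delta$ for each ball, where $\Pi_{A_i}$ projects onto $\ket{0\cdots 0}$ on $A_i$; since these are commuting projectors with $\{0,1\}$ spectrum whose product is $\Pi_{\mathcal{R}}$, the union bound yields $\bra{\psi}O^{\dagger}\Pi_{\mathcal{R}}O\ket{\psi}\geq 1-\Delta R/r$; only now does the square root enter, through $\left\Vert \Pi_{\mathcal{R}}O\ket{\psi}-O\ket{\psi}\right\Vert^{2}=1-\bra{\psi}O^{\dagger}\Pi_{\mathcal{R}}O\ket{\psi}\leq \Delta R/r$. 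With that vector-norm bound, inserting $\Pi_{\mathcal{R}}$ between the factors of the twist product costs only $O(\sqrt{\Delta R/r})$ and makes the expectation value factorize exactly.
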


\begin{proof}
In the first step of the proof, we bound the expectation value of $\Pi_\mathcal{\mathcal{R}} = \prod_{i \in \mathcal{R}} \ket{0}\bra{0}_i$, the projector onto $\ket{\psi}$ supported on region $\mathcal{R}$ evaluated with respect to $O\ket{\psi}$ for unitary $O$.

Invoking the definition of local invisibility, we show for $A$ of radius $r$, that $\bra{\psi} O^\dagger \Pi_A O \ket{\psi} \geq 1-\Delta$. 
\begin{align}
    \left\vert \left\vert \tr_{A^c} \left[ O \ket{\psi} \bra{\psi} O^\dagger \right] - \Pi_A \right\vert \right\vert &\leq \Delta
\end{align}
Thus, the expectation value of the observable $\Pi_A$ satisfies:
\begin{align}
    | \tr\left[ \Pi_A O \ket{\psi} \bra{\psi} O^\dagger \right] - 1| &\leq \Delta \\
    \bra{\psi}O^\dagger \Pi_A O \ket{\psi} &\geq 1 - \Delta
\end{align}
Next, we can use the fact that $\Pi_{\mathcal{R}}$ can be written as a product of $\Pi_A$ approximately $R/r$ times. 
It follows from the same union bound argument in Lemma 3, that $\bra{\psi} O^\dagger \Pi_{\mathcal{R}} O \ket{\psi} \geq 1 - \Delta R / r$. This result can be used to bound the distance between the two states,
\begin{align}
    ||\Pi_{\mathcal{R}} O \ket{\psi} - O \ket{\psi}||^2 \leq \Delta R / r,
\end{align}
by noticing the left side is equal to $1 - \bra{\psi}O \Pi_{\mathcal{R}} O \ket{\psi}$.
We will use this below, to show that we can replace $O \ket{\psi}$ with $\Pi_{\mathcal{R}} O \ket{\psi}$ without incurring significant error. 

To prove the main result, we use the same construction as Ref.~\cite{haah_invariant_2016}.
Specifically, we want to use the above result to show that 
$A \infty B \ket{\psi} = (A \Pi_\mathcal{R}) \infty B \ket{\psi} + O(\Delta \mathcal{L}^2)$.
We do this by carefully inserting projectors. 
For Wilson loops, which are tensor product operators, we the twist product can be split up as follows (see Fig.~\ref{fig:fig_supp_twistproduct}),
\begin{align}
A \infty B = B_\textrm{low} A_\textrm{high} A_\textrm{low} B_\textrm{high}.
\end{align}
Our above result implies $\Pi_{\mathcal{R}} B_\textrm{high} \ket{\psi} = B_\textrm{high} \ket{\psi} + O(\Delta R/r)$.
We can subsequently pull projectors from $\ket{\psi}$ to cover the region of support of $A$, e.g. low and the parts of high.
Thus, we get 
\begin{align}
\begin{split}
    A \infty B \ket{\psi} = &B_\textrm{low} A_\textrm{high} \Pi_\textrm{high} A_\textrm{low} \Pi_\textrm{low} B_\textrm{high} \ket{\psi}\\ 
    & + O(\sqrt{\Delta R/r}) \\
    = &(A \Pi) \infty B \ket{\psi} + O(\sqrt{\Delta R/r})
\end{split}
\end{align}
as we wanted. 
Finally, we this implies the expectation value of the twist product approximately factorizes
$\bra{\psi} A \infty B \ket{\psi} = \bra{\psi} A \ket{\psi}\bra{\psi} B \ket{\psi} + O(\sqrt{\Delta R/r})$.
\end{proof}

Although we proved that the twist product must factorize for the trivial product state, this holds for a much wider class of short-range entangled states, generated from a trivial state by a finite-depth unitary circuit. In particular, using the same argument as Haah Lemma III.3 shows that, given a ($\Delta,r,t$)-locally invisible operator $O$ and state $\ket{\psi}$, if we evolve under a local unitary circuit $W$ of depth $d$, then the operator $WOW^\dagger$ is $(\Delta,r-d,t+2d)$-locally indistinguishable with respect to $W\ket{\psi}$.
This will be used to show that non-factorizability of the twist product lower bounds the depth of a quantum circuit required to produce the state from the trivial product state.

Finally, we can use the trace distance Lemma 3, to show the twist product does not factorize for $\ket{\psi_d}$. Specifically, let $A$ and $B$ be Wilson loops supported on an $\mathscr{L} \times \mathscr{L}$ region. Then,
\begin{align}
    \vert \bra{\psi_d} A \infty B \ket{\psi_d} - \bra{\psi_d} A \ket{\psi_d} \bra{\psi_d} B \ket{\psi_d} \vert \geq 2 - c \mathscr{L}^2 \epsilon
\end{align}
for a constant $c$.
Combining this Lemma 6 for trivial states, we see the bound is violated when
\begin{align}
    2 - c \mathscr{L}^2 \epsilon &> c' \sqrt{\Delta \mathscr{L} / r} \\
    2 - c \mathscr{L}^2 \epsilon  - c' 2 (r+1) \sqrt{\mathscr{L} \epsilon / r} &\geq 0
\end{align}

We choose $r$ to be a constant fraction of $\mathscr{L}$, and see that we can roughly make $\mathscr{L} \sim O(1/\sqrt{\epsilon})$ and still certify long-range order.
In particular, the state $\ket{\psi_d}$ cannot be generated by a finite depth circuit of depth smaller than $r \sim O(\mathscr{L})$.
Since $\ket{\psi_d}$ is connected to the input state by a depth-$d$ quantum circuit, this implies LED Wilson loops close to one certify topological order up to length-scales $O(\mathscr{L}-d)$.

\subsection{Connection Between LED and Entanglement Negativity}

The LED framework suggests a characterization of topological order based on the ability to distill the fixed-point wavefunction using generalized local unitary operations.
In this section, we connect this definition to the topological entanglement negativity of the input state, a typical observable used to detect topological order in mixed states~\cite{peres_separability_1996, horodecki_separability_1996}.

As described above, the circuit construction for LED involves applying local unitary transformations to the input mixed state $\rho_S$ and product state ancillas $\ket{0}_A\bra{0}$. States that are classified as topological by LED are connected via a finite-depth local unitary to the toric code fixed-point state and an ancillary register (see Theorem~\ref{thm:localGLU}).
$$\rho_{S} \otimes \ket{0}_{A}\bra{0} \rightarrow \ket{\psi_\textrm{TC}}_{S'}\bra{\psi_\textrm{TC}} \otimes \sigma_{A'}$$
In particular, all of the entropy in the input state, associated with both incoherent and coherent fluctuations away from the fixed point state, is transferred to the ancilla $\sigma_{A'}$. 

We conjecture that the ability to ``distill'' the toric-code fixed point state also implies the presence of a topological correction to the entanglement negativity of the input state.
Although we cannot prove this statement rigorously, we make the connection more precise in this section.
First, we make the plausible assumption that the ancilla $\sigma_{A'}$ is in a trivial mixed state.
This ensures all of the topological contribution to the entanglement negativity comes from the system part $S'$.
Then, we argue the topological contribution remains after applying inverse LED circuits (coarsening of the Wilson loops). In particular, if we consider local Clifford circuits, we can show this is indeed the case.
Together, these arguments suggest that the topological order witness provided by LED should also serve as a witness for other quantities like entanglement negativity in many scenarios of practical interest.

\begin{figure}
    \centering
    \includegraphics{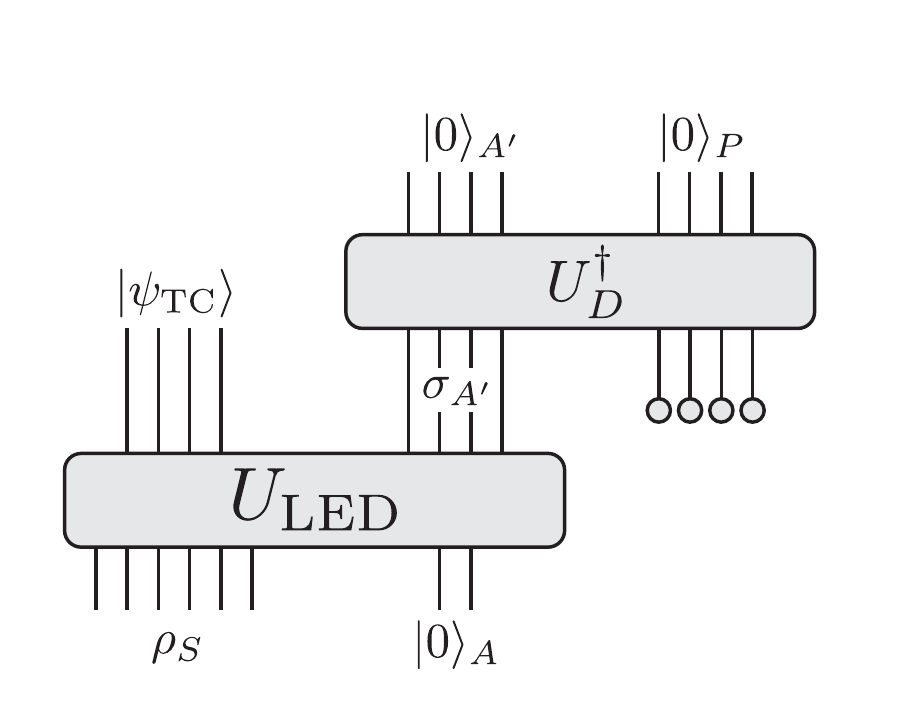}
    \caption{Theorem 1 shows that a positive LED classification implies the input mixed state $\rho_S$ and product state ancilla's ($A$) can be connected to an output state of the form $\ket{\psi_\textrm{TC}}_{S'}\bra{\psi_\textrm{TC}} \otimes \sigma_{A'}$, where $S'$ is typically smaller than $S$ due to coarse-graining. To make a connection between LED and entanglement negativity of $\rho_S$, we further assume $\sigma_{A'}$ is trivial. In particular, we assume there is a local purification of $\sigma_{A'}$ that is connected to the product state via local unitaries. Equivalently, we assume $\sigma_{A'}$ can be generated from a product state by applying local unitaries and tracing out some degrees of freedom.}
    \label{fig:local_purification}
\end{figure}

The key assumption we make is that the ancilla $\sigma_{A'}$ is in a trivial mixed state.
Concretely, we assume that there exists a \textit{local purification} of $\sigma_{A'}$ which is connected to a product state under local unitary circuits (Figure~\ref{fig:local_purification}).
This local purification is constructed by doubling the number of qubits, such that $\sigma_{A'} = \tr_{P} [\vert \Sigma \rangle \langle \Sigma \vert]_{A' P}$. Here, $P$ is the subsystem which implements the purification. Qubits in $P$ are placed next to qubits in $A'$, so that the local structure is preserved.
Then, we assume that there is a local unitary $U^\dagger_D$ which maps $\vert \Sigma \rangle$ to a product state. Thus, $\sigma_{A'} =  \tr_P [ U_D \vert 0 \rangle \langle  0 \vert U_D^\dagger ]$ can be ``locally purified'' to a product state.
We note that in the circuit model, the assumption about $\sigma_{A'}$ may be checkable via state tomography.

Taking into account the purified register of qubits, the output state can now be written as a pure state. Further, it is a stabilizer state, with toric code stabilizers $A_v, B_p$ for the $S'$ subsystem, and trivial stabilizers $Z_i$ for the combined $A' P$ subsystem. Note we assume for simplicity a topologically trivial manifold, so there are no non-trivial logical operators.
Our construction then implies that this stabilizer state is connected to a purification of the input state via local unitary circuits. If we then trace out the degrees of freedom associated with $P$, this produces the input mixed state.
We can leverage the stabilizer structure to make concrete statements about the entanglement negativity of this input mixed state.

First, we review the calculation of the entanglement negativity from Ref.~\cite{Lu_negativity_2022}, for the toric code output state. 
Consider a partition of the system into region $R$ and its complement $R'$. Note that $R$ is associated with a region of space, and hence contains sites from all three subsystems $S'$, $A'$, and $P$.
The entanglement negativity is given by $S_R = \log || (\rho')^{T_A} ||_1 = \log(\sum_i p'_i)$, where $T_A$ is the partial transpose of the output density matrix, and $p'_i$ are the associated eigenvalues. 
The partial transpose of the density matrix $\rho$, which is defined as
\begin{align}
    \rho = \sum_{i,i',j,j'} \rho_{i,i',j,j'} \vert i \rangle_{A} \langle i' \vert \otimes \vert j \rangle_{A^c} \langle j' \vert \\
    \rho^{T_A} = \sum_{i,i',j,j'} \rho_{i,i',j,j'} \vert i \rangle_{A} \langle i' \vert \otimes \vert j' \rangle_{A^c} \langle j \vert
\end{align}
where $\vert i\rangle_{A}$ and $\vert j \rangle_{A^c}$ form an orthonormal basis for the two parts of the system.
To start, notice that the only non-zero eigenvalue of the bare output state $\rho'$ corresponds to the simultaneous $+1$ eigenstate of all the stabilizers, $A_v, B_p$, and $Z_i$.

Under the inverse LED operations, each individual stabilizer is coarsened, making exact computation of the entanglement negativity difficult.
However, the LED circuit provides additional structure. Importantly, the toric code stabilizers map onto the system part. These become the LED Wilson loops which we measure in the input state.
In contrast, the trivial stabilizers can also map onto subsystem that will be traced out.
As a result, we expect the coarsened toric code stabilizers will still contribute a topological term in the entanglement negativity.
We conjecture this is true for any local unitary circuit.
However, we are able to calculate it if we consider the specific case of local Clifford circuits.
Note that the LED circuits which implement local pairing decoders are not Clifford circuits, and hence the argument does not rigorously apply in that case.

Let $A_v, B_p$ label the toric code stabilizers in the original state, and $Z_i$ the stabilizers of the ancilla.
Under the Clifford unitary, these stabilizers are mapped to coarsened versions, $\tilde{A}_v, \tilde{B}_p, \tilde{Z}_i$.

The reduced density matrix of the input state, before tracing out the auxilliary qubits, can be written as
\begin{align}
    \rho = \prod_v \frac{1+\tilde{A}_v}{2} \prod_p \frac{1+\tilde{B}_p}{2} \prod_i \frac{1+\tilde{Z}_i}{2}.
\end{align}
Suppose there are $N=V+P+A$ stabilizers and an equivalent number of qubits. This is the case if we are working on a topologically trivially manifold. 
Then, $\rho$ has a unique non-vanishing eigenstate with eigenvalue one.
Thus, $\log ||\rho||_1 = 0$.
Following Ref.~\cite{Lu_negativity_2022}, we can equivalently write
\begin{align}
    \rho = 2^{-N} \sum_{s,r,t} \prod_{v,p,i} \tilde{A}_v^{s_v} \tilde{B}_p^{r_p} \tilde{Z}_v^{t_i}.
\end{align}

Next, consider splitting the system into two regions, $R$ and $R'$, and take the partial transpose of $\rho$ on $R$. 
\begin{align}
    \rho^{T_R} = 2^{-N} \sum_{s,r,t} \prod_{v,p,i} \left( \tilde{A}_v^{s_v} \tilde{B}_p^{r_p} \tilde{Z}_v^{t_i} \right)^{T_R}
\end{align}
The key to computing the entanglement negativity is to simplify $\left( \tilde{A}_v^{s_v} \tilde{B}_p^{r_p} \tilde{Z}_i^{t_i} \right)^{T_R}$.

In the output state, which is simply the toric code state with product state ancillas, the entanglement negativity can be exactly computed, and has been shown to have a topological correction coming from the long-range topological order~\cite{lee_entanglement_2013,Lu_negativity_2022}.
We extend this computation to show the topological correction survives upon performing local Clifford circuits and tracing out the ancillary degrees of freedom.

We start by reviewing the computation in the output, fixed-point state.
The first thing to notice is that $Z_i$ trivially commutes with $A_v$ and $B_p$, because they are supported on different qubits. Second, $Z_i$ is invariant under partial transpose, since the operators are supported on a single site and hence cannot span the boundary between $R$ and $R'$.
Thus, we can pull the $Z_i$ out from the product.
$$\left( A_v^{s_v} B_p^{r_p} Z_i^{t_i} \right)^{T_R} \rightarrow \left( A_v^{s_v} B_p^{r_p} \right)^{T_R} Z_i^{t_i}$$

Next, we discuss the effect of partial transpose on the toric code stabilizers, following Ref.~\cite{Lu_negativity_2022}. 
The only stabilizers which are affected non-trivially are those that span the boundary.
In particular, if we consider a pair of stabilizers $A_v$ and $B_p$ that intersect at two sites, but only one of the sites is supported in $R$, then they pick up a minus sign under partial transpose (see Figure~\ref{fig:negativity_SPT}).
More generally, $(A_v^{s_v} B_p^{r_p})^{T_A} = (-1)^{C(s_v, r_p)} A_v^{s_v} B_p^{r_p}$ picks up an overall sign if the number of such intersections $C(s_v, r_p)$ is odd.

We can now discuss the spectrum of the partially transposed density matrix. Note that an orthonormal basis for the Hilbert space can be formed from simultaneous eigenstates of all the stabilizers. These states are labelled by the stabilizer eigenvalues, which can be $\pm 1$.
For non-trivial eigenstates of $\rho^{T_A}$, stabilizers which are unaffected by partial transpose must have eigenvalue one.
Thus, we can reduce the spectrum calculation to only stabilizers on the boundary.

Therefore, we can associate eigenstates with expectation values of ``strange correlators'' of the form $\langle + \vert \cdot \vert \psi \rangle$~\cite{Lu_negativity_2022}.
Order stabilizers around the boundary, so $A_v$ are on even sites and $B_p$ on odd sites, and consider specifically the correlators,
$$\bra{+} Z_{2i}^{(1+A_v)/2} Z_{2i+1}^{(1+B_p)/2} \ket{\psi}.$$
Here, $\psi(s) = (-1)^{\sum_i s_i s_{(i+1) \textrm{mod} L}}$ is a wavefunction that encodes the signs coming from partial transpose, and $\bra{+}=2^{-L} \sum_s \langle s \vert$ is a uniform superposition over all configurations.
The wavefunction for $\psi(s)$ is the same as the 1D cluster state.
Non-vanishing correlators are generated by products of $Z_{2i-1} X_{2i} Z_{2i+1}$ and $Z_{2i} X_{2i+1} Z_{2i+2}$, the stabilizers of the cluster state.
There are $2^{L-2}$ of such non-vanishing correlators, each with expectation value $\pm \langle + \vert \psi \rangle$.
The overlap $\langle + \vert \psi \rangle$ is $2^{-n+1}$, where $L=2n$.
Thus, the entanglement negativity is $S_N = (n-1)\log 2$ where $n$ is the size of the perimeter.
The constant correction is a signature of topological entanglement entropy.

Next, we discuss how the effect of finite-depth Clifford circuits on the entanglement negativity.
In particular, since Clifford circuits map stabilizers to stabilizers, eigenstates of $\rho^{T_A}$ can still be labelled by stabilizer expectation values, and the entanglement negativity can be calculated by enumerating non-zero strange correlators.
However, a few important things change.
The set of stabilizers which span the boundary, and hence may have non-trivial twist products with other stabilizers, is considerably larger. Any $A_v, B_p,$ or $Z_i$ stabilizer that is within distance $\ell$ of the boundary could contribute.
Applying the same procedure as in the case of the toric code, we can enumerate non-trivial strange correlators of the form $\langle + \vert O \vert \psi' \rangle$ for a different state $\psi'(s)$.
However, now the Hilbert space of $\psi'(s)$ has one qubit associated with each relevant $A_v, B_p$, or $Z_i$.

Now, we argue that new state $\psi'(s)$ preserves the same cluster-state structure as in the fixed-point case and still has long-range SPT order.
First, we show that $\psi'$ still has a $\mathbb{Z}_2 \times \mathbb{Z}_2$ symmetry, generated by the product of Pauli-$X$ operators on all qubits associated with $A_v$, or all qubits associated with $B_p$.
To see this, notice that the product of all relevant $A_v$ decomposes into two independent Wilson loops, each supported distance $\ell$ away from the boundary on the interior and exterior sides. Each of these Wilson loops individually has trivial twist-product with the other transformed stabilizers, implying that it can be applied (on either side) to the partially-transposed state $\rho^{T_A}$ without changing the state. This further implies that $\psi(s)$ is invariant under applying the Pauli-$X$ associated with each $A_v$ qubit.

Second, we show that there exists long-range SPT order in $\psi'(s)$. In particular, to do this, we show that there are coarse-grained stabilizers which have the same behavior under partial transpose as the stabilizers at the toric-code fixed point. 
Consider large Wilson loops generated by multiplying either $A_v$ or $B_p$ in a large contiguous region spanning the boundary (see Figure~\ref{fig:negativity_SPT}). In particular, if we look at regions that are offset, the large Wilson loops associated with $A_v$ or $B_p$ necessarily anti-commute on either side of the boundary, leading to a twist product of $-1$. 
Therefore, we can construct coarse-grained Pauli-`$Z$' and `$X$' operators in the Hilbert space of $\psi'$ simply by taking products of single-qubit Pauli operators. This argument shows that the coarse-grained $ZXZ$ operator is a stabilizer of the state.

Finally, we discuss the effect of the remaining degrees of freedom on the entanglement negativity.
The $\tilde{Z}_i$ stabilizers split into two groups: those that contain support on $P$, and those that are fully supported on $SA$.
Let us start with the first group. Since we trace out all of subsystem $P$, the effect on the $\tilde{Z}_i$ stabilizers is to replace the stabilizer state with the maximally mixed state
\begin{align}
    \mathrm{tr}_P\left[ \frac{1+\tilde{Z}_i}{2} \right] \rightarrow \frac{1}{2}I_i.
\end{align}
The operator $I/2$ has two eigenvectors, each with eigenvalue 1/2. Thus, any stabilizer with support on $P$ becomes fully disentangled from the rest of the system, and has no effect on the entanglement negativity. 
Next, consider stabilizers that are fully supported on $SA$. The $\tilde{Z}_i$ which are near the boundary between $R$ and $R'$ will potentially contribute to the entanglement negativity. 
Furthermore, they may have non-trivial commutation with $\tilde{A}_v$, $\tilde{B}_p$, and other $\tilde{Z}_i$ after partial transposition. Thus, they can contribute to the topological entanglement entropy.
However, since only operators near the boundary are affected by partial transpose, these operators only contribute an area-law contribution. 
Further, we note that they cannot change the long-range SPT order of $\psi(s)$, since they commute with the $\mathbb{Z}_2 \times \mathbb{Z}_2$ symmetry. As such, they also should not affect the value of the topological term.

The argument presented above relied on the fact that the stabilizers of the state were Pauli operators. This meant the spectrum of $\rho^{T_A}$ could be exactly computed, and the techniques of Ref.~\cite{Lu_negativity_2022} could be applied.
However, the key part of the proof involved generating a sub-algebra of coarse-grained Wilson loops with twist-product $-1$. 
LED provides a constructive method for measuring such coarse-grained Wilson loops, with non-trivial twist products, in generic mixed states, as explained in our proof of Theorem~\ref{thm:localGLU}. Hence, it would be interesting to directly link the existence of operators with non-trivial twist product to the topological entanglement negativity in general mixed states.

\begin{figure}
    \centering
    \includegraphics[width=1.0\columnwidth]{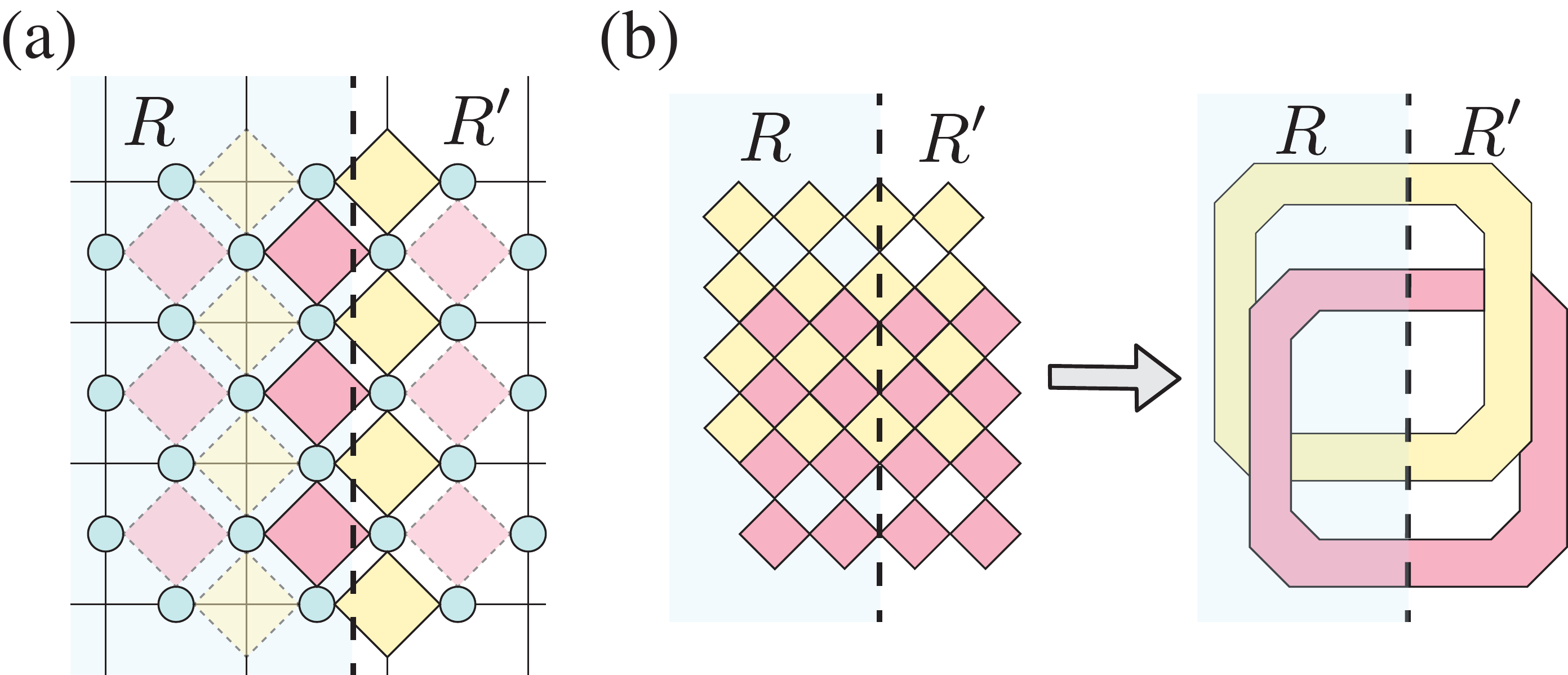}
    \caption{(a) Calculating entanglement negativity in the toric code state reduces to computing a ``strange correlator'' in a state that lives on the boundary, where qubits in the state encode stabilizer expectation values. Stabilizers which do not span the boundary do not commute. Hence, they are forced to be $+1$ and do not contribute to the strange correlators. In contrast, vertex and plaquette stabilizers, whose intersections span the boundary pick up $-1$ signs under partial transpose. This acts like a CZ gate on the boundary state, essentially preparing a cluster state (see text). The long-range SPT order of the cluster state leads to a topological correction to the entanglement negativity. (b) A local (Clifford) unitary circuit coarse-grains each of the stabilizers in the Heisenberg picture. As a result, a much larger number of stabilizers participate in the boundary state. However, we show the long-range SPT order survives, by looking at products of stabilizers. In particular, a product of a large block of stabilizers is supported only on the boundary. After the unitary circuit, this boundary gets coarsened, but for large enough blocks the intersection structure survives. This ensures the operators pick up $-1$ signs under partial transpose as well.}
    \label{fig:negativity_SPT}
\end{figure}

\subsection{Valence-Bond Solid Phase}

In the main text, we argued that the flow of closed-loop and open-string Wilson operators suggest that the large $\Delta/\Omega$ regime is most consistent with a decoherence-dominated disorder phase.
This in contrast to numerical simulations, which suggest the ground state in this regime is a valance bond solid (VBS) state~\cite{verresen_prediction_2020}.
The VBS fixed-point state can be understood as a single dimer covering, in contrast to the spin-liquid state which is a uniform superposition over dimer coverings. More generally, in VBS phase the $m$-anyons become condensed and the $e$-anyons become confined. 
As such, in the VBS phase, we would expect that closed $Z$-loops flow to $+1$, and open $Z$-strings flow to either $+1$ or $-1$ depending on how many dimers they intersect. 
However, our analysis shows that all $Z$-loops, both open and closed, stay near zero under LED flow.
In the main text, we average over all open $Z$-strings, so the vanishing signal could be due to cancellation of individual loops that flow to opposite values. In Figure~\ref{fig:novbs_flow}, we also plot a representative, single $Z$-string, and see the expectation value still remains close to zero at large $\Delta/\Omega$. 
Since the $X$-loops also remain at zero, this is the same signature of the high incoherent error regime of the toric code mixed-state phase diagram.
\begin{figure}
    \centering
    \includegraphics[width=0.95\columnwidth]{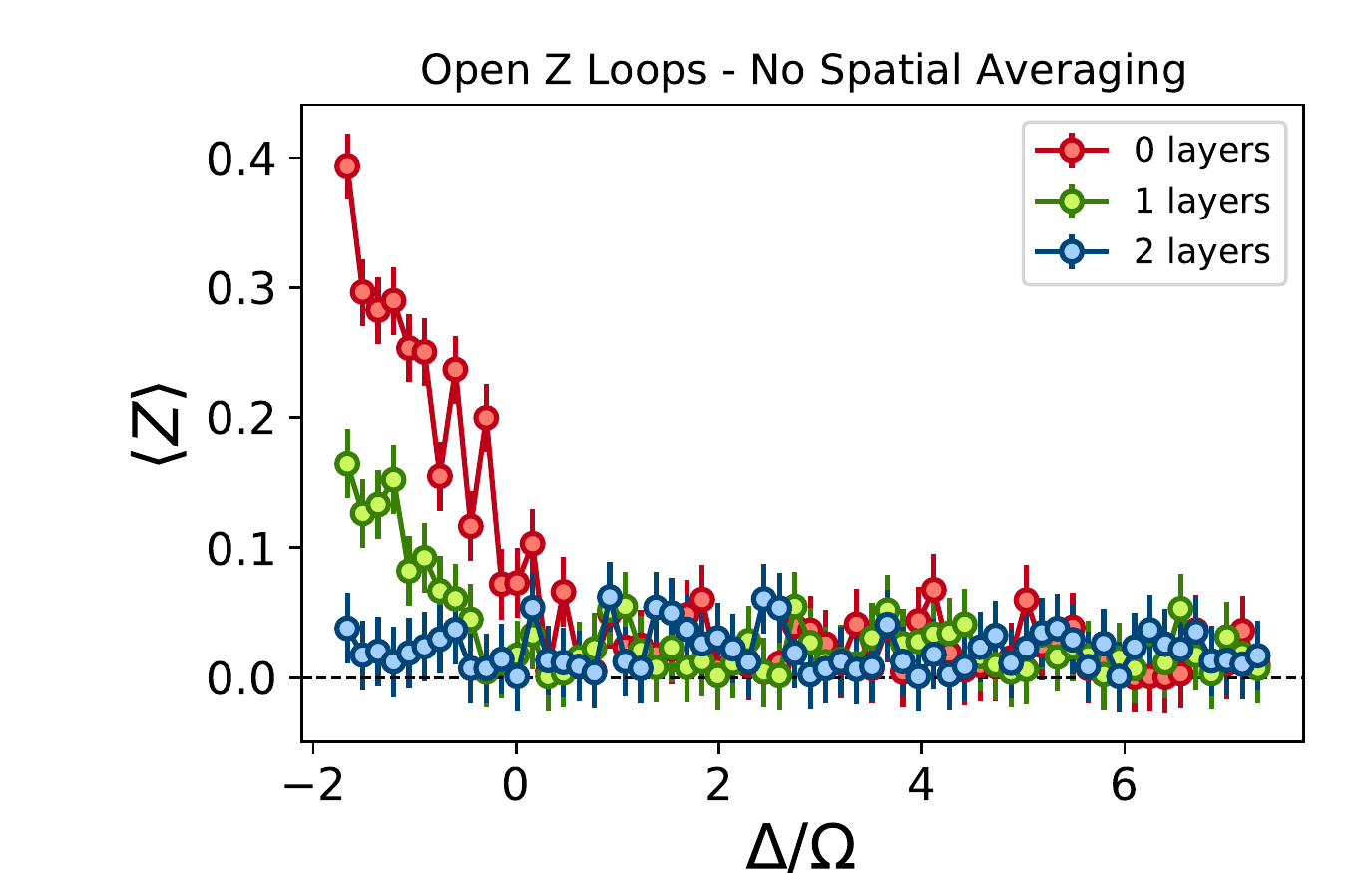}
    \caption{Flow under LED of a representative, single open $Z$-string. This string operator remains close to zero in the large $\Delta/\Omega$ regime, which is inconsistent with preparation of a VBS state.}
    \label{fig:novbs_flow}
\end{figure}

\subsection{Detecting Gauge-Glass Phases and More }

The LED techniques developed here can also be applied to characterize generalized toric code orders in higher dimensions.
The 2D toric code we study has point-like excitations. However, different models can have higher-dimensional excitations, such as line-like excitations appearing in the 2D Ising model and 3D gauge theories.
LED can also be applied in these cases.
In particular, simple local decoders for models with line-like excitations also exist. For example Toom's rule, as described in Ref.~\cite{breuckmann_local_2016}, can be made into a local decoder by restricting the number of update steps to some constant. The patch-based decoder can also be generalized to this case by locally identifying small contractible loops.
Another strategy would be to consider spatial slices of the model with line-like excitations. This would result in a lower-dimensional model with point-like excitations, for which the techniques presented in this paper can be directly applied.

A particularly interesting case study is the 3D random-plaquette gauge model. The degrees of freedom live on the links of the 3D square lattice. Products of spin operators around each plaquette measure the magnetic flux. Excitations are line-like magnetic flux tubes, coming from chains of excited plaquettes.
The model also has quenched disorder, so that the zero-temperature state has a finite density of magnetic flux tubes.
Hence, there are two sources of flux tubes, the quenched disorder as well as thermal fluctuations. Due to the gauge constraints, the combination of flux tubes from quenched disorder and thermal fluctuations always form closed loops.

Let $K_p$ be the parameter which determines the density of quenched disorder, and $K$ the effective temperature.
The model has three phases of interest to us. 
The first is an ordered phase for small $K$ and $K_p$, where the density of flux tubes is small.
The second is a disordered phase at large $K$ and $K_p$, where flux tubes condense.
These two phases can be distinguished by measuring the disorder averaged Wilson loop $[ \langle W(C) \rangle_K]_{K_p}$. In Ref.~\cite{Wang_gauge_glass_2003}, it is suggested that these two phases be identified by comparing perimeter-law decay vs. area-law decay in each of the distinct phases. 
Perimeter-law decay indicates that the fluctuations which cause the Wilson loop to decay are local, and hence characterizes the ordered phase.

The third phase is a disordered but glassy phase at large $K_p$ but small $K$. In this phase, quenched disorder is large, leading to a high density of flux tubes. 
As a result, the disorder averaged Wilson loop obeys an area-law, consistent with a disordered phase.
However, thermal fluctuations remain small, and the pattern of flux tubes is static when averaged over thermal fluctuations. Therefore, the square of the thermal expectation value, $[ \langle W(C) \rangle^2_K]_{K_p}$, should decay as perimeter-law and can hence distinguish the disordered phase from the glassy phase.

The LED observables we construct here can be used to more robustly classify these phases as well.
In particular, as explained in the main text, differentiating between perimeter-law and area-law is difficult in practice. Both of these signals decay exponentially in loop size, causing the sample complexity to increase significantly.
Further, area-law generically becomes mixed with perimeter-law in the presence of additional imperfections or perturbations, such as incoherent errors or gauge-constraint violating terms.
However, using LED, the Wilson loops will be amplified to one only in the ordered phase. Similarly, the squared Wilson loop $\langle W(C) \rangle^2_K$ would be amplified in the ordered and glassy phases. 
This observable can also be efficiently measured from snapshots, as long as we are able to sample multiple times from each quenched disorder realization. 
Another strategy, which could be more sample efficient, is the following: Given two independent snapshots $s_0,s_1$ from the same quenched disorder realization, one can compute the element-wise sum modulo two, $s = s_0 \oplus s_1$, and then measure the LED Wilson loop on the new snapshot $s$. Intuitively, such a procedure effectively removes the quenched disorder, which is fixed from snapshot-to-snapshot, leaving behind only the thermal fluctuations.
Identifying such gauge-glassy models is also difficult in numerical simulations, and hence LED and related ideas could also be applicable in those contexts.

\subsection{Annulus Decoders}

In this section, we consider an alternative approach to constructing error corrected operators, by applying  MWPM to an annulus.
Anyons supported on the annulus can be paired either with other anyons, or with the boundary of the annulus.
Interestingly, MWPM is a global decoder, where introducing or removing anyons can change the pairing far away.
However, by construction, the decoder cannot connect the interior of the annulus to the exterior.

We show in Figure~\ref{fig:supp_strip_decoder} the behavior of these corrected loop operators for coherent, incoherent, and mixed errors.
In particular, for the incoherent error model, we see the region classified as topological seems to coincide with the known error recovery threshold of $\approx 10.1\%$ for MWPM. This is much closer to the theoretical optimal error correction threshold of $\approx 10.9\%$, where we expect the phase transition from topological to disordered to occur.

Since this decoder is non-local, our existing theoretical arguments, and the connection to RG, may not apply. Nevertheless, an annulus decoder cannot change the super-selection sector---whether or not an anyon is contained within the annulus. Therefore, if the annulus-correction decorated loops go to one, then asymptotically large regions have well-defined super-selection sectors, suggesting the state is topologically-ordered. Indeed, it has been argued that such an annulus decoder could be considered a witness for topological entanglement entropy~\cite{wootton_witness_2012}.

\begin{figure}
    \centering
    \includegraphics[width=0.49\textwidth]{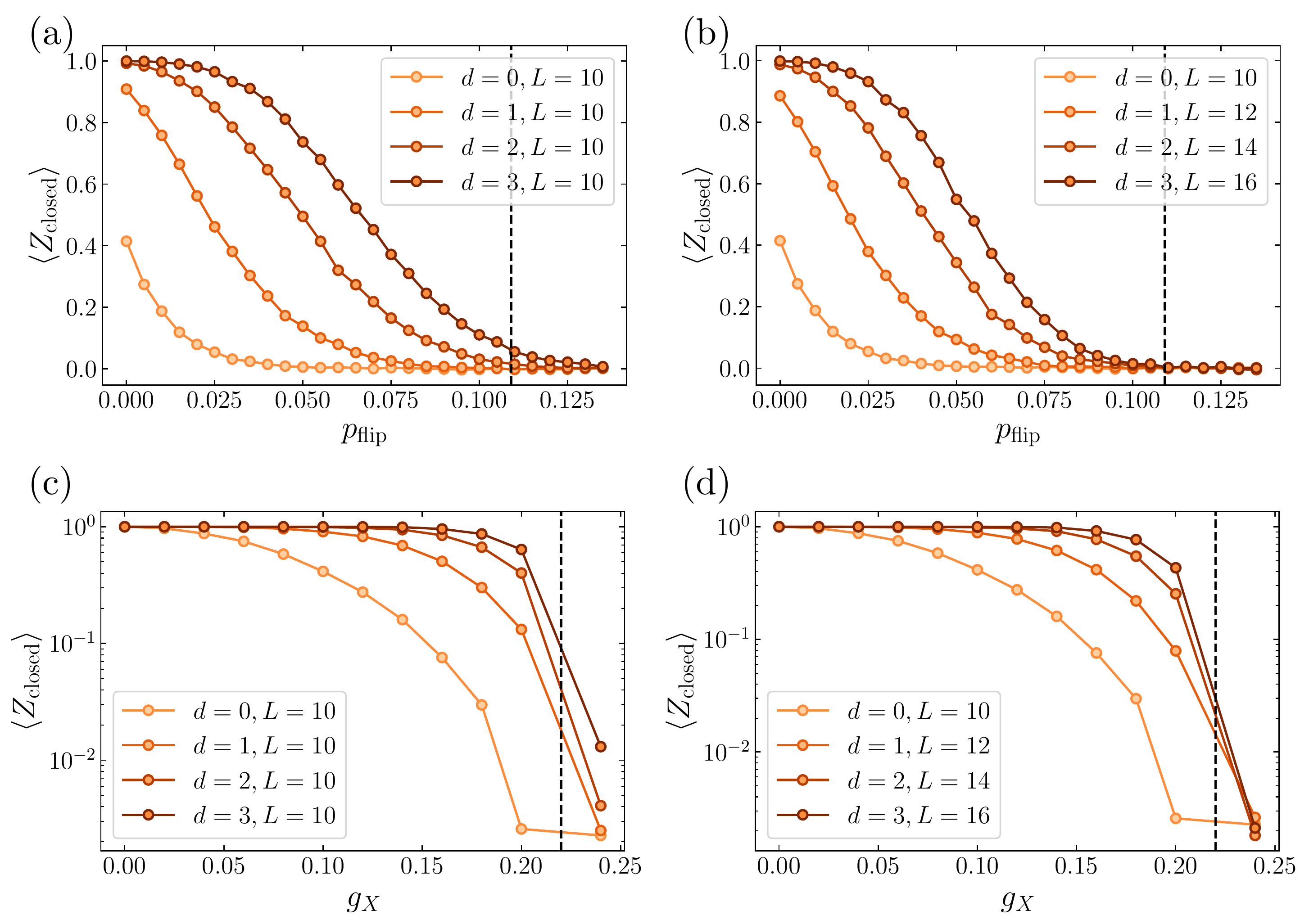}
    \caption{LED loops using MWPM on an annulus. We see the region classified as topological extends significantly further than the LED loops, for both incoherent (a,b) and coherent (c,d) perturbations. However, additional subtleties arise.  When we study fixed loop size and increasing annulus thickness (a,c), there appears to be amplification even on the trivial and disordered sides. However, if we scale the loopsize such that the size of the interior is fixed (b,d), then the improper amplification disappears. An interesting direction for future work is to develop a rigorous understanding of whether non-local annulus-based decorating can also serve as a topological order witness.}
    \label{fig:supp_strip_decoder}
\end{figure}

\vspace{2mm}
\noindent
{\bf Competing Interests.} M.D.L. is a co-founder and shareholder of QuEra Computing.

\vspace{1mm}
\noindent
{\bf Data Availability Statement.} The data that support the plots within this paper and other findings of this study are available from the corresponding author upon reasonable request.

\vspace{1mm}
\noindent
{\bf Code Availability Statement.} The code used to generate the plots within this paper and other findings of this study are available from the corresponding author upon reasonable request.

\vspace{1mm}
\noindent
{\bf Acknowledgments.} We thank E. Altman, Y. Bao, D. Bluvstein, Z.-P. Cian, S. Ebadi, G. Giudici, M. Hafezi, H.-Y. Huang, A. Kitaev, H. Levine, J. Preskill, S. Sachdev, R. Sahay, N. Tantivasadakarn, R. Verresen, A. Vishwanath, T. T. Wang, and X.-G. Wen for insightful discussions, and we especially thank D. Aasen and Z. Wang for providing helpful information and conversations on applying LED to non-abelian topological phases.  We also thank the Referee for the suggestion to consider gauge-glassy models. This work was supported by the US Department of Energy [DE-SC0021013 and DOE Quantum Systems Accelerator Center (contract no. 7568717)], the Defense Advanced Research Projects Agency (grant no. W911NF2010021), the National Science Foundation, the Department of Defense Multidisciplinary University Research Initiative (ARO MURI, grant no. W911NF2010082), and the Harvard-MIT Center for Ultracold Atoms. I.C. acknowledges support from the Alfred Spector and Rhonda Kost Fellowship of the Hertz Foundation, the Paul and Daisy Soros Fellowship, and the Department of Defense through the National Defense Science and Engineering Graduate Fellowship Program. N.M. acknowledges support from the Department of Energy Computational Science Graduate Fellowship under Award Number DE-SC0021110. HP acknowledges support from the ERC Starting grant no.~101041435 and the Erwin Schrödinger Center for Quantum Science and Technology.

\nocite{apsrev41Control}
\bibliographystyle{apsrev4-1}
\bibliography{refs}

\begin{thebibliography}{67}%
\makeatletter
\providecommand \@ifxundefined [1]{%
 \@ifx{#1\undefined}
}%
\providecommand \@ifnum [1]{%
 \ifnum #1\expandafter \@firstoftwo
 \else \expandafter \@secondoftwo
 \fi
}%
\providecommand \@ifx [1]{%
 \ifx #1\expandafter \@firstoftwo
 \else \expandafter \@secondoftwo
 \fi
}%
\providecommand \natexlab [1]{#1}%
\providecommand \enquote  [1]{``#1''}%
\providecommand \bibnamefont  [1]{#1}%
\providecommand \bibfnamefont [1]{#1}%
\providecommand \citenamefont [1]{#1}%
\providecommand \href@noop [0]{\@secondoftwo}%
\providecommand \href [0]{\begingroup \@sanitize@url \@href}%
\providecommand \@href[1]{\@@startlink{#1}\@@href}%
\providecommand \@@href[1]{\endgroup#1\@@endlink}%
\providecommand \@sanitize@url [0]{\catcode `\\12\catcode `\$12\catcode
  `\&12\catcode `\#12\catcode `\^12\catcode `\_12\catcode `\%12\relax}%
\providecommand \@@startlink[1]{}%
\providecommand \@@endlink[0]{}%
\providecommand \url  [0]{\begingroup\@sanitize@url \@url }%
\providecommand \@url [1]{\endgroup\@href {#1}{\urlprefix }}%
\providecommand \urlprefix  [0]{URL }%
\providecommand \Eprint [0]{\href }%
\providecommand \doibase [0]{http://dx.doi.org/}%
\providecommand \selectlanguage [0]{\@gobble}%
\providecommand \bibinfo  [0]{\@secondoftwo}%
\providecommand \bibfield  [0]{\@secondoftwo}%
\providecommand \translation [1]{[#1]}%
\providecommand \BibitemOpen [0]{}%
\providecommand \bibitemStop [0]{}%
\providecommand \bibitemNoStop [0]{.\EOS\space}%
\providecommand \EOS [0]{\spacefactor3000\relax}%
\providecommand \BibitemShut  [1]{\csname bibitem#1\endcsname}%
\let\auto@bib@innerbib\@empty
\bibitem [{\citenamefont {Wen}(2017)}]{wen_colloquium_2017}%
  \BibitemOpen
  \bibfield  {author} {\bibinfo {author} {\bibfnamefont {X.-G.}\ \bibnamefont
  {Wen}},\ }\bibfield  {title} {\enquote {\bibinfo {title} {Colloquium: {Zoo}
  of quantum-topological phases of matter},}\ }\href {\doibase
  10.1103/RevModPhys.89.041004} {\bibfield  {journal} {\bibinfo  {journal}
  {Rev. Mod. Phys.}\ }\textbf {\bibinfo {volume} {89}},\ \bibinfo {pages}
  {041004} (\bibinfo {year} {2017})},\ \bibinfo {note} {publisher: American
  Physical Society}\BibitemShut {NoStop}%
\bibitem [{\citenamefont {Nayak}\ \emph {et~al.}(2008)\citenamefont {Nayak},
  \citenamefont {Simon}, \citenamefont {Stern}, \citenamefont {Freedman},\ and\
  \citenamefont {Das~Sarma}}]{nayak_non-abelian_2008}%
  \BibitemOpen
  \bibfield  {author} {\bibinfo {author} {\bibfnamefont {C.}~\bibnamefont
  {Nayak}}, \bibinfo {author} {\bibfnamefont {S.~H.}\ \bibnamefont {Simon}},
  \bibinfo {author} {\bibfnamefont {A.}~\bibnamefont {Stern}}, \bibinfo
  {author} {\bibfnamefont {M.}~\bibnamefont {Freedman}}, \ and\ \bibinfo
  {author} {\bibfnamefont {S.}~\bibnamefont {Das~Sarma}},\ }\bibfield  {title}
  {\enquote {\bibinfo {title} {Non-abelian anyons and topological quantum
  computation},}\ }\href {\doibase 10.1103/RevModPhys.80.1083} {\bibfield
  {journal} {\bibinfo  {journal} {Rev. Mod. Phys.}\ }\textbf {\bibinfo {volume}
  {80}},\ \bibinfo {pages} {1083--1159} (\bibinfo {year} {2008})}\BibitemShut
  {NoStop}%
\bibitem [{\citenamefont {Terhal}(2015)}]{terhal_quantum_2015}%
  \BibitemOpen
  \bibfield  {author} {\bibinfo {author} {\bibfnamefont {B.~M.}\ \bibnamefont
  {Terhal}},\ }\bibfield  {title} {\enquote {\bibinfo {title} {Quantum error
  correction for quantum memories},}\ }\href {\doibase
  10.1103/RevModPhys.87.307} {\bibfield  {journal} {\bibinfo  {journal} {Rev.
  Mod. Phys.}\ }\textbf {\bibinfo {volume} {87}},\ \bibinfo {pages} {307--346}
  (\bibinfo {year} {2015})}\BibitemShut {NoStop}%
\bibitem [{\citenamefont {Sachdev}(1992)}]{sachdev_kagome-_1992}%
  \BibitemOpen
  \bibfield  {author} {\bibinfo {author} {\bibfnamefont {S.}~\bibnamefont
  {Sachdev}},\ }\bibfield  {title} {\enquote {\bibinfo {title} {Kagome- and
  triangular-lattice {Heisenberg} antiferromagnets: {Ordering} from quantum
  fluctuations and quantum-disordered ground states with unconfined bosonic
  spinons},}\ }\href {\doibase 10.1103/PhysRevB.45.12377} {\bibfield  {journal}
  {\bibinfo  {journal} {Phys. Rev. B}\ }\textbf {\bibinfo {volume} {45}},\
  \bibinfo {pages} {12377--12396} (\bibinfo {year} {1992})},\ \bibinfo {note}
  {publisher: American Physical Society}\BibitemShut {NoStop}%
\bibitem [{\citenamefont {Hastings}\ and\ \citenamefont
  {Wen}(2005)}]{hastings_quasiadiabatic_2005}%
  \BibitemOpen
  \bibfield  {author} {\bibinfo {author} {\bibfnamefont {M.~B.}\ \bibnamefont
  {Hastings}}\ and\ \bibinfo {author} {\bibfnamefont {X.-G.}\ \bibnamefont
  {Wen}},\ }\bibfield  {title} {\enquote {\bibinfo {title} {Quasiadiabatic
  continuation of quantum states: {The} stability of topological ground-state
  degeneracy and emergent gauge invariance},}\ }\href {\doibase
  10.1103/PhysRevB.72.045141} {\bibfield  {journal} {\bibinfo  {journal} {Phys.
  Rev. B}\ }\textbf {\bibinfo {volume} {72}},\ \bibinfo {pages} {045141}
  (\bibinfo {year} {2005})},\ \bibinfo {note} {publisher: American Physical
  Society}\BibitemShut {NoStop}%
\bibitem [{\citenamefont {Wilson}(1974)}]{wilson_confinement_1974}%
  \BibitemOpen
  \bibfield  {author} {\bibinfo {author} {\bibfnamefont {K.~G.}\ \bibnamefont
  {Wilson}},\ }\bibfield  {title} {\enquote {\bibinfo {title} {Confinement of
  quarks},}\ }\href {\doibase 10.1103/PhysRevD.10.2445} {\bibfield  {journal}
  {\bibinfo  {journal} {Physical Review D}\ }\textbf {\bibinfo {volume} {10}},\
  \bibinfo {pages} {2445--2459} (\bibinfo {year} {1974})}\BibitemShut {NoStop}%
\bibitem [{\citenamefont {Haah}(2016)}]{haah_invariant_2016}%
  \BibitemOpen
  \bibfield  {author} {\bibinfo {author} {\bibfnamefont {J.}~\bibnamefont
  {Haah}},\ }\bibfield  {title} {\enquote {\bibinfo {title} {An invariant of
  topologically ordered states under local unitary transformations},}\ }\href
  {https://link.springer.com/article/10.1007/s00220-016-2594-y} {\bibfield
  {journal} {\bibinfo  {journal} {Communications in Mathematical Physics}\
  }\textbf {\bibinfo {volume} {342}},\ \bibinfo {pages} {771--801} (\bibinfo
  {year} {2016})},\ \bibinfo {note} {publisher: Springer}\BibitemShut {NoStop}%
\bibitem [{\citenamefont {Kitaev}(2003)}]{kitaev_fault-tolerant_2003}%
  \BibitemOpen
  \bibfield  {author} {\bibinfo {author} {\bibfnamefont {A.}~\bibnamefont
  {Kitaev}},\ }\bibfield  {title} {\enquote {\bibinfo {title} {Fault-tolerant
  quantum computation by anyons},}\ }\href {\doibase
  10.1016/s0003-4916(02)00018-0} {\bibfield  {journal} {\bibinfo  {journal}
  {Annals of Physics}\ }\textbf {\bibinfo {volume} {303}},\ \bibinfo {pages}
  {2--30} (\bibinfo {year} {2003})},\ \bibinfo {note} {publisher: Elsevier
  BV}\BibitemShut {NoStop}%
\bibitem [{\citenamefont {Bridgeman}\ \emph {et~al.}(2016)\citenamefont
  {Bridgeman}, \citenamefont {Flammia},\ and\ \citenamefont
  {Poulin}}]{bridgeman_detecting_2016}%
  \BibitemOpen
  \bibfield  {author} {\bibinfo {author} {\bibfnamefont {J.~C.}\ \bibnamefont
  {Bridgeman}}, \bibinfo {author} {\bibfnamefont {S.~T.}\ \bibnamefont
  {Flammia}}, \ and\ \bibinfo {author} {\bibfnamefont {D.}~\bibnamefont
  {Poulin}},\ }\bibfield  {title} {\enquote {\bibinfo {title} {Detecting
  topological order with ribbon operators},}\ }\href {\doibase
  10.1103/PhysRevB.94.205123} {\bibfield  {journal} {\bibinfo  {journal}
  {Physical Review B}\ }\textbf {\bibinfo {volume} {94}},\ \bibinfo {pages}
  {205123} (\bibinfo {year} {2016})}\BibitemShut {NoStop}%
\bibitem [{\citenamefont {Iqbal}\ and\ \citenamefont
  {Schuch}(2021)}]{iqbal_entanglement_2021}%
  \BibitemOpen
  \bibfield  {author} {\bibinfo {author} {\bibfnamefont {M.}~\bibnamefont
  {Iqbal}}\ and\ \bibinfo {author} {\bibfnamefont {N.}~\bibnamefont {Schuch}},\
  }\bibfield  {title} {\enquote {\bibinfo {title} {Entanglement {Order}
  {Parameters} and {Critical} {Behavior} for {Topological} {Phase}
  {Transitions} and {Beyond}},}\ }\href {\doibase 10.1103/PhysRevX.11.041014}
  {\bibfield  {journal} {\bibinfo  {journal} {Physical Review X}\ }\textbf
  {\bibinfo {volume} {11}},\ \bibinfo {pages} {041014} (\bibinfo {year}
  {2021})}\BibitemShut {NoStop}%
\bibitem [{\citenamefont {Duivenvoorden}\ \emph {et~al.}(2017)\citenamefont
  {Duivenvoorden}, \citenamefont {Iqbal}, \citenamefont {Haegeman},
  \citenamefont {Verstraete},\ and\ \citenamefont
  {Schuch}}]{duivenvoorden_entanglement_2017}%
  \BibitemOpen
  \bibfield  {author} {\bibinfo {author} {\bibfnamefont {K.}~\bibnamefont
  {Duivenvoorden}}, \bibinfo {author} {\bibfnamefont {M.}~\bibnamefont
  {Iqbal}}, \bibinfo {author} {\bibfnamefont {J.}~\bibnamefont {Haegeman}},
  \bibinfo {author} {\bibfnamefont {F.}~\bibnamefont {Verstraete}}, \ and\
  \bibinfo {author} {\bibfnamefont {N.}~\bibnamefont {Schuch}},\ }\bibfield
  {title} {\enquote {\bibinfo {title} {Entanglement phases as holographic duals
  of anyon condensates},}\ }\href {\doibase 10.1103/PhysRevB.95.235119}
  {\bibfield  {journal} {\bibinfo  {journal} {Phys. Rev. B}\ }\textbf {\bibinfo
  {volume} {95}},\ \bibinfo {pages} {235119} (\bibinfo {year}
  {2017})}\BibitemShut {NoStop}%
\bibitem [{\citenamefont {Jamadagni}\ \emph {et~al.}(2022)\citenamefont
  {Jamadagni}, \citenamefont {Kazemi},\ and\ \citenamefont
  {Weimer}}]{jamadagni_learning_2022}%
  \BibitemOpen
  \bibfield  {author} {\bibinfo {author} {\bibfnamefont {A.}~\bibnamefont
  {Jamadagni}}, \bibinfo {author} {\bibfnamefont {J.}~\bibnamefont {Kazemi}}, \
  and\ \bibinfo {author} {\bibfnamefont {H.}~\bibnamefont {Weimer}},\ }\href
  {\doibase 10.48550/arxiv.2205.12966} {\enquote {\bibinfo {title} {Learning of
  error statistics for the detection of quantum phases},}\ } (\bibinfo {year}
  {2022})\BibitemShut {NoStop}%
\bibitem [{\citenamefont {Kitaev}\ and\ \citenamefont
  {Preskill}(2006)}]{kitaev_topological_2006}%
  \BibitemOpen
  \bibfield  {author} {\bibinfo {author} {\bibfnamefont {A.}~\bibnamefont
  {Kitaev}}\ and\ \bibinfo {author} {\bibfnamefont {J.}~\bibnamefont
  {Preskill}},\ }\bibfield  {title} {\enquote {\bibinfo {title} {Topological
  {Entanglement} {Entropy}},}\ }\href {\doibase 10.1103/PhysRevLett.96.110404}
  {\bibfield  {journal} {\bibinfo  {journal} {Phys. Rev. Lett.}\ }\textbf
  {\bibinfo {volume} {96}},\ \bibinfo {pages} {110404} (\bibinfo {year}
  {2006})},\ \bibinfo {note} {publisher: American Physical Society}\BibitemShut
  {NoStop}%
\bibitem [{\citenamefont {Levin}\ and\ \citenamefont
  {Wen}(2006)}]{levin_detecting_2006}%
  \BibitemOpen
  \bibfield  {author} {\bibinfo {author} {\bibfnamefont {M.}~\bibnamefont
  {Levin}}\ and\ \bibinfo {author} {\bibfnamefont {X.-G.}\ \bibnamefont
  {Wen}},\ }\bibfield  {title} {\enquote {\bibinfo {title} {Detecting
  {Topological} {Order} in a {Ground} {State} {Wave} {Function}},}\ }\href
  {\doibase 10.1103/PhysRevLett.96.110405} {\bibfield  {journal} {\bibinfo
  {journal} {Phys. Rev. Lett.}\ }\textbf {\bibinfo {volume} {96}},\ \bibinfo
  {pages} {110405} (\bibinfo {year} {2006})},\ \bibinfo {note} {publisher:
  American Physical Society}\BibitemShut {NoStop}%
\bibitem [{\citenamefont {Schuch}\ \emph {et~al.}(2011)\citenamefont {Schuch},
  \citenamefont {P\'erez-Garc\'{\i}a},\ and\ \citenamefont
  {Cirac}}]{schuch_classifying_2011}%
  \BibitemOpen
  \bibfield  {author} {\bibinfo {author} {\bibfnamefont {N.}~\bibnamefont
  {Schuch}}, \bibinfo {author} {\bibfnamefont {D.}~\bibnamefont
  {P\'erez-Garc\'{\i}a}}, \ and\ \bibinfo {author} {\bibfnamefont
  {I.}~\bibnamefont {Cirac}},\ }\bibfield  {title} {\enquote {\bibinfo {title}
  {Classifying quantum phases using matrix product states and projected
  entangled pair states},}\ }\href {\doibase 10.1103/PhysRevB.84.165139}
  {\bibfield  {journal} {\bibinfo  {journal} {Phys. Rev. B}\ }\textbf {\bibinfo
  {volume} {84}},\ \bibinfo {pages} {165139} (\bibinfo {year}
  {2011})}\BibitemShut {NoStop}%
\bibitem [{\citenamefont {Chen}\ \emph {et~al.}(2011)\citenamefont {Chen},
  \citenamefont {Gu},\ and\ \citenamefont {Wen}}]{chen_classification_2011}%
  \BibitemOpen
  \bibfield  {author} {\bibinfo {author} {\bibfnamefont {X.}~\bibnamefont
  {Chen}}, \bibinfo {author} {\bibfnamefont {Z.-C.}\ \bibnamefont {Gu}}, \ and\
  \bibinfo {author} {\bibfnamefont {X.-G.}\ \bibnamefont {Wen}},\ }\bibfield
  {title} {\enquote {\bibinfo {title} {Classification of gapped symmetric
  phases in one-dimensional spin systems},}\ }\href {\doibase
  10.1103/PhysRevB.83.035107} {\bibfield  {journal} {\bibinfo  {journal} {Phys.
  Rev. B}\ }\textbf {\bibinfo {volume} {83}},\ \bibinfo {pages} {035107}
  (\bibinfo {year} {2011})}\BibitemShut {NoStop}%
\bibitem [{\citenamefont {Cong}\ \emph {et~al.}(2019)\citenamefont {Cong},
  \citenamefont {Choi},\ and\ \citenamefont {Lukin}}]{cong_quantum_2019}%
  \BibitemOpen
  \bibfield  {author} {\bibinfo {author} {\bibfnamefont {I.}~\bibnamefont
  {Cong}}, \bibinfo {author} {\bibfnamefont {S.}~\bibnamefont {Choi}}, \ and\
  \bibinfo {author} {\bibfnamefont {M.~D.}\ \bibnamefont {Lukin}},\ }\bibfield
  {title} {\enquote {\bibinfo {title} {Quantum convolutional neural
  networks},}\ }\href {https://www.nature.com/articles/s41567-019-0648-8}
  {\bibfield  {journal} {\bibinfo  {journal} {Nature Physics}\ }\textbf
  {\bibinfo {volume} {15}},\ \bibinfo {pages} {1273--1278} (\bibinfo {year}
  {2019})},\ \bibinfo {note} {publisher: Nature Publishing Group}\BibitemShut
  {NoStop}%
\bibitem [{\citenamefont {Chen}\ \emph {et~al.}(2010)\citenamefont {Chen},
  \citenamefont {Gu},\ and\ \citenamefont {Wen}}]{chen_local_2010}%
  \BibitemOpen
  \bibfield  {author} {\bibinfo {author} {\bibfnamefont {X.}~\bibnamefont
  {Chen}}, \bibinfo {author} {\bibfnamefont {Z.-C.}\ \bibnamefont {Gu}}, \ and\
  \bibinfo {author} {\bibfnamefont {X.-G.}\ \bibnamefont {Wen}},\ }\bibfield
  {title} {\enquote {\bibinfo {title} {Local unitary transformation, long-range
  quantum entanglement, wave function renormalization, and topological
  order},}\ }\href {\doibase 10.1103/PhysRevB.82.155138} {\bibfield  {journal}
  {\bibinfo  {journal} {Physical Review B}\ }\textbf {\bibinfo {volume} {82}},\
  \bibinfo {pages} {155138} (\bibinfo {year} {2010})}\BibitemShut {NoStop}%
\bibitem [{\citenamefont {Haegeman}\ \emph {et~al.}(2015)\citenamefont
  {Haegeman}, \citenamefont {Van~Acoleyen}, \citenamefont {Schuch},
  \citenamefont {Cirac},\ and\ \citenamefont
  {Verstraete}}]{haegeman_gauging_2015}%
  \BibitemOpen
  \bibfield  {author} {\bibinfo {author} {\bibfnamefont {J.}~\bibnamefont
  {Haegeman}}, \bibinfo {author} {\bibfnamefont {K.}~\bibnamefont
  {Van~Acoleyen}}, \bibinfo {author} {\bibfnamefont {N.}~\bibnamefont
  {Schuch}}, \bibinfo {author} {\bibfnamefont {J.~I.}\ \bibnamefont {Cirac}}, \
  and\ \bibinfo {author} {\bibfnamefont {F.}~\bibnamefont {Verstraete}},\
  }\bibfield  {title} {\enquote {\bibinfo {title} {Gauging quantum states: From
  global to local symmetries in many-body systems},}\ }\href {\doibase
  10.1103/PhysRevX.5.011024} {\bibfield  {journal} {\bibinfo  {journal} {Phys.
  Rev. X}\ }\textbf {\bibinfo {volume} {5}},\ \bibinfo {pages} {011024}
  (\bibinfo {year} {2015})}\BibitemShut {NoStop}%
\bibitem [{\citenamefont {Zhu}\ and\ \citenamefont
  {Zhang}(2019)}]{zhu_gapless_2019}%
  \BibitemOpen
  \bibfield  {author} {\bibinfo {author} {\bibfnamefont {G.-Y.}\ \bibnamefont
  {Zhu}}\ and\ \bibinfo {author} {\bibfnamefont {G.-M.}\ \bibnamefont
  {Zhang}},\ }\bibfield  {title} {\enquote {\bibinfo {title} {Gapless coulomb
  state emerging from a self-dual topological tensor-network state},}\ }\href
  {\doibase 10.1103/PhysRevLett.122.176401} {\bibfield  {journal} {\bibinfo
  {journal} {Phys. Rev. Lett.}\ }\textbf {\bibinfo {volume} {122}},\ \bibinfo
  {pages} {176401} (\bibinfo {year} {2019})}\BibitemShut {NoStop}%
\bibitem [{\citenamefont {Castelnovo}\ and\ \citenamefont
  {Chamon}(2008)}]{castelnovo_quantum_2008}%
  \BibitemOpen
  \bibfield  {author} {\bibinfo {author} {\bibfnamefont {C.}~\bibnamefont
  {Castelnovo}}\ and\ \bibinfo {author} {\bibfnamefont {C.}~\bibnamefont
  {Chamon}},\ }\bibfield  {title} {\enquote {\bibinfo {title} {Quantum
  topological phase transition at the microscopic level},}\ }\href {\doibase
  10.1103/PhysRevB.77.054433} {\bibfield  {journal} {\bibinfo  {journal}
  {Physical Review B}\ }\textbf {\bibinfo {volume} {77}},\ \bibinfo {pages}
  {054433} (\bibinfo {year} {2008})}\BibitemShut {NoStop}%
\bibitem [{\citenamefont {Haah}\ \emph {et~al.}(2017)\citenamefont {Haah},
  \citenamefont {Harrow}, \citenamefont {Ji}, \citenamefont {Wu},\ and\
  \citenamefont {Yu}}]{haah_sample-optimal_2017}%
  \BibitemOpen
  \bibfield  {author} {\bibinfo {author} {\bibfnamefont {J.}~\bibnamefont
  {Haah}}, \bibinfo {author} {\bibfnamefont {A.~W.}\ \bibnamefont {Harrow}},
  \bibinfo {author} {\bibfnamefont {Z.}~\bibnamefont {Ji}}, \bibinfo {author}
  {\bibfnamefont {X.}~\bibnamefont {Wu}}, \ and\ \bibinfo {author}
  {\bibfnamefont {N.}~\bibnamefont {Yu}},\ }\bibfield  {title} {\enquote
  {\bibinfo {title} {Sample-optimal tomography of quantum states},}\
  }\href@noop {} {\bibfield  {journal} {\bibinfo  {journal} {IEEE Transactions
  on Information Theory}\ }\textbf {\bibinfo {volume} {63}},\ \bibinfo {pages}
  {5628--5641} (\bibinfo {year} {2017})}\BibitemShut {NoStop}%
\bibitem [{\citenamefont {Dennis}\ \emph {et~al.}(2002)\citenamefont {Dennis},
  \citenamefont {Kitaev}, \citenamefont {Landahl},\ and\ \citenamefont
  {Preskill}}]{dennis_topological_2002}%
  \BibitemOpen
  \bibfield  {author} {\bibinfo {author} {\bibfnamefont {E.}~\bibnamefont
  {Dennis}}, \bibinfo {author} {\bibfnamefont {A.}~\bibnamefont {Kitaev}},
  \bibinfo {author} {\bibfnamefont {A.}~\bibnamefont {Landahl}}, \ and\
  \bibinfo {author} {\bibfnamefont {J.}~\bibnamefont {Preskill}},\ }\bibfield
  {title} {\enquote {\bibinfo {title} {Topological quantum memory},}\ }\href
  {https://aip.scitation.org/doi/abs/10.1063/1.1499754?casa_token=y5v53vsrjlkAAAAA:DgN_sGM45zLTVWLzUYGbUeo76-eLRSnK0G98LV2RmdaKpbRbSiCtuNo6klj0UXgHpECX9kg9_g}
  {\bibfield  {journal} {\bibinfo  {journal} {Journal of Mathematical Physics}\
  }\textbf {\bibinfo {volume} {43}},\ \bibinfo {pages} {4452--4505} (\bibinfo
  {year} {2002})},\ \bibinfo {note} {publisher: American Institute of
  Physics}\BibitemShut {NoStop}%
\bibitem [{\citenamefont {Fredenhagen}\ and\ \citenamefont
  {Marcu}(1983)}]{fredenhagen_charged_1983}%
  \BibitemOpen
  \bibfield  {author} {\bibinfo {author} {\bibfnamefont {K.}~\bibnamefont
  {Fredenhagen}}\ and\ \bibinfo {author} {\bibfnamefont {M.}~\bibnamefont
  {Marcu}},\ }\bibfield  {title} {\enquote {\bibinfo {title} {Charged states in
  $\mathbb{Z}_2$ gauge theories},}\ }\href {\doibase 10.1007/BF01206315}
  {\bibfield  {journal} {\bibinfo  {journal} {Communications in Mathematical
  Physics}\ }\textbf {\bibinfo {volume} {92}},\ \bibinfo {pages} {81--119}
  (\bibinfo {year} {1983})}\BibitemShut {NoStop}%
\bibitem [{\citenamefont {Peres}(1996)}]{peres_separability_1996}%
  \BibitemOpen
  \bibfield  {author} {\bibinfo {author} {\bibfnamefont {A.}~\bibnamefont
  {Peres}},\ }\bibfield  {title} {\enquote {\bibinfo {title} {Separability
  {Criterion} for {Density} {Matrices}},}\ }\href {\doibase
  10.1103/PhysRevLett.77.1413} {\bibfield  {journal} {\bibinfo  {journal}
  {Physical Review Letters}\ }\textbf {\bibinfo {volume} {77}},\ \bibinfo
  {pages} {1413--1415} (\bibinfo {year} {1996})}\BibitemShut {NoStop}%
\bibitem [{\citenamefont {Horodecki}\ \emph {et~al.}(1996)\citenamefont
  {Horodecki}, \citenamefont {Horodecki},\ and\ \citenamefont
  {Horodecki}}]{horodecki_separability_1996}%
  \BibitemOpen
  \bibfield  {author} {\bibinfo {author} {\bibfnamefont {M.}~\bibnamefont
  {Horodecki}}, \bibinfo {author} {\bibfnamefont {P.}~\bibnamefont
  {Horodecki}}, \ and\ \bibinfo {author} {\bibfnamefont {R.}~\bibnamefont
  {Horodecki}},\ }\bibfield  {title} {\enquote {\bibinfo {title} {Separability
  of mixed states: necessary and sufficient conditions},}\ }\href {\doibase
  10.1016/S0375-9601(96)00706-2} {\bibfield  {journal} {\bibinfo  {journal}
  {Physics Letters A}\ }\textbf {\bibinfo {volume} {223}},\ \bibinfo {pages}
  {1--8} (\bibinfo {year} {1996})}\BibitemShut {NoStop}%
\bibitem [{\citenamefont {Lee}\ and\ \citenamefont
  {Vidal}(2013)}]{lee_entanglement_2013}%
  \BibitemOpen
  \bibfield  {author} {\bibinfo {author} {\bibfnamefont {Y.~A.}\ \bibnamefont
  {Lee}}\ and\ \bibinfo {author} {\bibfnamefont {G.}~\bibnamefont {Vidal}},\
  }\bibfield  {title} {\enquote {\bibinfo {title} {Entanglement negativity and
  topological order},}\ }\href {\doibase 10.1103/PhysRevA.88.042318} {\bibfield
   {journal} {\bibinfo  {journal} {Physical Review A}\ }\textbf {\bibinfo
  {volume} {88}},\ \bibinfo {pages} {042318} (\bibinfo {year}
  {2013})}\BibitemShut {NoStop}%
\bibitem [{\citenamefont {Semeghini}\ \emph {et~al.}(2021)\citenamefont
  {Semeghini}, \citenamefont {Levine}, \citenamefont {Keesling}, \citenamefont
  {Ebadi}, \citenamefont {Wang}, \citenamefont {Bluvstein}, \citenamefont
  {Verresen}, \citenamefont {Pichler}, \citenamefont {Kalinowski},
  \citenamefont {Samajdar} \emph {et~al.}}]{semeghini_probing_2021}%
  \BibitemOpen
  \bibfield  {author} {\bibinfo {author} {\bibfnamefont {G.}~\bibnamefont
  {Semeghini}}, \bibinfo {author} {\bibfnamefont {H.}~\bibnamefont {Levine}},
  \bibinfo {author} {\bibfnamefont {A.}~\bibnamefont {Keesling}}, \bibinfo
  {author} {\bibfnamefont {S.}~\bibnamefont {Ebadi}}, \bibinfo {author}
  {\bibfnamefont {T.~T.}\ \bibnamefont {Wang}}, \bibinfo {author}
  {\bibfnamefont {D.}~\bibnamefont {Bluvstein}}, \bibinfo {author}
  {\bibfnamefont {R.}~\bibnamefont {Verresen}}, \bibinfo {author}
  {\bibfnamefont {H.}~\bibnamefont {Pichler}}, \bibinfo {author} {\bibfnamefont
  {M.}~\bibnamefont {Kalinowski}}, \bibinfo {author} {\bibfnamefont
  {R.}~\bibnamefont {Samajdar}},  \emph {et~al.},\ }\bibfield  {title}
  {\enquote {\bibinfo {title} {Probing topological spin liquids on a
  programmable quantum simulator},}\ }\href
  {https://www.science.org/doi/abs/10.1126/science.abi8794?casa_token=hN-sFNnX-MsAAAAA:E2ijykvkTVX4xDLVI59N73zbkLpvenTZ4yQxXATICtFfG7663A7e7SiWCGaLBPAdP9KrDKLzVi2CFno}
  {\bibfield  {journal} {\bibinfo  {journal} {Science}\ }\textbf {\bibinfo
  {volume} {374}},\ \bibinfo {pages} {1242--1247} (\bibinfo {year}
  {2021})}\BibitemShut {NoStop}%
\bibitem [{\citenamefont {Verresen}\ \emph
  {et~al.}(2021{\natexlab{a}})\citenamefont {Verresen}, \citenamefont {Lukin},\
  and\ \citenamefont {Vishwanath}}]{verresen_prediction_2020}%
  \BibitemOpen
  \bibfield  {author} {\bibinfo {author} {\bibfnamefont {R.}~\bibnamefont
  {Verresen}}, \bibinfo {author} {\bibfnamefont {M.~D.}\ \bibnamefont {Lukin}},
  \ and\ \bibinfo {author} {\bibfnamefont {A.}~\bibnamefont {Vishwanath}},\
  }\bibfield  {title} {\enquote {\bibinfo {title} {Prediction of toric code
  topological order from {R}ydberg blockade},}\ }\href {\doibase
  10.1103/PhysRevX.11.031005} {\bibfield  {journal} {\bibinfo  {journal} {Phys.
  Rev. X}\ }\textbf {\bibinfo {volume} {11}},\ \bibinfo {pages} {031005}
  (\bibinfo {year} {2021}{\natexlab{a}})}\BibitemShut {NoStop}%
\bibitem [{\citenamefont {Saffman}\ \emph {et~al.}(2010)\citenamefont
  {Saffman}, \citenamefont {Walker},\ and\ \citenamefont
  {M\o{}lmer}}]{saffman_quantum_2010}%
  \BibitemOpen
  \bibfield  {author} {\bibinfo {author} {\bibfnamefont {M.}~\bibnamefont
  {Saffman}}, \bibinfo {author} {\bibfnamefont {T.~G.}\ \bibnamefont {Walker}},
  \ and\ \bibinfo {author} {\bibfnamefont {K.}~\bibnamefont {M\o{}lmer}},\
  }\bibfield  {title} {\enquote {\bibinfo {title} {Quantum information with
  rydberg atoms},}\ }\href {\doibase 10.1103/RevModPhys.82.2313} {\bibfield
  {journal} {\bibinfo  {journal} {Rev. Mod. Phys.}\ }\textbf {\bibinfo {volume}
  {82}},\ \bibinfo {pages} {2313--2363} (\bibinfo {year} {2010})}\BibitemShut
  {NoStop}%
\bibitem [{\citenamefont {Misguich}\ \emph {et~al.}(2002)\citenamefont
  {Misguich}, \citenamefont {Serban},\ and\ \citenamefont
  {Pasquier}}]{misguich_quantum_2002}%
  \BibitemOpen
  \bibfield  {author} {\bibinfo {author} {\bibfnamefont {G.}~\bibnamefont
  {Misguich}}, \bibinfo {author} {\bibfnamefont {D.}~\bibnamefont {Serban}}, \
  and\ \bibinfo {author} {\bibfnamefont {V.}~\bibnamefont {Pasquier}},\
  }\bibfield  {title} {\enquote {\bibinfo {title} {Quantum {Dimer} {Model} on
  the {Kagome} {Lattice}: {Solvable} {Dimer}-{Liquid} and {Ising} {Gauge}
  {Theory}},}\ }\href {\doibase 10.1103/PhysRevLett.89.137202} {\bibfield
  {journal} {\bibinfo  {journal} {Phys. Rev. Lett.}\ }\textbf {\bibinfo
  {volume} {89}},\ \bibinfo {pages} {137202} (\bibinfo {year} {2002})},\
  \bibinfo {note} {publisher: American Physical Society}\BibitemShut {NoStop}%
\bibitem [{\citenamefont {Poilblanc}\ \emph {et~al.}(2012)\citenamefont
  {Poilblanc}, \citenamefont {Schuch}, \citenamefont {P{\'e}rez-Garc{\'i}a},\
  and\ \citenamefont {Cirac}}]{poilblanc_topological_2012}%
  \BibitemOpen
  \bibfield  {author} {\bibinfo {author} {\bibfnamefont {D.}~\bibnamefont
  {Poilblanc}}, \bibinfo {author} {\bibfnamefont {N.}~\bibnamefont {Schuch}},
  \bibinfo {author} {\bibfnamefont {D.}~\bibnamefont {P{\'e}rez-Garc{\'i}a}}, \
  and\ \bibinfo {author} {\bibfnamefont {J.~I.}\ \bibnamefont {Cirac}},\
  }\bibfield  {title} {\enquote {\bibinfo {title} {Topological and entanglement
  properties of resonating valence bond wave functions},}\ }\href {\doibase
  10.1103/PhysRevB.86.014404} {\bibfield  {journal} {\bibinfo  {journal} {Phys.
  Rev. B}\ }\textbf {\bibinfo {volume} {86}},\ \bibinfo {pages} {014404}
  (\bibinfo {year} {2012})},\ \bibinfo {note} {publisher: American Physical
  Society}\BibitemShut {NoStop}%
\bibitem [{\citenamefont {Samajdar}\ \emph {et~al.}(2022)\citenamefont
  {Samajdar}, \citenamefont {Joshi}, \citenamefont {Teng},\ and\ \citenamefont
  {Sachdev}}]{samajdar_emergent_2022}%
  \BibitemOpen
  \bibfield  {author} {\bibinfo {author} {\bibfnamefont {R.}~\bibnamefont
  {Samajdar}}, \bibinfo {author} {\bibfnamefont {D.~G.}\ \bibnamefont {Joshi}},
  \bibinfo {author} {\bibfnamefont {Y.}~\bibnamefont {Teng}}, \ and\ \bibinfo
  {author} {\bibfnamefont {S.}~\bibnamefont {Sachdev}},\ }\href {\doibase
  10.48550/ARXIV.2204.00632} {\enquote {\bibinfo {title} {Emergent
  $\mathbb{Z}_2$ gauge theories and topological excitations in {R}ydberg atom
  arrays},}\ } (\bibinfo {year} {2022})\BibitemShut {NoStop}%
\bibitem [{\citenamefont {Tarabunga}\ \emph {et~al.}(2022)\citenamefont
  {Tarabunga}, \citenamefont {Surace}, \citenamefont {Andreoni}, \citenamefont
  {Angelone},\ and\ \citenamefont {Dalmonte}}]{tarabunga_gauge-theoretic_2022}%
  \BibitemOpen
  \bibfield  {author} {\bibinfo {author} {\bibfnamefont {P.~S.}\ \bibnamefont
  {Tarabunga}}, \bibinfo {author} {\bibfnamefont {F.~M.}\ \bibnamefont
  {Surace}}, \bibinfo {author} {\bibfnamefont {R.}~\bibnamefont {Andreoni}},
  \bibinfo {author} {\bibfnamefont {A.}~\bibnamefont {Angelone}}, \ and\
  \bibinfo {author} {\bibfnamefont {M.}~\bibnamefont {Dalmonte}},\ }\href
  {\doibase 10.48550/ARXIV.2205.13000} {\enquote {\bibinfo {title}
  {Gauge-theoretic origin of rydberg quantum spin liquids},}\ } (\bibinfo
  {year} {2022})\BibitemShut {NoStop}%
\bibitem [{\citenamefont {Verresen}\ and\ \citenamefont
  {Vishwanath}(2022)}]{verresen_unifying_2022}%
  \BibitemOpen
  \bibfield  {author} {\bibinfo {author} {\bibfnamefont {R.}~\bibnamefont
  {Verresen}}\ and\ \bibinfo {author} {\bibfnamefont {A.}~\bibnamefont
  {Vishwanath}},\ }\href {\doibase 10.48550/ARXIV.2205.15302} {\enquote
  {\bibinfo {title} {Unifying {K}itaev magnets, kagome dimer models and ruby
  {R}ydberg spin liquids},}\ } (\bibinfo {year} {2022})\BibitemShut {NoStop}%
\bibitem [{\citenamefont {Bricmont}\ and\ \citenamefont
  {Fr\"{o}lich}(1983)}]{bricmont_order_1983}%
  \BibitemOpen
  \bibfield  {author} {\bibinfo {author} {\bibfnamefont {J.}~\bibnamefont
  {Bricmont}}\ and\ \bibinfo {author} {\bibfnamefont {J.}~\bibnamefont
  {Fr\"{o}lich}},\ }\bibfield  {title} {\enquote {\bibinfo {title} {An order
  parameter distinguishing between different phases of lattice gauge theories
  with matter fields},}\ }\href {\doibase 10.1016/0370-2693(83)91171-1}
  {\bibfield  {journal} {\bibinfo  {journal} {Physics Letters B}\ }\textbf
  {\bibinfo {volume} {122}},\ \bibinfo {pages} {73--77} (\bibinfo {year}
  {1983})}\BibitemShut {NoStop}%
\bibitem [{\citenamefont {Gregor}\ \emph {et~al.}(2011)\citenamefont {Gregor},
  \citenamefont {Huse}, \citenamefont {Moessner},\ and\ \citenamefont
  {Sondhi}}]{gregor_diagnosing_2011}%
  \BibitemOpen
  \bibfield  {author} {\bibinfo {author} {\bibfnamefont {K.}~\bibnamefont
  {Gregor}}, \bibinfo {author} {\bibfnamefont {D.~A.}\ \bibnamefont {Huse}},
  \bibinfo {author} {\bibfnamefont {R.}~\bibnamefont {Moessner}}, \ and\
  \bibinfo {author} {\bibfnamefont {S.~L.}\ \bibnamefont {Sondhi}},\ }\bibfield
   {title} {\enquote {\bibinfo {title} {Diagnosing deconfinement and
  topological order},}\ }\href@noop {} {\bibfield  {journal} {\bibinfo
  {journal} {New Journal of Physics}\ }\textbf {\bibinfo {volume} {13}},\
  \bibinfo {pages} {025009} (\bibinfo {year} {2011})}\BibitemShut {NoStop}%
\bibitem [{\citenamefont {Zhu}\ \emph {et~al.}(2020)\citenamefont {Zhu},
  \citenamefont {Lavasani},\ and\ \citenamefont
  {Barkeshli}}]{zhu_universal_2020}%
  \BibitemOpen
  \bibfield  {author} {\bibinfo {author} {\bibfnamefont {G.}~\bibnamefont
  {Zhu}}, \bibinfo {author} {\bibfnamefont {A.}~\bibnamefont {Lavasani}}, \
  and\ \bibinfo {author} {\bibfnamefont {M.}~\bibnamefont {Barkeshli}},\
  }\bibfield  {title} {\enquote {\bibinfo {title} {Universal logical gates on
  topologically encoded qubits via constant-depth unitary circuits},}\ }\href
  {\doibase 10.1103/PhysRevLett.125.050502} {\bibfield  {journal} {\bibinfo
  {journal} {Phys. Rev. Lett.}\ }\textbf {\bibinfo {volume} {125}},\ \bibinfo
  {pages} {050502} (\bibinfo {year} {2020})}\BibitemShut {NoStop}%
\bibitem [{\citenamefont {K\"onig}\ \emph {et~al.}(2009)\citenamefont
  {K\"onig}, \citenamefont {Reichardt},\ and\ \citenamefont
  {Vidal}}]{konig_exact_2009}%
  \BibitemOpen
  \bibfield  {author} {\bibinfo {author} {\bibfnamefont {R.}~\bibnamefont
  {K\"onig}}, \bibinfo {author} {\bibfnamefont {B.~W.}\ \bibnamefont
  {Reichardt}}, \ and\ \bibinfo {author} {\bibfnamefont {G.}~\bibnamefont
  {Vidal}},\ }\bibfield  {title} {\enquote {\bibinfo {title} {Exact
  entanglement renormalization for string-net models},}\ }\href {\doibase
  10.1103/PhysRevB.79.195123} {\bibfield  {journal} {\bibinfo  {journal} {Phys.
  Rev. B}\ }\textbf {\bibinfo {volume} {79}},\ \bibinfo {pages} {195123}
  (\bibinfo {year} {2009})}\BibitemShut {NoStop}%
\bibitem [{\citenamefont {Levin}\ and\ \citenamefont
  {Wen}(2005)}]{levin_string-net_2005}%
  \BibitemOpen
  \bibfield  {author} {\bibinfo {author} {\bibfnamefont {M.~A.}\ \bibnamefont
  {Levin}}\ and\ \bibinfo {author} {\bibfnamefont {X.-G.}\ \bibnamefont
  {Wen}},\ }\bibfield  {title} {\enquote {\bibinfo {title} {String-net
  condensation: A physical mechanism for topological phases},}\ }\href
  {\doibase 10.1103/PhysRevB.71.045110} {\bibfield  {journal} {\bibinfo
  {journal} {Phys. Rev. B}\ }\textbf {\bibinfo {volume} {71}},\ \bibinfo
  {pages} {045110} (\bibinfo {year} {2005})}\BibitemShut {NoStop}%
\bibitem [{\citenamefont {Wang}(2010)}]{wang_topological_2010}%
  \BibitemOpen
  \bibfield  {author} {\bibinfo {author} {\bibfnamefont {Z.}~\bibnamefont
  {Wang}},\ }\href@noop {} {\emph {\bibinfo {title} {Topological quantum
  computation}}},\ \bibinfo {number} {112}\ (\bibinfo  {publisher} {American
  Mathematical Soc.},\ \bibinfo {year} {2010})\BibitemShut {NoStop}%
\bibitem [{\citenamefont {Bakalov}\ and\ \citenamefont
  {Kirillov}(2001)}]{bakalov_lectures_2001}%
  \BibitemOpen
  \bibfield  {author} {\bibinfo {author} {\bibfnamefont {B.}~\bibnamefont
  {Bakalov}}\ and\ \bibinfo {author} {\bibfnamefont {A.~A.}\ \bibnamefont
  {Kirillov}},\ }\href@noop {} {\emph {\bibinfo {title} {Lectures on tensor
  categories and modular functors}}},\ Vol.~\bibinfo {volume} {21}\ (\bibinfo
  {publisher} {American Mathematical Soc.},\ \bibinfo {year}
  {2001})\BibitemShut {NoStop}%
\bibitem [{\citenamefont {Huang}\ \emph {et~al.}(2022)\citenamefont {Huang},
  \citenamefont {Kueng}, \citenamefont {Torlai}, \citenamefont {Albert},\ and\
  \citenamefont {Preskill}}]{huang_provably_2022}%
  \BibitemOpen
  \bibfield  {author} {\bibinfo {author} {\bibfnamefont {H.-Y.}\ \bibnamefont
  {Huang}}, \bibinfo {author} {\bibfnamefont {R.}~\bibnamefont {Kueng}},
  \bibinfo {author} {\bibfnamefont {G.}~\bibnamefont {Torlai}}, \bibinfo
  {author} {\bibfnamefont {V.~V.}\ \bibnamefont {Albert}}, \ and\ \bibinfo
  {author} {\bibfnamefont {J.}~\bibnamefont {Preskill}},\ }\bibfield  {title}
  {\enquote {\bibinfo {title} {Provably efficient machine learning for quantum
  many-body problems},}\ }\href@noop {} {\bibfield  {journal} {\bibinfo
  {journal} {Science}\ }\textbf {\bibinfo {volume} {377}},\ \bibinfo {pages}
  {eabk3333} (\bibinfo {year} {2022})}\BibitemShut {NoStop}%
\bibitem [{\citenamefont {Wang}\ \emph {et~al.}(2003)\citenamefont {Wang},
  \citenamefont {Harrington},\ and\ \citenamefont
  {Preskill}}]{Wang_gauge_glass_2003}%
  \BibitemOpen
  \bibfield  {author} {\bibinfo {author} {\bibfnamefont {C.}~\bibnamefont
  {Wang}}, \bibinfo {author} {\bibfnamefont {J.}~\bibnamefont {Harrington}}, \
  and\ \bibinfo {author} {\bibfnamefont {J.}~\bibnamefont {Preskill}},\
  }\bibfield  {title} {\enquote {\bibinfo {title} {Confinement-higgs transition
  in a disordered gauge theory and the accuracy threshold for quantum
  memory},}\ }\href {\doibase 10.1016/s0003-4916(02)00019-2} {\bibfield
  {journal} {\bibinfo  {journal} {Annals of Physics}\ }\textbf {\bibinfo
  {volume} {303}},\ \bibinfo {pages} {31--58} (\bibinfo {year}
  {2003})}\BibitemShut {NoStop}%
\bibitem [{\citenamefont {Satzinger}\ \emph {et~al.}(2021)\citenamefont
  {Satzinger}, \citenamefont {Liu}, \citenamefont {Smith}, \citenamefont
  {Knapp}, \citenamefont {Newman}, \citenamefont {Jones}, \citenamefont {Chen},
  \citenamefont {Quintana}, \citenamefont {Mi}, \citenamefont {Dunsworth} \emph
  {et~al.}}]{satzinger_realizing_2021}%
  \BibitemOpen
  \bibfield  {author} {\bibinfo {author} {\bibfnamefont {K.}~\bibnamefont
  {Satzinger}}, \bibinfo {author} {\bibfnamefont {Y.-J.}\ \bibnamefont {Liu}},
  \bibinfo {author} {\bibfnamefont {A.}~\bibnamefont {Smith}}, \bibinfo
  {author} {\bibfnamefont {C.}~\bibnamefont {Knapp}}, \bibinfo {author}
  {\bibfnamefont {M.}~\bibnamefont {Newman}}, \bibinfo {author} {\bibfnamefont
  {C.}~\bibnamefont {Jones}}, \bibinfo {author} {\bibfnamefont
  {Z.}~\bibnamefont {Chen}}, \bibinfo {author} {\bibfnamefont {C.}~\bibnamefont
  {Quintana}}, \bibinfo {author} {\bibfnamefont {X.}~\bibnamefont {Mi}},
  \bibinfo {author} {\bibfnamefont {A.}~\bibnamefont {Dunsworth}},  \emph
  {et~al.},\ }\bibfield  {title} {\enquote {\bibinfo {title} {Realizing
  topologically ordered states on a quantum processor},}\ }\href
  {https://www.science.org/doi/10.1126/science.abi8378} {\bibfield  {journal}
  {\bibinfo  {journal} {Science}\ }\textbf {\bibinfo {volume} {374}},\ \bibinfo
  {pages} {1237--1241} (\bibinfo {year} {2021})}\BibitemShut {NoStop}%
\bibitem [{\citenamefont {Stricker}\ \emph {et~al.}(2020)\citenamefont
  {Stricker}, \citenamefont {Vodola}, \citenamefont {Erhard}, \citenamefont
  {Postler}, \citenamefont {Meth}, \citenamefont {Ringbauer}, \citenamefont
  {Schindler}, \citenamefont {Monz}, \citenamefont {M{\"u}ller},\ and\
  \citenamefont {Blatt}}]{stricker_experimental_2020}%
  \BibitemOpen
  \bibfield  {author} {\bibinfo {author} {\bibfnamefont {R.}~\bibnamefont
  {Stricker}}, \bibinfo {author} {\bibfnamefont {D.}~\bibnamefont {Vodola}},
  \bibinfo {author} {\bibfnamefont {A.}~\bibnamefont {Erhard}}, \bibinfo
  {author} {\bibfnamefont {L.}~\bibnamefont {Postler}}, \bibinfo {author}
  {\bibfnamefont {M.}~\bibnamefont {Meth}}, \bibinfo {author} {\bibfnamefont
  {M.}~\bibnamefont {Ringbauer}}, \bibinfo {author} {\bibfnamefont
  {P.}~\bibnamefont {Schindler}}, \bibinfo {author} {\bibfnamefont
  {T.}~\bibnamefont {Monz}}, \bibinfo {author} {\bibfnamefont {M.}~\bibnamefont
  {M{\"u}ller}}, \ and\ \bibinfo {author} {\bibfnamefont {R.}~\bibnamefont
  {Blatt}},\ }\bibfield  {title} {\enquote {\bibinfo {title} {Experimental
  deterministic correction of qubit loss},}\ }\href {\doibase
  10.1038/s41586-020-2667-0} {\bibfield  {journal} {\bibinfo  {journal}
  {Nature}\ }\textbf {\bibinfo {volume} {585}},\ \bibinfo {pages} {207--210}
  (\bibinfo {year} {2020})}\BibitemShut {NoStop}%
\bibitem [{\citenamefont {Jamadagni}\ and\ \citenamefont
  {Weimer}(2022)}]{jamadagni_operational_2022}%
  \BibitemOpen
  \bibfield  {author} {\bibinfo {author} {\bibfnamefont {A.}~\bibnamefont
  {Jamadagni}}\ and\ \bibinfo {author} {\bibfnamefont {H.}~\bibnamefont
  {Weimer}},\ }\bibfield  {title} {\enquote {\bibinfo {title} {Operational
  definition of topological order},}\ }\href {\doibase
  10.1103/PhysRevB.106.085143} {\bibfield  {journal} {\bibinfo  {journal}
  {Phys. Rev. B}\ }\textbf {\bibinfo {volume} {106}},\ \bibinfo {pages}
  {085143} (\bibinfo {year} {2022})}\BibitemShut {NoStop}%
\bibitem [{\citenamefont {Bao}\ \emph {et~al.}(2023)\citenamefont {Bao},
  \citenamefont {Fan}, \citenamefont {Vishwanath},\ and\ \citenamefont
  {Altman}}]{bao_mixed_2023}%
  \BibitemOpen
  \bibfield  {author} {\bibinfo {author} {\bibfnamefont {Y.}~\bibnamefont
  {Bao}}, \bibinfo {author} {\bibfnamefont {R.}~\bibnamefont {Fan}}, \bibinfo
  {author} {\bibfnamefont {A.}~\bibnamefont {Vishwanath}}, \ and\ \bibinfo
  {author} {\bibfnamefont {E.}~\bibnamefont {Altman}},\ }\href@noop {}
  {\enquote {\bibinfo {title} {Mixed-state topological order and the errorfield
  double formulation of decoherence-induced transitions},}\ } (\bibinfo {year}
  {2023}),\ \Eprint {http://arxiv.org/abs/2301.05687} {arXiv:2301.05687
  [quant-ph]} \BibitemShut {NoStop}%
\bibitem [{\citenamefont {Hastings}(2011)}]{hastings_topological_2011}%
  \BibitemOpen
  \bibfield  {author} {\bibinfo {author} {\bibfnamefont {M.~B.}\ \bibnamefont
  {Hastings}},\ }\bibfield  {title} {\enquote {\bibinfo {title} {Topological
  order at nonzero temperature},}\ }\href {\doibase
  10.1103/PhysRevLett.107.210501} {\bibfield  {journal} {\bibinfo  {journal}
  {Phys. Rev. Lett.}\ }\textbf {\bibinfo {volume} {107}},\ \bibinfo {pages}
  {210501} (\bibinfo {year} {2011})}\BibitemShut {NoStop}%
\bibitem [{\citenamefont {Verresen}\ \emph
  {et~al.}(2021{\natexlab{b}})\citenamefont {Verresen}, \citenamefont
  {Tantivasadakarn},\ and\ \citenamefont
  {Vishwanath}}]{verresen_efficiently_2022}%
  \BibitemOpen
  \bibfield  {author} {\bibinfo {author} {\bibfnamefont {R.}~\bibnamefont
  {Verresen}}, \bibinfo {author} {\bibfnamefont {N.}~\bibnamefont
  {Tantivasadakarn}}, \ and\ \bibinfo {author} {\bibfnamefont {A.}~\bibnamefont
  {Vishwanath}},\ }\href {\doibase 10.48550/ARXIV.2112.03061} {\enquote
  {\bibinfo {title} {Efficiently preparing {S}chrödinger's cat, fractons and
  non-abelian topological order in quantum devices},}\ } (\bibinfo {year}
  {2021}{\natexlab{b}})\BibitemShut {NoStop}%
\bibitem [{\citenamefont {Schuch}\ \emph {et~al.}(2012)\citenamefont {Schuch},
  \citenamefont {Poilblanc}, \citenamefont {Cirac},\ and\ \citenamefont
  {P{\'e}rez-Garc{\'i}a}}]{schuch_resonating_2012}%
  \BibitemOpen
  \bibfield  {author} {\bibinfo {author} {\bibfnamefont {N.}~\bibnamefont
  {Schuch}}, \bibinfo {author} {\bibfnamefont {D.}~\bibnamefont {Poilblanc}},
  \bibinfo {author} {\bibfnamefont {J.~I.}\ \bibnamefont {Cirac}}, \ and\
  \bibinfo {author} {\bibfnamefont {D.}~\bibnamefont {P{\'e}rez-Garc{\'i}a}},\
  }\bibfield  {title} {\enquote {\bibinfo {title} {Resonating valence bond
  states in the {PEPS} formalism},}\ }\href {\doibase
  10.1103/PhysRevB.86.115108} {\bibfield  {journal} {\bibinfo  {journal} {Phys.
  Rev. B}\ }\textbf {\bibinfo {volume} {86}},\ \bibinfo {pages} {115108}
  (\bibinfo {year} {2012})},\ \bibinfo {note} {publisher: American Physical
  Society}\BibitemShut {NoStop}%
\bibitem [{\citenamefont {Schuch}\ \emph {et~al.}(2007)\citenamefont {Schuch},
  \citenamefont {Wolf}, \citenamefont {Verstraete},\ and\ \citenamefont
  {Cirac}}]{schuch_computational_2007}%
  \BibitemOpen
  \bibfield  {author} {\bibinfo {author} {\bibfnamefont {N.}~\bibnamefont
  {Schuch}}, \bibinfo {author} {\bibfnamefont {M.~M.}\ \bibnamefont {Wolf}},
  \bibinfo {author} {\bibfnamefont {F.}~\bibnamefont {Verstraete}}, \ and\
  \bibinfo {author} {\bibfnamefont {J.~I.}\ \bibnamefont {Cirac}},\ }\bibfield
  {title} {\enquote {\bibinfo {title} {Computational complexity of projected
  entangled pair states},}\ }\href {\doibase 10.1103/PhysRevLett.98.140506}
  {\bibfield  {journal} {\bibinfo  {journal} {Phys. Rev. Lett.}\ }\textbf
  {\bibinfo {volume} {98}},\ \bibinfo {pages} {140506} (\bibinfo {year}
  {2007})}\BibitemShut {NoStop}%
\bibitem [{\citenamefont {Cirac}\ \emph {et~al.}(2021)\citenamefont {Cirac},
  \citenamefont {P\'erez-Garc\'{\i}a}, \citenamefont {Schuch},\ and\
  \citenamefont {Verstraete}}]{cirac_matrix_2021}%
  \BibitemOpen
  \bibfield  {author} {\bibinfo {author} {\bibfnamefont {J.~I.}\ \bibnamefont
  {Cirac}}, \bibinfo {author} {\bibfnamefont {D.}~\bibnamefont
  {P\'erez-Garc\'{\i}a}}, \bibinfo {author} {\bibfnamefont {N.}~\bibnamefont
  {Schuch}}, \ and\ \bibinfo {author} {\bibfnamefont {F.}~\bibnamefont
  {Verstraete}},\ }\bibfield  {title} {\enquote {\bibinfo {title} {Matrix
  product states and projected entangled pair states: Concepts, symmetries,
  theorems},}\ }\href {\doibase 10.1103/RevModPhys.93.045003} {\bibfield
  {journal} {\bibinfo  {journal} {Rev. Mod. Phys.}\ }\textbf {\bibinfo {volume}
  {93}},\ \bibinfo {pages} {045003} (\bibinfo {year} {2021})}\BibitemShut
  {NoStop}%
\bibitem [{\citenamefont {Napp}\ \emph {et~al.}(2022)\citenamefont {Napp},
  \citenamefont {La~Placa}, \citenamefont {Dalzell}, \citenamefont
  {Brand\~ao},\ and\ \citenamefont {Harrow}}]{napp_efficient_2022}%
  \BibitemOpen
  \bibfield  {author} {\bibinfo {author} {\bibfnamefont {J.~C.}\ \bibnamefont
  {Napp}}, \bibinfo {author} {\bibfnamefont {R.~L.}\ \bibnamefont {La~Placa}},
  \bibinfo {author} {\bibfnamefont {A.~M.}\ \bibnamefont {Dalzell}}, \bibinfo
  {author} {\bibfnamefont {F.~G. S.~L.}\ \bibnamefont {Brand\~ao}}, \ and\
  \bibinfo {author} {\bibfnamefont {A.~W.}\ \bibnamefont {Harrow}},\ }\bibfield
   {title} {\enquote {\bibinfo {title} {Efficient classical simulation of
  random shallow 2d quantum circuits},}\ }\href {\doibase
  10.1103/PhysRevX.12.021021} {\bibfield  {journal} {\bibinfo  {journal} {Phys.
  Rev. X}\ }\textbf {\bibinfo {volume} {12}},\ \bibinfo {pages} {021021}
  (\bibinfo {year} {2022})}\BibitemShut {NoStop}%
\bibitem [{\citenamefont {Vidal}(2008)}]{vidal_class_2008}%
  \BibitemOpen
  \bibfield  {author} {\bibinfo {author} {\bibfnamefont {G.}~\bibnamefont
  {Vidal}},\ }\bibfield  {title} {\enquote {\bibinfo {title} {Class of quantum
  many-body states that can be efficiently simulated},}\ }\href {\doibase
  10.1103/PhysRevLett.101.110501} {\bibfield  {journal} {\bibinfo  {journal}
  {Phys. Rev. Lett.}\ }\textbf {\bibinfo {volume} {101}},\ \bibinfo {pages}
  {110501} (\bibinfo {year} {2008})}\BibitemShut {NoStop}%
\bibitem [{\citenamefont {Duclos-Cianci}\ and\ \citenamefont
  {Poulin}(2013)}]{duclos-cianci_fault-tolerant_2013}%
  \BibitemOpen
  \bibfield  {author} {\bibinfo {author} {\bibfnamefont {G.}~\bibnamefont
  {Duclos-Cianci}}\ and\ \bibinfo {author} {\bibfnamefont {D.}~\bibnamefont
  {Poulin}},\ }\bibfield  {title} {\enquote {\bibinfo {title} {Fault-tolerant
  renormalization group decoder for abelian topological codes},}\ }\href
  {\doibase 10.48550/ARXIV.1304.6100} {\  (\bibinfo {year} {2013}),\
  10.48550/ARXIV.1304.6100}\BibitemShut {NoStop}%
\bibitem [{\citenamefont {Fowler}\ \emph {et~al.}(2012)\citenamefont {Fowler},
  \citenamefont {Mariantoni}, \citenamefont {Martinis},\ and\ \citenamefont
  {Cleland}}]{fowler_surface_2012}%
  \BibitemOpen
  \bibfield  {author} {\bibinfo {author} {\bibfnamefont {A.~G.}\ \bibnamefont
  {Fowler}}, \bibinfo {author} {\bibfnamefont {M.}~\bibnamefont {Mariantoni}},
  \bibinfo {author} {\bibfnamefont {J.~M.}\ \bibnamefont {Martinis}}, \ and\
  \bibinfo {author} {\bibfnamefont {A.~N.}\ \bibnamefont {Cleland}},\
  }\bibfield  {title} {\enquote {\bibinfo {title} {Surface codes: Towards
  practical large-scale quantum computation},}\ }\href {\doibase
  10.1103/PhysRevA.86.032324} {\bibfield  {journal} {\bibinfo  {journal} {Phys.
  Rev. A}\ }\textbf {\bibinfo {volume} {86}},\ \bibinfo {pages} {032324}
  (\bibinfo {year} {2012})}\BibitemShut {NoStop}%
\bibitem [{\citenamefont {McClean}\ \emph {et~al.}(2018)\citenamefont
  {McClean}, \citenamefont {Boixo}, \citenamefont {Smelyanskiy}, \citenamefont
  {Babbush},\ and\ \citenamefont {Neven}}]{mcclean_barren_2018}%
  \BibitemOpen
  \bibfield  {author} {\bibinfo {author} {\bibfnamefont {J.~R.}\ \bibnamefont
  {McClean}}, \bibinfo {author} {\bibfnamefont {S.}~\bibnamefont {Boixo}},
  \bibinfo {author} {\bibfnamefont {V.~N.}\ \bibnamefont {Smelyanskiy}},
  \bibinfo {author} {\bibfnamefont {R.}~\bibnamefont {Babbush}}, \ and\
  \bibinfo {author} {\bibfnamefont {H.}~\bibnamefont {Neven}},\ }\bibfield
  {title} {\enquote {\bibinfo {title} {Barren plateaus in quantum neural
  network training landscapes},}\ }\href@noop {} {\bibfield  {journal}
  {\bibinfo  {journal} {Nature communications}\ }\textbf {\bibinfo {volume}
  {9}},\ \bibinfo {pages} {1--6} (\bibinfo {year} {2018})}\BibitemShut
  {NoStop}%
\bibitem [{\citenamefont {Pesah}\ \emph {et~al.}(2021)\citenamefont {Pesah},
  \citenamefont {Cerezo}, \citenamefont {Wang}, \citenamefont {Volkoff},
  \citenamefont {Sornborger},\ and\ \citenamefont
  {Coles}}]{pesah_absence_2021}%
  \BibitemOpen
  \bibfield  {author} {\bibinfo {author} {\bibfnamefont {A.}~\bibnamefont
  {Pesah}}, \bibinfo {author} {\bibfnamefont {M.}~\bibnamefont {Cerezo}},
  \bibinfo {author} {\bibfnamefont {S.}~\bibnamefont {Wang}}, \bibinfo {author}
  {\bibfnamefont {T.}~\bibnamefont {Volkoff}}, \bibinfo {author} {\bibfnamefont
  {A.~T.}\ \bibnamefont {Sornborger}}, \ and\ \bibinfo {author} {\bibfnamefont
  {P.~J.}\ \bibnamefont {Coles}},\ }\bibfield  {title} {\enquote {\bibinfo
  {title} {Absence of barren plateaus in quantum convolutional neural
  networks},}\ }\href {\doibase 10.1103/PhysRevX.11.041011} {\bibfield
  {journal} {\bibinfo  {journal} {Phys. Rev. X}\ }\textbf {\bibinfo {volume}
  {11}},\ \bibinfo {pages} {041011} (\bibinfo {year} {2021})}\BibitemShut
  {NoStop}%
\bibitem [{\citenamefont {Lu}\ and\ \citenamefont
  {Vijay}(2022)}]{Lu_negativity_2022}%
  \BibitemOpen
  \bibfield  {author} {\bibinfo {author} {\bibfnamefont {T.-C.}\ \bibnamefont
  {Lu}}\ and\ \bibinfo {author} {\bibfnamefont {S.}~\bibnamefont {Vijay}},\
  }\bibfield  {title} {\enquote {\bibinfo {title} {Characterizing long-range
  entanglement in a mixed state through an emergent order on the entangling
  surface},}\ }\href {\doibase 10.48550/ARXIV.2201.07792} {\  (\bibinfo {year}
  {2022}),\ 10.48550/ARXIV.2201.07792}\BibitemShut {NoStop}%
\bibitem [{\citenamefont {Fan}\ \emph {et~al.}(2023)\citenamefont {Fan},
  \citenamefont {Bao}, \citenamefont {Altman},\ and\ \citenamefont
  {Vishwanath}}]{fan2023diagnostics}%
  \BibitemOpen
  \bibfield  {author} {\bibinfo {author} {\bibfnamefont {R.}~\bibnamefont
  {Fan}}, \bibinfo {author} {\bibfnamefont {Y.}~\bibnamefont {Bao}}, \bibinfo
  {author} {\bibfnamefont {E.}~\bibnamefont {Altman}}, \ and\ \bibinfo {author}
  {\bibfnamefont {A.}~\bibnamefont {Vishwanath}},\ }\bibfield  {title}
  {\enquote {\bibinfo {title} {Diagnostics of mixed-state topological order and
  breakdown of quantum memory},}\ }\href@noop {} {\  (\bibinfo {year}
  {2023})},\ \Eprint {http://arxiv.org/abs/2301.05689} {arXiv:2301.05689
  [quant-ph]} \BibitemShut {NoStop}%
\bibitem [{\citenamefont {Aguado}\ and\ \citenamefont
  {Vidal}(2008)}]{aguado_entanglement_2008}%
  \BibitemOpen
  \bibfield  {author} {\bibinfo {author} {\bibfnamefont {M.}~\bibnamefont
  {Aguado}}\ and\ \bibinfo {author} {\bibfnamefont {G.}~\bibnamefont {Vidal}},\
  }\bibfield  {title} {\enquote {\bibinfo {title} {Entanglement renormalization
  and topological order},}\ }\href {\doibase 10.1103/PhysRevLett.100.070404}
  {\bibfield  {journal} {\bibinfo  {journal} {Phys. Rev. Lett.}\ }\textbf
  {\bibinfo {volume} {100}},\ \bibinfo {pages} {070404} (\bibinfo {year}
  {2008})}\BibitemShut {NoStop}%
\bibitem [{\citenamefont {Nielsen}\ and\ \citenamefont
  {Chuang}()}]{nielsen_quantum_2010}%
  \BibitemOpen
  \bibfield  {author} {\bibinfo {author} {\bibfnamefont {M.~A.}\ \bibnamefont
  {Nielsen}}\ and\ \bibinfo {author} {\bibfnamefont {I.~L.}\ \bibnamefont
  {Chuang}},\ }\href {\doibase 10.1017/CBO9780511976667} {\emph {\bibinfo
  {title} {Quantum Computation and Quantum Information: 10th Anniversary
  Edition}}}\BibitemShut {NoStop}%
\bibitem [{\citenamefont {Breuckmann}\ \emph {et~al.}(2016)\citenamefont
  {Breuckmann}, \citenamefont {Duivenvoorden}, \citenamefont {Michels},\ and\
  \citenamefont {Terhal}}]{breuckmann_local_2016}%
  \BibitemOpen
  \bibfield  {author} {\bibinfo {author} {\bibfnamefont {N.~P.}\ \bibnamefont
  {Breuckmann}}, \bibinfo {author} {\bibfnamefont {K.}~\bibnamefont
  {Duivenvoorden}}, \bibinfo {author} {\bibfnamefont {D.}~\bibnamefont
  {Michels}}, \ and\ \bibinfo {author} {\bibfnamefont {B.~M.}\ \bibnamefont
  {Terhal}},\ }\bibfield  {title} {\enquote {\bibinfo {title} {Local decoders
  for the 2d and 4d toric code},}\ }\href {\doibase 10.48550/ARXIV.1609.00510}
  {\  (\bibinfo {year} {2016}),\ 10.48550/ARXIV.1609.00510}\BibitemShut
  {NoStop}%
\bibitem [{\citenamefont {Wootton}(2012)}]{wootton_witness_2012}%
  \BibitemOpen
  \bibfield  {author} {\bibinfo {author} {\bibfnamefont {J.~R.}\ \bibnamefont
  {Wootton}},\ }\bibfield  {title} {\enquote {\bibinfo {title} {A witness for
  topological order and stable quantum memories in abelian anyonic systems},}\
  }\href {\doibase 10.1088/1751-8113/45/39/395301} {\bibfield  {journal}
  {\bibinfo  {journal} {Journal of Physics A: Mathematical and Theoretical}\
  }\textbf {\bibinfo {volume} {45}},\ \bibinfo {pages} {395301} (\bibinfo
  {year} {2012})}\BibitemShut {NoStop}%
\bibitem [{\citenamefont {Giudici}\ \emph {et~al.}(2022)\citenamefont
  {Giudici}, \citenamefont {Lukin},\ and\ \citenamefont
  {Pichler}}]{giudici_dynamical_2022}%
  \BibitemOpen
  \bibfield  {author} {\bibinfo {author} {\bibfnamefont {G.}~\bibnamefont
  {Giudici}}, \bibinfo {author} {\bibfnamefont {M.~D.}\ \bibnamefont {Lukin}},
  \ and\ \bibinfo {author} {\bibfnamefont {H.}~\bibnamefont {Pichler}},\ }\href
  {\doibase 10.48550/ARXIV.2201.04034} {\enquote {\bibinfo {title} {Dynamical
  preparation of quantum spin liquids in {R}ydberg atom arrays},}\ } (\bibinfo
  {year} {2022})\BibitemShut {NoStop}%
\bibitem [{\citenamefont {Cheng}\ \emph {et~al.}(2021)\citenamefont {Cheng},
  \citenamefont {Li},\ and\ \citenamefont {Zhai}}]{cheng_variational_2022}%
  \BibitemOpen
  \bibfield  {author} {\bibinfo {author} {\bibfnamefont {Y.}~\bibnamefont
  {Cheng}}, \bibinfo {author} {\bibfnamefont {C.}~\bibnamefont {Li}}, \ and\
  \bibinfo {author} {\bibfnamefont {H.}~\bibnamefont {Zhai}},\ }\href {\doibase
  10.48550/ARXIV.2112.13688} {\enquote {\bibinfo {title} {Variational approach
  to quantum spin liquid in a {R}ydberg atom simulator},}\ } (\bibinfo {year}
  {2021})\BibitemShut {NoStop}%
\end{thebibliography}%


\end{document}